\newtheorem{theorem}{Theorem}
\newtheorem{lemma}{Lemma}
\newtheorem{corollary}{Corollary}
\newtheorem{definition}{Definition}
\let\Oldsection\section
\renewcommand{\section}{\FloatBarrier\Oldsection}
\let\Oldsubsection\subsection
\renewcommand{\subsection}{\FloatBarrier\Oldsubsection}
\let\Oldsubsubsection\subsubsection
\renewcommand{\subsubsection}{\FloatBarrier\Oldsubsubsection}
\DeclareMathAlphabet{\mathpzc}{OT1}{pzc}{m}{it}
\DeclareMathOperator{\poly}{poly}
\DeclareMathOperator{\sign}{sign}
\DeclareMathOperator{\asvec}{vec}
\newcommand{\eps}{\varepsilon}
\newcommand{\R}{\mathbb{R}}
\newcommand{\Z}{\mathbb{Z}}
\newcommand{\abs}[1]{\left\lvert #1\right\rvert}
\newcommand{\norm}[2]{\left\| #1 \right\|_{#2}}
\newcommand{\Prp}[1]{\Pr\!\left[#1 \right]}
\newcommand{\Prpcond}[2]{\Pr\!\left[#1 \; \middle| \; #2 \right]}
\newcommand{\Ep}[1]{\text{E}\!\left[#1 \right]}
\newcommand{\Epcond}[2]{\text{E}\!\left[#1 \; \middle| \; #2 \right]}
\renewcommand{\(}{\left(}
\renewcommand{\)}{\right)}
\newcommand{\sjlmp}[2]{$(#1, #2)$-\hyperref[defn:strong-jl-moment]{Strong JL Moment Property}}
\author{
Thomas D. Ahle\\ \small ITU\\ \small \texttt{thdy@itu.dk}
\and
Jakob B. T. Knudsen\\ \small University of Copenhagen\\ \small \texttt{jakn@di.ku.dk}
}
\begin{document}


\title{
   Almost Optimal Tensor Sketch
}
\maketitle

\begin{abstract}
We construct a matrix $M\in\R^{m\otimes d^c}$ with just $m=O(c\,\lambda\,\eps^{-2}\poly\log1/\eps\delta)$ rows, which preserves the norm $\|Mx\|_2=(1\pm\eps)\|x\|_2$ of all $x$ in any given $\lambda$ dimensional subspace of $\R^d$ with probability at least $1-\delta$.
This matrix can be applied to tensors $x^{(1)}\otimes\dots\otimes x^{(c)}\in\R^{d^c}$ in $O(c\, m \min\{d,m\})$ time -- hence the name ``Tensor Sketch''.
(Here $x\otimes y = \asvec(xy^T) = [x_1y_1, x_1y_2,\dots,x_1y_m,x_2y_1,\dots,x_ny_m]\in\R^{nm}$.)

This improves upon earlier Tensor Sketch constructions by Pagh and Pham~[TOCT 2013, SIGKDD 2013] and Avron et al.~[NIPS 2014] which require $m=\Omega(3^c\lambda^2\delta^{-1})$ rows for the same guarantees.
The factors of $\lambda$, $\eps^{-2}$ and $\log1/\delta$ can all be shown to be necessary making our sketch optimal up to log factors.

With another construction we get $\lambda$ times more rows $m=\tilde O(c\,\lambda^2\,\eps^{-2}(\log1/\delta)^3)$, but the matrix can be applied to any vector $x^{(1)}\otimes\dots\otimes x^{(c)}\in\R^{d^c}$ in just $\tilde O(c\, (d+m))$ time.
This matches the application time of Tensor Sketch while still improving the exponential dependencies in $c$ and $\log1/\delta$.

By reductions in Avron et al. this gives new state of the art algorithms for kernel methods in algorithms such as linear regression and PCA.

   Technically, we show two main lemmas:
   (1) For many Johnson Lindenstrauss (JL) constructions, if $Q,Q'\in R^{m\times d}$ are independent JL matrices, the element-wise product $Qx \circ Q'y$ equals $M(x\otimes y)$ for some $M\in\R^{m\times d^2}$ which is itself a JL matrix.
   (2) If $M^{(i)}\in\R^{m\times md}$ are independent JL matrices, then $M^{(1)}(x \otimes (M^{(2)}y \otimes \dots)) = M(x\otimes y\otimes \dots)$
   for some $M\in\R^{m\times d^c}$ which is itself a JL matrix.
   Combining these two results give an efficient sketch for tensors of any size.

   To analyze the two constructions, we give new concentration for products of many JL matrices, as well as a higher order version of Khintchine's inequality, related to the higher order Gaussian chaos analysis by Latała~[Annals of Probability 2006].
\end{abstract}

\newpage
\tableofcontents
\newpage


\section{Introduction}

A classical generalization of linear models in statistics are so called ``bilinear models''~\cite{lin2015bilinear}.
Instead of training a model on a vector $x = \langle x_1, x_2, \dots, x_d\rangle$ we do it on $x\otimes x = \langle x_1x_1, x_1x_2, \dots x_1x_d, \dots, x_dx_d\rangle$.
This allows learning relationships that depend non-linearly on the input.
The method extends to higher orders, e.g. $x\otimes x\otimes x = \langle x_1x_1x_1, \dots, x_dx_dx_d\rangle$, but it is clear that the produced vectors will quickly get too large to handle.
A natural question is whether we can compute a vector that behaves like $x\otimes x\otimes x$
, but has much smaller dimension?

Feature hashing~\cite{DBLP:conf/icml/WeinbergerDLSA09} and the Johnson Lindenstrauss transformation~\cite{johnson1984extensions} are the traditional solutions to this problem.
To get from dimension $d^3$, say, to some much smaller $m$, we apply a matrix $M\in\R^{m\times d^3}$ where $\langle M(x\otimes x\otimes x), y\rangle \approx \langle (x\otimes x\otimes x), y\rangle$ for any vector $y\in\R^m$.
The classic approaches to dimensionality reduction are however too slow, since they require the tensor product to be computed exact before they are applied.
A shortcut called ``Tensor Sketch'' was introduced in \cite{pham2013fast} which allows $M(x\otimes x\otimes x)$ to be computed with roughly $O(d+m)$ time and space.
Random weights applied to tensor products was tested in~\cite{gao2016compact} in combination with convolutional neural networks for classification.
The success follows a recent neural network trend in which a layer with random weights often works just as well as one trained for the task at hand~\cite{guest2017success}.
The issue with the approach of~\cite{pham2013fast} and~\cite{gao2016compact} however is that it grows exponentially in the number of tensorings, thus $x\otimes\dots\otimes x$ ($c$ times) takes time $\Omega(3^c)$ to compute with their approach.

A more rigorous approach was taken by Avron et al.~\cite{DBLP:conf/nips/AvronNW14} and Woodruff~\cite{DBLP:journals/fttcs/Woodruff14}.
They proved that a Tensor Sketch with sufficiently many rows is an oblivious ``subspace embedding'' and used this to give state of the art algorithms for problems such as kernel linear regression and kernel PCA.
For example, if we want to take a simple regression model like $\arg\min_{w\in\R^d} \|Xw-y\|_2$, where $X\in\R^{n\times d}$ and $y\in\R^d$, and turn it into a bilinear model, we make a new matrix $X'\in \R^{n\times d^2}$ such that $X'_{i} = X_i\otimes X_i=\langle X_{i,1}X_{i,1},  X_{i,1}X_{i,2},\dots, X_{i,n}X_{i,n}\rangle$, and solve $\arg\min_{w\in\R^{d^2}} \|X'w-y\|_2$.
To speed this up, we instead solve $\arg\min_{w\in\R^{m}} \|(M^T X')w-y\|_2$ for a sketch matrix $M$.
It can be shown that the optimal $w$ will always be in the subspace spanned by the rows of $X'$, so taking $M$ with sufficiently many rows that it preserves all vectors within any subspace of dimension $n$, the algorithm succeeds with error $\pm\eps$.
Taking advantage of the rule $\langle x\otimes x, y\otimes y\rangle = \langle x,y\rangle^2$ we can extend this method to any polynomial kernel function.
For any polynomial $\phi:\R^d\times\R^d\to\R^k$ there are functions $f,g\in\R^d\to\R^m$ such that $\langle f(x),g(y)\rangle = \phi(x,y)$ and using the above method $k$ can be taken to be $O(3^c \eps^{-2} n^2)$ where $c$ is the degree of $\phi$.
The problem with this method is that some classical kernel functions, such as the Gaussian radial bias function requires $c$ super constant to observe a polynomial approximation.
The quadratic dependency on the subspace dimension is also not ideal.
In this work we show how to get rid of both of these problems.

In some applications of kernel functions, the algorithms at hand only depend on the inner product between vectors and the ``kernel trick'' saves having to produce explicit representations $f(x)$ and $g(y)$.
However calculating the kernel function $\phi$ quickly becomes a bottleneck, in particular since the oracle-only nature of the kernel trick tends to require that it is computed for all pairs of data points.
There are other methods than tensor sketch for explicitly sketching vectors for kernels, as we review in our Related Work section, however since they are not linear transformations, they are not known to have the subspace embedding mentioned above, which makes it very hard to get guarantees of correctness.
Also note that while the original Tensor Sketch algorithm requires $\Omega(\delta^{-1})$ rows to get success probability $1-\delta$, one can instead take the median of $\log1/\delta$ sketches that each succeed with constant probability.
However taking the median of multiple sketches is also not a linear operation, and thus for the purpose of subspace embeddings all Tensor Sketch constructions previous to this paper have a huge blow up in the number of rows required to get just $1-1/n$ success probability.

\subsection{Technical Overview}

Say we want a matrix $M$ such that $\norm{M(x\otimes y)}2=(1\pm\eps)\norm{x\otimes y}2$
for any $x,y\in\R^d$, and we're willing to spend time most $\tilde O(d)$.
If $x,y$ where binary $\in\{-1,1\}^d$ then sampling elements from $x\otimes y$ would have good concentration and so $M$ could simply be a sampling matrix.
Generalizing this for $x,y\in\R^d$ we may limit the maximum value by first applying fast random random rotations $R,R'\in\R^{d\times d}$ and then sampling from $Rx\otimes R'y$.
By symmetry we might as well sample the diagonal $=Rx\circ R'y=[(Rx)_1(R'y)_1,(Rx)_2(R'y)_2,\dots,(Rx)_m(R'y)_m]^T$.
This has reasonably good concentration by e.g. Bernstein bounds.

What else can be done?
The Tensor Sketch of Pham and Pagh~\cite{pham2013fast} computes $C^{(1)}x \,\ast\, C^{(2)}y$,
where $C^{(1)}$ and $C^{(2)}$ are independent Count Sketch matrices and $\ast$ is vector convolution.
They show that, amazingly, this equals $C(x\otimes y)$ --- a count sketch of the tensor product!
This method, surprisingly, turns out the be equivalent to our ``first approach'', as
$C^{(1)}x \,\ast\, C^{(2)}y= \mathcal F^{-1}(\mathcal F C^{(1)} x \,\circ\, \mathcal F C^{(2)} y)$, where $\mathcal F$ is the Fourier transform.
Since $\mathcal F$ is an orthonormal matrix, $\mathcal F^{-1}$ doesn't impact the norm of $Cx$ and may be ignored.
What's left is that Tensor Sketch simply rotates each vector with a matrix $\mathcal F C^{(i)}$, similar to the Fast Johnson Lindenstrauss Transform by Ailon et al.~\cite{ailon2006approximate}, and takes the element wise product of the resulting vectors.

Given two matrices $M^{(1)}$ and $M^{(2)}$, the matrix $M$ such that $M(x\otimes y) = M^{(1)}x \circ M^{(2)}y$ is the row-wise tensor product where $M_{i} = M^{(1)}_i \otimes M^{(2)}_i$ which we write as $M=M^{(1)}\bullet M^{(2)}$.
From the above examples it seems like, given any good random rotations $M^{(1)}, M^{(2)}, \dots$ we can simply construct $M = M^{(1)}\bullet M^{(2)}\bullet\dots$ and get a good, fast Tensor Sketch.
Up to a few issues about independence of the rows of the $M^{(i)}$s we show that this is indeed the case!

By using known random rotations with strong probabilistic guarantees, we show that it suffices to have roughly $(\log1/\delta)^c$ rows for a $c$th-order Tensor Sketch.
This compares to the second-moment analysis of the Count Sketch based approach, which required $3^c\delta^{-1}$ rows.
For small $c$ this is an exponential improvement in $\log1/\delta$.
We would however also like to get rid of the exponential dependence in $c$.
Unfortunately we show a lower bound implying that this dependence is needed for any construction on the form above.

The second idea of this paper is to reduce this dependency to be linear in $c$, by a recursive construction as follows:
After computing a pair-wise sketch $Mx\circ M'y$ we map it back down to a more manageable size, such that $Mx \circ M' (M'' y \circ M''' z)$ becomes the sketch of a third order tensor etc.
By union bounding the error over $c$ such steps one easily gets down to a $c^2$ dependency in the number of rows.

Getting the $c$ dependence down from quadratic to linear takes a much more detailed analysis, which is the main focus of the second section.
A key part is the introduction of the following strengthening of the Johnson Lindenstrauss property for random matrices:
\begin{definition}[$(\eps,\delta)$-Strong JL Moment Property]\label{defn:strong-jl-moment}
   Let $\eps,\delta\in[0,1]$.
   We say a distribution over random matrices $M\in \R^{m\times d}$ has the $(\eps,\delta)$-Strong JL Moment Property, when
   $\Ep{\norm{Mx}2^2}=1$ and
   $\left(\Ep{\left(\norm{Mx}2^2-1\right)^p}\right)^{1/p} \le \frac{\eps}e\sqrt{\frac{p}{\log1/\delta}}$,
   for all unit vectors $x\in\R^d$ and all $p$ such that $2 \le p\le \log1/\delta$.
\end{definition}
The beginning of \cref{sec:techniques} shows more properties of the JL properties and how they relate.



\subsection{Theorems}

Our most general theorem concerns the recursive combination of matrices for higher order tensor sketching.
In stating it we assume that the output dimension, $m$ is between the vector dimension, $d$ and the tensored dimension, $d^c$.
If $d$ is larger than $m$, one may always start by reducing down to $m$ w.l.o.g.

\begin{theorem}[General Sketch]\label{thm:gen-sketch}
   Assume a distribution over matrices $M\in \R^{m\times dm}$ with the \sjlmp{\eps/\sqrt{c}}{\delta}, and where for any $x\in\R^m, y\in\R^d$, $M$ can be applied to $x\otimes y\in\R^{dm}$ in time $T$.
   Then there is a distribution over matrices $M'\in \R^{m\times d^c}$
   such that
   \begin{enumerate}
      \item $M'$ can be applied to vectors on the form $x^{(1)}\otimes\dots\otimes x^{(c)}\in \R^{d^c}$ in time $O(c\, T)$.
      \item $M'$ has the \sjlmp{\eps/\sqrt{c}}{\delta} over $\R^{d^c}$.
   \end{enumerate}
\end{theorem}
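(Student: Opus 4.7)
The plan is to build $M'$ as the recursive composition
\[
M'\br{x^{(1)}\otimes\cdots\otimes x^{(c)}} \;=\; M^{(1)}\br{x^{(1)}\otimes M^{(2)}\br{x^{(2)}\otimes\cdots\otimes M^{(c-1)}\br{x^{(c-1)}\otimes x^{(c)}}\cdots}}
\]
using $c-1$ independent copies $M^{(1)},\dots,M^{(c-1)}$ of the base distribution. The innermost vector $x^{(c-1)}\otimes x^{(c)}\in\R^{d^2}$ is embedded into $\R^{dm}$ by zero-padding---legal since $m\ge d$---and every subsequent input lives in $\R^{d\cdot m}=\R^{dm}$ by construction, matching the input dimension of each $M^{(k)}$. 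Since every step is linear, the composition defines a single matrix $M'\in\R^{m\times d^c}$, and evaluating it on a simple tensor costs one call to $M$ per level, giving running time $O(cT)$.

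For the \sjlmp{\eps/\sqrt{c}}{\delta} of $M'$, set $z^{(c)}:=x^{(c)}$ and $z^{(k)}:=M^{(k)}\br{x^{(k)}\otimes z^{(k+1)}}$, so that $M'v=z^{(1)}$. Conditioning on $z^{(k+1)}$ freezes the randomness of every deeper layer, leaving only that of $M^{(k)}$; applying the hypothesis to the unit vector $\br{x^{(k)}\otimes z^{(k+1)}}/\norm{x^{(k)}\otimes z^{(k+1)}}{2}$ shows that
\[
s_k \;:=\; \frac{\norm{z^{(k)}}{2}^2}{\norm{x^{(k)}}{2}^2\,\norm{z^{(k+1)}}{2}^2} - 1
\]
is conditionally mean-zero with $\Epcond{\abs{s_k}^p}{z^{(k+1)}}^{1/p}\le \tfrac{\eps/\sqrt c}{e}\sqrt{p/\log(1/\delta)}$ for every $2\le p\le\log(1/\delta)$. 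Telescoping the multiplicative identity $\norm{z^{(k)}}{2}^2 = (1+s_k)\norm{x^{(k)}}{2}^2\norm{z^{(k+1)}}{2}^2$ from $k=c-1$ down to $k=1$ collapses the error to
\[
\norm{M'v}{2}^2 - 1 \;=\; \prod_{k=1}^{c-1}(1+s_k) - 1
\]
for any unit tensor $v=x^{(1)}\otimes\cdots\otimes x^{(c)}$.

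What remains---and what I expect to be the hard part---is a tight $p$-th moment bound on $\prod_{k=1}^{c-1}(1+s_k)-1$. A naive Minkowski expansion over its $2^{c-1}-1$ monomials blows up exponentially in $c$, and a straight triangle-inequality applied to the telescoping differences $P_j-P_{j-1}=P_{j-1}s_j$ (with $P_j:=\prod_{k\le j}(1+s_k)$) still loses a linear factor of $c$. The plan is therefore to treat $P_j-1$ as a martingale with conditionally mean-zero increments $P_{j-1}s_j$ and to apply a martingale moment inequality in the spirit of the higher-order version of Khintchine's inequality advertised in the abstract and of Latała's analysis of Gaussian chaoses, yielding
\[
\Ep{\abs{P_{c-1}-1}^p}^{1/p} \;\lesssim\; \Ep{\br{\sum_{j=1}^{c-1} P_{j-1}^2\,\Epcond{s_j^2}{\text{past}}}^{p/2}}^{1/p}
\]
up to absolute constants. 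Because each conditional variance is at most $\br{\eps/\sqrt c}^2$ times a factor $O(1/\log(1/\delta))$, summing over the $c-1$ levels pulls out only a $\sqrt c$, which combines with the $1/\sqrt c$ already baked into each $s_k$'s bound to leave the target parameter $\eps/\sqrt c$ intact. The final subtlety is controlling the prefactor $P_{j-1}$; I would close this by induction on $j$, using the moment bound just established for levels $<j$ together with Markov's inequality to restrict to the event $\abs{P_{j-1}-1}\le 1/2$, a failure of which is charged against the $\delta$ budget.
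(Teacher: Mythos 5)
Your construction and overall strategy are the same as the paper's: build $M'$ by recursive composition, observe that the squared norms $X_k=\norm{M^{(k)}\cdots M^{(1)}x}{2}^2$ form a martingale (equivalently, your $P_j-1$), and control the telescoping sum by an inductive moment bound that charges each step with the base SJLMP and gains a $\sqrt{c}$ by summing variances rather than deviations. The paper phrases the decomposition as $M'=M^{(c)}\cdot(M^{(c-1)}\otimes I_d)\cdots(M^{(1)}\otimes I_{d^{c-1}})$ and first uses a direct-sum lemma to show that $M^{(i)}\otimes I$ inherits the SJLMP, but this is cosmetic; the real work is the product lemma, which is where your sketch also lands.

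The genuine gap in your proposal is precisely the step you flag as ``the hard part'': you assert, but do not prove, a martingale $p$-th-moment inequality, and the version you write down is not the right one. As stated, $\Ep{\abs{P_{c-1}-1}^p}^{1/p}\lesssim\Ep{\big(\sum_j P_{j-1}^2\Epcond{s_j^2}{\text{past}}\big)^{p/2}}^{1/p}$ is missing the $\sqrt{p}$ factor that any Burkholder/Rosenthal-type inequality carries, and once you trace your constants through, your bound comes out as $\lesssim\eps/\sqrt{\log(1/\delta)}$ with no $\sqrt{p}$ on the right-hand side, which is not the required form $\tfrac{\eps}{e}\sqrt{p/\log(1/\delta)}$ and in fact fails it for small $p$. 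The reason is that you replaced the full conditional sub-Gaussian information $\norm{s_j\mid\text{past}}{q}\lesssim\sigma_j\sqrt{q}$ (valid for all $q\le\log(1/\delta)$) with just the conditional variance $\Epcond{s_j^2}{\text{past}}$; the variance alone cannot recover sub-Gaussian tails. The paper closes this gap by (i) decoupling the martingale increments via de la Pe\~na's general decoupling (\cref{lem:gen_decoup}) into conditionally independent $Z_j$'s, then (ii) applying Khintchine's inequality in its sub-Gaussian form to the conditionally independent $Z_j$'s, which is where the $\sqrt{p}$ reappears multiplied against $\sqrt{\sum X_{j-1}^2}$, and (iii) bounding $\norm{X_{j-1}}{p}\le 2$ by the induction hypothesis, entirely inside the $L_p$ framework. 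Your final idea of restricting to the event $\abs{P_{j-1}-1}\le 1/2$ via Markov and paying from the $\delta$ budget would also work in principle, but it mixes tail bounds with moment bounds mid-proof, and as written it would need care so that the moment-norm inequality you want to apply still holds after conditioning on that event; the paper's purely $L_p$ argument avoids this complication. In short: same proof skeleton, but the load-bearing inequality is absent in your write-up, and the variant you propose is quantitatively too weak.
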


The theorem also gives results related to Oblivious Subspace Embeddings using \cref{lemma:ose} and \cref{lemma:ose2}.
The idea is to analyse a matrix with the property
$
   M(x\otimes y \otimes z \dots)
   =
   M^{(1)}(x \otimes M^{(2)}(y\otimes M^{(3)}(z \otimes \dots)))
$.
It turns out that it suffices to show that the Strong JL Moment property is preserved by matrix direct product and multiplication.

Combined with a theorem of Kraemer et al.~\cite{krahmer2011new} one may the ``Fast JL'' matrix of~\cite{johnson1984extensions} in \cref{thm:gen-sketch} to get the two of the results in \cref{tab:results}.
However the direct applications of the theorem doesn't use that we know how to efficiently sketch order 2 tensors.
While the number of rows are near optimal, the application time suffers.

To fix this we give two strong families of sketches for low order tensors.
For generality we prove them for general $c$-order tensor products $x^{(1)}\otimes\dots\otimes x^{(c)}$, but when used with \cref{thm:gen-sketch} we will just take $c=2$.
{
   \renewcommand{\thetheorem}{\ref{thm:indeprows}}
   \begin{theorem}
   Let $\eps,\delta\in[0,1]$ and let $c\ge 1$ be some integer.
   Let $T\in\R^{m\times d}$ be a matrix with iid. rows $T_1, \dots, T_m \in \R^d$ such that $\Ep{(T_1x)^2}=\|x\|_2^2$ and $\norm{T_1x}p \le \sqrt{a p} \|x\|_2$ for some $a>0$ and all $p\ge 2$.
   Let $M = T^{(1)}\bullet\dots\bullet T^{(c)}$ where $T^{(1)}, \dots, T^{(c)}$ are independent copies of $T$.
   Then there exists a constant $K>0$ such that $M$ has the \sjlmp{\eps}{\delta} given
   \begin{align}
      m \ge K \left[(4a)^{2c} \eps^{-2} \log1/\delta
      + (2ae)^{c} \eps^{-1} (\log1/\delta)^{c}\right]
      .
   \end{align}
\end{theorem}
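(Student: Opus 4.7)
The Strong JL Moment Property has two parts: (i) $\Ep{\norm{Mx}2^2}=1$ for unit $x$, and (ii) the centered-moment bound. Part (i) is immediate once one normalizes the sketch by $1/\sqrt{m}$: each row satisfies $\Ep{(M_1x)^2}=\sum_{k,k'} x_k x_{k'} \prod_j \Ep{T^{(j)}_{1,k_j} T^{(j)}_{1,k'_j}} = \|x\|_2^2$, using that the hypothesis $\Ep{(T_1 u)^2}=\|u\|_2^2$ forces the coordinate orthogonality $\Ep{T_{1,k}T_{1,k'}}=\delta_{k,k'}$. I focus on (ii), which I attack in two steps: a single-row $p$-norm bound, then aggregation over the $m$ independent rows.

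\textbf{Step 1: single-row higher-order Khintchine.} The target is
\begin{equation*}
   \norm{M_1 x}p \le (Cap)^{c/2}\,\|x\|_2
\end{equation*}
for some absolute constant $C$ and all $p \ge 2$. Since $M_1 x = \sum_k x_k \prod_j T^{(j)}_{1,k_j}$ is a decoupled $c$-chaos---linear in each $T^{(j)}_1$ separately---the natural proof is by induction on $c$. Conditioning on $T^{(c)}_1$ turns the form into a $(c-1)$-chaos in the remaining rows with coefficient vector $XT^{(c)}_1$, where $X \in \R^{d^{c-1}\times d}$ is the mode-$c$ matricization of $x$. The inductive hypothesis then yields
\begin{equation*}
   \norm{M_1 x}p \;\le\; (Cap)^{(c-1)/2} \cdot \bigl\|\,\|XT^{(c)}_1\|_2\,\bigr\|_p.
\end{equation*}
To close the induction one must bound the $L^p$-norm of $\|XT^{(c)}_1\|_2$, equivalently the $L^{p/2}$-norm of the quadratic form $T^{(c)\top}_1 X^\top X T^{(c)}_1$ in the subgaussian vector $T^{(c)}_1$. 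A Hanson--Wright-style inequality, calibrated so the key constant matches the hypothesis $\norm{T_1 u}p \le \sqrt{ap}\|u\|_2$, extracts a further factor of $\sqrt{Cap}\,\|X\|_F=\sqrt{Cap}\,\|x\|_2$ and delivers the target bound. Equivalently one can expand $\Ep{(M_1 x)^{2q}}$ directly as a partition sum, in the spirit of Lata\l{}a's analysis of higher-order Gaussian chaos.

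\textbf{Step 2: aggregation.} Set $Z_i = (M_i x)^2 - 1$, so the $Z_i$ are iid mean-zero. Step 1 at exponents $2p$ and $4$ gives
\begin{equation*}
   \norm{Z_1}p \le \norm{M_1 x}{2p}^2 + 1 \le (2Cap)^c + 1,\qquad \norm{Z_1}2 \le (4Ca)^c + 1,
\end{equation*}
so $Z_1$ is sub-Weibull of order $c$. I then invoke a Lata\l{}a-type optimal moment inequality for iid sums of sub-Weibull variables,
\begin{equation*}
   \Big\|\sum_{i=1}^m Z_i\Big\|_p \;\le\; K_1 \sqrt{pm}\,\norm{Z_1}2 \;+\; K_2\,(2Cae)^c\, p^c,
\end{equation*}
valid for $2 \le p \le \log(1/\delta)$; the critical feature is that the heavy-tail term scales as $p^c$ with no extra power of $m$. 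Dividing by $m$ and requiring the result to be at most $(\eps/e)\sqrt{p/\log(1/\delta)}$ splits into two conditions: the Gaussian term forces $m \gtrsim (4Ca)^{2c}\log(1/\delta)/\eps^2$, and the heavy-tail term (worst at $p=\log(1/\delta)$) forces $m \gtrsim (2Cae)^c(\log(1/\delta))^c/\eps$. These precisely match the two summands in the stated bound once $C$ is absorbed into the leading constant $K$.

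The main obstacle is Step 1. The conditional peel-off cleanly reduces the $c$-chaos to a $(c-1)$-chaos, but the residual $L^p$ moment of $\|XT^{(c)}_1\|_2$ is a quadratic-form problem that does not reduce mechanically to the single-row assumption; tracking the $\sqrt{ap}$ rate through the induction, either via a sharp Hanson--Wright bound or via the combinatorial chaos expansion analogous to Lata\l{}a's, is where the real analytic work lies. Once Step 1 is in hand, Step 2 and the calibration of $m$ are routine, and the agreement between the sub-Weibull order of $Z_1$ and the exponent of $\log(1/\delta)$ in the second summand of the theorem is automatic.
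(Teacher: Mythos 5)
Your two-step outline — peel one factor off the $c$-chaos and induct for the single-row $L^p$ bound, then aggregate the iid row variables $Z_i=(M_ix)^2-1$ via a Lata\l{}a-type moment inequality — is exactly the structure of the paper's proof; Step~2 matches the paper's use of Lata\l{}a's inequality (Corollary~\ref{cor:latala-cor2}) essentially verbatim. The gap is that you did not actually close Step~1, and you misjudged how hard the remaining piece is. You correctly reduce to controlling $\bigl\|\,\|XT^{(c)}_1\|_2\,\bigr\|_p$ after conditioning, but then recast this as a Hanson--Wright problem for the quadratic form $T^{(c)\top}_1X^\top X T^{(c)}_1$ and declare this to be ``where the real analytic work lies,'' without carrying it out. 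The paper (Lemma~\ref{lem:gen-khinchine}) closes this step with a much lighter tool: the $L^{p/2}$ triangle inequality applied to the sum of squares,
\begin{align}
  \bigl\|\,\|XT^{(c)}_1\|_2\,\bigr\|_p^2
  = \Bigl\|\sum_i (X_iT^{(c)}_1)^2\Bigr\|_{p/2}
  \le \sum_i \bigl\|(X_iT^{(c)}_1)^2\bigr\|_{p/2}
  = \sum_i \bigl\|X_iT^{(c)}_1\bigr\|_p^2
  \le ap\,\|X\|_F^2,
\end{align}
using only the \emph{linear} single-row hypothesis $\norm{T_1u}p\le\sqrt{ap}\|u\|_2$ row by row. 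No concentration of a quadratic form, no Hanson--Wright, and no off-diagonal/operator-norm term ever enters. Your Hanson--Wright route would eventually give the right rate (since $\|X^\top X\|_{op}\le\|X\|_F^2$), but it imports a strictly stronger inequality than the hypothesis supplies and introduces slack in the constants; more importantly, as written it is an invocation, not a proof. To make the argument complete you should replace the Hanson--Wright appeal with the three-line computation above, which is the actual content of the paper's generalized Khintchine lemma.

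Two smaller points. First, the induction needs the hypothesis to hold uniformly over (random) coefficient vectors — this is automatic once you phrase the single-row bound as $\norm{M_1 a}p\le (ap)^{c/2}\|a\|_2$ for \emph{every} fixed $a$, but your write-up should state this explicitly since the coefficient tensor $XT^{(c)}_1$ that appears after conditioning is random. Second, in Step~2 your estimate $\norm{Z_1}p\le\norm{M_1x}{2p}^2+1$ is fine but slightly wasteful; the paper uses symmetrization to get $\norm{Z_1}p\lesssim\norm{M_1x}{2p}^2$ directly, which is what lets the constant $(2ap)^c$ (rather than $(2ap)^c+1$) propagate cleanly into the Lata\l{}a bound.
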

   \addtocounter{theorem}{-1}
}
In the case where the $T^{(i)}$ are random Rademacher matrices and $c$ is constant, the bound simplifies to give that $m=O\left(\eps^{-2} \log1/\delta + \eps^{-1} (\log1/\delta)^c\right)$ rows suffices.
We will show in \cref{appendix:lowerbound-subgaussian} that this analysis is indeed tight.

While the family of \cref{thm:indeprows} takes advantage of the tensor structure, it is not fast on individual vectors, meaning we still haven't reached our goal of sketching $x^{\otimes c}\in\R^d$ in time near linear in $d$.
We do this in our final construction, which is a version of Fast JL specifically analysed on tensor products:
{
   \renewcommand{\thetheorem}{\ref{thm:tensorfastjl}}
   \begin{theorem}[Fast Construction]
      Let $T^{(1)}\in \R^{m\times d}$ be a Fast JL matrix, and let $T^{(2)}, \dots, T^{(c)}$ be independent copies,
      and define $M$ by $M_i = T^{(1)}_i \otimes \dots \otimes T^{(c)}_i$.
      Then taking one can take
      $m=O(K^c \eps^{-2}(\log1/\delta)(\log1/(\eps\delta))^c)$ for some constant $K$ to get
      \begin{enumerate}
         \item $M$ has the \sjlmp\eps\delta.
         \item $M$ can be applied to tensors $x^{(1)}\otimes\dots\otimes x^{(c)}\in \R^{d^c}$ in time $O(c\,(d\log d+m))$.
      \end{enumerate}
   \end{theorem}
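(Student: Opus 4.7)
The plan is to treat the construction as the composition, in each factor, of a randomized Hadamard rotation and a sparse sampling/hashing step. Write each Fast JL matrix as $T^{(j)} = S^{(j)} H D^{(j)}$, where $H$ is the normalized Hadamard matrix, $D^{(j)}$ is a diagonal $\pm 1$ matrix, and $S^{(j)}$ is a sparse/sampling matrix with (conditionally) independent rows. Then the $i$-th row of $M$ is $M_i = (S^{(1)}_i H D^{(1)}) \otimes \dots \otimes (S^{(c)}_i H D^{(c)})$, so for any $x\in\R^{d^c}$, setting $U = HD^{(1)} \otimes \dots \otimes HD^{(c)}$ and $y=Ux$, we have $\|Mx\|_2^2 = \sum_i |\langle S^{(1)}_i \otimes \dots \otimes S^{(c)}_i, y\rangle|^2$. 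Because $U$ is orthogonal, $\|y\|_2=\|x\|_2$, so it suffices to analyse $S = S^{(1)} \bullet \dots \bullet S^{(c)}$ applied to the random tensor $y$.

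The running time bound is immediate from this decomposition: the sketch of a rank-1 tensor $x^{(1)}\otimes\dots\otimes x^{(c)}$ equals the entry-wise product of the $T^{(j)}x^{(j)}$; each Fast JL application costs $O(d\log d + m)$ time, and the final Hadamard product of $c$ vectors of length $m$ costs $O(cm)$, giving $O(c(d\log d + m))$ in total.

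For the \sjlmp{\eps}{\delta}, the key is a conditioning argument on a flattening event $\mathcal E$ for $y = Ux$. The standard randomized Hadamard bound gives $\|HD^{(j)}v\|_\infty \le C\sqrt{\log(1/\eta)/d}\,\|v\|_2$ for any fixed $v$ with probability at least $1-\eta$. Setting $\eta$ polynomially small in $\eps\delta$ and union-bounding over all $c$ factors and the slices of $x$ that have to remain controlled, $\mathcal E$ holds with probability $1 - O((\eps\delta)^{\Omega(1)})$ and guarantees that every entry of $y$ is at most a factor $(C\log(1/\eps\delta)/d)^{c/2}$ in magnitude. Conditioned on $\mathcal E$, the situation reduces to one directly analogous to \cref{thm:indeprows}: the sparse matrices $S^{(j)}$ have independent rows with bounded moments, and applying them to the flattened tensor $y$ can be analysed via the same Khintchine/Rosenthal-style moment arithmetic that underlies \cref{thm:indeprows}, with the subgaussian constant $a$ replaced by a bound derived from flatness. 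Carrying the $p$-th moment computation out with $2\le p \le \log(1/\delta)$ yields the required bound once $m = \Omega(K^c \eps^{-2}(\log 1/\delta)(\log 1/(\eps\delta))^c)$. Finally, one handles $\mathcal E^c$ by combining a crude deterministic bound on $\|Mx\|_2^2$ with the smallness of $\Prp{\mathcal E^c}$, so that its contribution to the $p$-th moment is absorbed into the constant.

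The main obstacle is handling \emph{general} (not rank-1) unit tensors $x$ in the flattening step: for a rank-1 tensor, $y$ is flat as soon as each $HD^{(j)}x^{(j)}$ is flat, but for an arbitrary $x$ one has to formulate a flattening event that controls every slice of $y$ that the moment computation will read, without blowing up the exponent on $\log(1/\eps\delta)$ past linear in $c$. This is exactly what dictates the shape of the union bound and produces the $(\log 1/(\eps\delta))^c$ factor in $m$. A secondary technical issue is making sure the rows of the sparse $S^{(j)}$ are independent enough for the moment analysis to go through; depending on the exact Fast JL variant used, this may require either an explicit sparse Bernoulli construction or a short decoupling step.
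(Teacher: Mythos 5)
Your decomposition $T^{(j)}=S^{(j)}HD^{(j)}$ and the running-time argument are correct and match the paper's setup. But the heart of your proof --- a high-probability ``flattening event'' $\mathcal{E}$ for $y=Ux$ followed by a conditional sampling analysis --- is a genuinely different route from the paper's, and it has a gap that you half-flag yourself but do not resolve. To define $\mathcal{E}$ you need an $\ell_\infty$ bound on $y\in\R^{d^c}$; since each entry $y_I$ only satisfies $\|y_I\|_p\le C_p^c\|x\|_2$ (Generalized Khintchine, \cref{lem:gen-khinchine}), a union bound over all $d^c$ coordinates gives a flatness parameter of order $(c\log d+\log(1/\eta))^{c/2}$, so $\mathcal{E}$ inherits a $(\log d)^{c}$ factor, not the $(\log 1/(\eps\delta))^c$ factor in the claimed $m$. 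For $d$ large compared to $1/(\eps\delta)$ this is a strictly worse bound than the theorem states, so the argument as proposed does not close.

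The paper avoids this entirely by never requiring $y$ to be flat everywhere. It conditions on $D$, applies a Rosenthal-type inequality (\cref{lem:rosenthal-variant}) to $\tfrac1m\sum_{i\in[m]}Z_i^2$ for $Z_i=H_{S_i}Dx$ (conditionally independent given $D$), and then bounds the two resulting quantities directly in $p$-norm over $D$: first, $\sum_i\mathrm{E}[Z_i^2\mid D]=m$ exactly, by orthogonality of $H$; second, $\|\max_{i\in[m]}Z_i^2\|_p\le m^{1/q}q^c$ with $q=\max\{p,\log m\}$, via $\|Z_i^2\|_q\le q^c$ from \cref{lem:gen-khinchine} and the crude bound $\max\le(\sum(\cdot)^q)^{1/q}$ taken over only the $m$ \emph{sampled} coordinates. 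The union is thus over $m$ samples rather than $d^c$ coordinates, which is precisely what replaces $\log d^c$ by $\log m$. If you instead try to define $\mathcal{E}$ in terms of only the $m$ sampled coordinates, $\mathcal{E}$ is no longer independent of $S$ and your conditional decomposition breaks down --- this is exactly where the paper's moment-interleaving argument is doing the real work, and where a two-stage ``condition on flatness then sample'' analysis cannot reach the stated bound without a new idea.
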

   \addtocounter{theorem}{-1}
}

Combined with \cref{thm:gen-sketch} these two constructions give the remaining results in \cref{tab:results}.

\vspace{1em}

We mention a classical application of the above theorems:

\begin{corollary}[Polynomial Kernels]
   Let $\mathcal M$ be a distribution of matrices as in \cref{thm:gen-sketch}.
   Let $P : \R \to \R$ be a degree $c$ polynomial, then there is a linear map $M:\R^d\to\R^m$ such that $\langle Mx, My\rangle = P(\langle x,y\rangle)$.
   $M$ can be computed in time $cT$.
\end{corollary}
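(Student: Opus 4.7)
The plan is to handle $P$ monomial by monomial, using the tensor sketch of \cref{thm:gen-sketch} for each degree and concatenating the outputs, with the key equality reduced to the expectation and concentration supplied by the Strong JL Moment Property via polarisation.

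First I would write $P(t) = \sum_{i=0}^{c} a_i t^{i}$. For each $i \in \{0,1,\dots,c\}$, \cref{thm:gen-sketch} yields an independent random matrix $M_i \in \R^{m_i \times d^{i}}$ that (i) has the \sjlmp{\eps/\sqrt{c}}{\delta} on $\R^{d^i}$ and (ii) can be applied to a rank-one tensor $x^{\otimes i}$ in time $O(iT)$. I then define the linear map
\[
   M x \;=\; \bigl(\sqrt{a_0},\; \sqrt{a_1}\,M_1 x,\; \sqrt{a_2}\,M_2(x^{\otimes 2}),\; \ldots,\; \sqrt{a_c}\,M_c(x^{\otimes c})\bigr),
\]
obtaining an output of dimension $\sum_i m_i$; choosing each $m_i = m/(c+1)$ puts it in $\R^m$, and the total cost of applying $M$ is $\sum_{i=1}^{c} O(iT) = O(cT)$, as required.

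Next I would use the identity $\langle u^{\otimes i}, v^{\otimes i}\rangle = \langle u, v\rangle^{i}$ together with the polarisation formula
\[
   \langle M_i u,\, M_i v\rangle \;=\; \tfrac{1}{4}\bigl(\norm{M_i(u+v)}2^2 - \norm{M_i(u-v)}2^2\bigr).
\]
Since the Strong JL Moment Property gives $\Ep{\norm{M_i z}2^2}=\norm{z}2^2$, taking expectations yields $\Ep{\langle M_i(x^{\otimes i}),\,M_i(y^{\otimes i})\rangle} = \langle x,y\rangle^{i}$ and therefore
\[
   \Ep{\langle Mx,\, My\rangle} \;=\; \sum_{i=0}^{c} a_i \langle x,y\rangle^{i} \;=\; P(\langle x,y\rangle),
\]
which is the equality claimed in the corollary. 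For an $(1\pm\eps)$-type high-probability statement, I would apply polarisation to the moment bound in \cref{defn:strong-jl-moment} to get $\bigl(\Ep{|\langle M_i u,M_i v\rangle - \langle u,v\rangle|^{p}}\bigr)^{1/p} \lesssim \tfrac{\eps}{e}\sqrt{p/\log(1/\delta)}\,\norm{u}2\norm{v}2$, then use Markov at $p=\log(1/\delta)$ and a union bound over the $c+1$ degrees.

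The main subtlety, rather than an obstacle, is the handling of negative coefficients: if some $a_i<0$ then $\sqrt{a_i}$ is imaginary. The clean fix is to assume $P$ has non-negative coefficients (which is the usual setting for polynomial kernels, as $P$ must correspond to a positive-semidefinite kernel), or equivalently to split $P = P_{+}-P_{-}$ and carry the sign via an indefinite bilinear form on the output; both options are standard and preserve the time bound. A secondary point is interpreting the ``$=$'' in the corollary, which is most naturally read as equality in expectation with $(1\pm\eps)$ concentration inherited from \cref{thm:gen-sketch}; the proof above delivers both readings simultaneously.
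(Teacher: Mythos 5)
Your decomposition-by-monomial-and-concatenation plan is a reasonable way to get the expectation identity, but the time bound in your argument is wrong, and fixing it is precisely where the paper's actual argument (which you did not replicate) is needed. You apply an independent sketch $M_i$ to $x^{\otimes i}$ for each $i\in\{1,\dots,c\}$, at cost $O(iT)$ each; the total is $\sum_{i=1}^{c} O(iT) = O(c^2 T)$, not $O(cT)$ as you wrote. The arithmetic slip is in the last step of that sum.

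The paper closes this gap with Horner's rule: it sketches the single vector
$a_0\oplus a_1 x \oplus a_2 x^{\otimes 2}\oplus\dots\oplus a_c x^{\otimes c} = a_0\oplus x\otimes\bigl(a_1\oplus x\otimes(a_2\oplus\dots)\bigr)$
by unwinding from the inside out and applying one matrix from the recursive construction of \cref{thm:gen-sketch} at each level. The crucial point is that the degree-$i$ contribution is produced \emph{incrementally} from the degree-$(i-1)$ partial result, so each of the $c$ levels costs a single $T$ and the total is $O(cT)$. Your construction forfeits this reuse by drawing the $M_i$ independently for each degree, which forces you to recompute the low-order structure $c$ times. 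If you instead take the $M_i$ to be the partial products $Q^{(i)}$ arising inside the recursive construction and fold in the $a_i$ Horner-style, you recover the stated time. A secondary note: the identity $\langle u^{\otimes i}, v^{\otimes i}\rangle = \langle u,v\rangle^i$, polarisation, and the JL-moment bound for concentration are all fine as you used them, and the sign issue with $\sqrt{a_i}$ that you flag is present under either approach; the paper handles it by allowing two asymmetric maps $\phi,\psi$ when needed.
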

This follows simply from the rule $\langle x^{\otimes k}, y^{\otimes k}\rangle = \langle x,y\rangle^k$ combined with Horner's rule $a_0\oplus a_1 x \oplus a_2 x^{\otimes 2} \oplus \dots = a_0 \oplus x \otimes (a_1 \oplus x\otimes(a_2 \dots$.
One can extend this technique for general symmetric polynomials $P:\R^{2d}\to\R$
such that $\langle \phi(x), \psi(y)\rangle = P(x,y)$.

\subsection{Related work}

Work related to sketching of tensors and explicit kernel embeddings is found in fields ranging from pure mathematics to physics and machine learning.
Hence we only try to compare ourselves with the four most common types we have found.

We focus particularly on the work on subspace embeddings~\cite{pham2013fast, DBLP:conf/nips/AvronNW14}, since it is most directly comparable to ours.
An extra entry in this category is~\cite{shdpk}, which is currently in review, and which we were made aware of while writing this paper.
That work is in double blind review, but by the time of the final version of this paper, we should be able to cite it properly.

\paragraph*{Subspace embeddings}

For most applications~\cite{DBLP:conf/nips/AvronNW14}, the subspace dimension, $\lambda$, will be much larger than the input dimension, $d$, but smaller than the implicit dimension $d^c$.
Hence the size of the sketch, $m$, will also be assumed to satisfy $d \ll m \ll d^c$ for the purposes of stating the results.
We will hide constant factors, and $\log1/\epsilon$, $\log d$, $\log m$, $\log c$, $\log\lambda$ factors.

Note that we can always go from $m$ down to $\approx\epsilon^{-2}(\lambda+\log1/\delta)$ by applying an independent JL transformation after embedding.
This works because the product of two subspace embeddings is also a subspace embedding\footnote{
Let $S$ and $T$ both preserve subspaces of dimension $\lambda$, then given some such subspace, $T$ maps it to some other subspace of dimension at most $\lambda$ which is then preserved by $S$.},
and because standard JL is a subspace embedding by the net-argument (see lemma~\ref{lemma:ose2}).
The embedding dimensions in the table should thus mainly be seen as time and space dependencies, rather than the actual embedding dimension for applications.

\begin{table}[!htb]
\begin{center}
   \begin{tabular}{ l | l | l}
      Reference & Embedding dimension, $m$ & Embedding time
      \\ \hline
      \cite{pham2013fast, DBLP:conf/nips/AvronNW14}
                & $\tilde O(3^c\, \lambda^2\, \delta^{-1}\, \epsilon^{-2})$
                & $\tilde O(c\, (d + m))$
      \\ \hline
      Theorem~\ref{thm:gen-sketch}
                & $\tilde O(c\, \lambda^2\, (\log1/\delta)^3\, \epsilon^{-2})$
                & $\tilde O(c\, (d+m))$
      \\
      Theorem~\ref{thm:gen-sketch}
                & $\tilde O(c\, (\lambda + \log1/\delta) \eps^{-2} + c (\lambda + \log1/\delta)^2 \eps^{-1})$
                & $\tilde O(c\, d\, m)$
      \\
      Theorem~\ref{thm:gen-sketch}
                & $\tilde O(c\, \lambda^2 (\log1/\delta) \eps^{-2} + c \lambda (\log1/\delta)^2 \eps^{-1})$
                & $\tilde O(c\, d\, m)$
      \\ \hline
      Theorem~\ref{thm:gen-sketch} + \cite{johnson1984extensions}
                & $\tilde O(c\, (\lambda + \log 1/\delta) \eps^{-2})$
                & $\tilde O(c\, d\, m^2)$
      \\
      Theorem~\ref{thm:gen-sketch} + \cite{krahmer2011new}
                & $\tilde O(c\, \lambda\, (\log1/\delta)^{O(1)} \epsilon^{-2})$
                & $\tilde O(c\, d\,  m)$
      \\ \hline
      \cite{shdpk} $(\ast)$ Theorem 1
                & $\tilde O(c\, \lambda^2\, \delta^{-1}\, \epsilon^{-2})$
                & $\tilde O(c\,(d + m))$
      \\
      \cite{shdpk} $(\ast)$ Theorem 2
                & $\tilde O(c^6\, \lambda\, (\log1/\delta)^5\, \epsilon^{-2})$
                & $\tilde O(c\,(d + m))$
      \\ \hline
      \cite{sarlos_neurips} $(\ast), (\ast\ast)$ Theorem 2.1
                & $\tilde O(\eps^{-2}\log1/\eps\delta + \eps^{-1}(\log1/\eps\delta)^c)$
                & $\tilde O(c\,d\,m)$
      \\ \hline
   \end{tabular}
\end{center}
\caption{Comparison of embedding dimension and time for some of the results in the article and previous/concurrent work.}
\label{tab:results}
\end{table}
A few notes about the contents of \cref{tab:results}:
\begin{itemize}
   \item The results with $(\ast)$ are from unpublished manuscript communicated to us by the authors.
   \item Some of the results, in particular \cite{pham2013fast, DBLP:conf/nips/AvronNW14}, \cite{shdpk}~Theorem 1 and \cite{sarlos_neurips}~Theorem 2.1 can be applied faster when the input is sparse.
         Our results, as well as~\cite{shdpk}, Theorem 2 can similarly be optimized for sparse inputs, by preprocessing vectors with an implementation of Sparse JL~\cite{cohen2018simple}.
   \item The result $(\ast\ast)$ assumes $c=O(1)$ which is why it appears to have fewer dependencies on $c$.
\end{itemize}
In comparison to the previous result~\cite{pham2013fast, DBLP:conf/nips/AvronNW14} we are clearly better with an exponential improvement in $c$ as well as $\delta$.
Compared to the new work of~\cite{shdpk}, all four bounds have some region of superiority.
Their first bound of has the best dependency on $c$, but has an exponential dependency on $\log1/\delta$.
Their second bound has an only linear dependency on $d+\lambda$, but has large polynomial dependencies on $c$ and $\log1/\delta$.

Technically the methods of all five bounds are similar, but some details and much of the analysis differ.
Our results as well as the results of~\cite{shdpk} use recursive constructions to avoid exponential dependency on $c$, however the shape of the recursion differs.
We show all of our results using the $p$-moment method, while~\cite{shdpk} Theorem 1 and~\cite{pham2013fast, DBLP:conf/nips/AvronNW14} are shown using 2nd-moment analysis.
This explains much of why their dependency on $\delta$ is worse.

\paragraph*{Approximate Kernel Expansions}
A classic result by Rahimi and Rect~\cite{rahimi2008random}
shows how to compute an embedding for any shift-invariant kernel function $k(\|x-y\|_2)$ in time $O(dm)$.
In~\cite{DBLP:journals/corr/LeSS14} this is improved to any kernel on the form $k(\langle x,y\rangle)$ and time $O((m+d)\log d)$.
This is basically optimal in terms of time and space,
however the method does not handle kernel functions that can't be specified as a function of the inner product, and it doesn't provide subspace embeddings.
See also~\cite{musco2017recursive} for more approaches along the same line.

\paragraph*{Tensor Sparsification}
There is also a literature of tensor sparsification based on sampling~\cite{nguyen2015tensor}, however unless the vectors tensored are already very smooth (such as $\pm1$ vectors), the sampling has to be weighted by the data.
This means that these methods in aren't applicable in general to the types of problems we consider, where the tensor usually isn't known when the sketching function is sampled.

\paragraph*{Hyper-plane rounding}
An alternative approach is to use hyper-plane rounding to get vectors on the form $\pm1$.
Let $\rho = \frac{\langle x,y\rangle}{\|x\|\|y\|}$, then we have
$
\langle\sign(Mx), \sign(My)\rangle
= \sum_i \sign(M_i x)\sign(M_i y)
= \sum_i X_i
$
,
where $X_i$ are independent Rademachers with $\mu/m = E[X_i] = 1-\frac{2}{\pi}\arccos\rho = \frac{2}{\pi}\rho + O(\rho^3)$.
By tail bounds then
$\Pr[|\langle\sign(Mx), \sign(My)\rangle - \mu|>\epsilon\mu]
\le2\exp(-\min(\frac{\epsilon^2\mu^2}{2\sigma^2}, \frac{3\epsilon\mu}{2}))
$.
Taking $m = O(\rho^{-2}\epsilon^{-2}\log1/\delta)$ then suffices with high probability.
After this we can simply sample from the tensor product using simple sample bounds.

The sign-sketch was first brought into the field of data-analysis by~\cite{charikar2002similarity} and~\cite{valiant2015finding} was the first, in our knowledge, to use it with tensoring.
The main issue with this approach is that it isn't a linear sketch, which hinders some applications, like subspace embeddings.
It also takes $dm$ time to calculate $Mx$ and $My$.
In general we would like fast-matrix-multiplication type results.


\subsection{Notation and Preliminaries}

\paragraph{Asymptotic notation}
We say $f(x) \lesssim g(x)$ if $f(x) \le C g(x)$ for all $x\in\R$ and some universal constant $C$.
Note this is slightly different from the usual $f(x)=O(g(x))$ in that it is uniform rather than asymptotic.

For $p\ge 1$ and random variables $X\in R$, we write
$\|X\|_p = (E |X|^p)^{1/p}$.
Note that $\|X+Y\|_p\le\|X\|_p+\|Y\|_p$ by the Minkowski Inequality.

\paragraph{JL Properties}

There are a number of different ways to classify JL matrices.
The ones we will use are based on the above mentioned moment norm:

\begin{definition}[JL-moment property]
   We say a distribution over random matrices $M\in \R^{m\times d}$ has the $(\eps,\delta,p)$-JL-moment property, when
   \begin{align}
      \|\|Mx\|_2^2 - 1\|_p \le \eps \delta^{1/p}
   \end{align}
   for all $x\in \R^d$, $\|x\|=1$.
\end{definition}
By Markov's inequality, the JL-moment-property implies
$E\|Mx\|_2 = \|x\|_2$ and
that taking $m=O(\eps^{-2}\log1/\delta)$ suffices to have $\Pr[|\|Mx\|_2-\|x\|_2|>\eps]<\delta$ for any $x\in \R^d$.
(This is sometimes known as the Distributional-JL property.)

Note that the Strong JL Moment Property implies the $(\eps,\delta,\log1/\delta)$-JL Moment Property, since then $\eps \delta^{1/p} = \eps/e$.

\paragraph{Notation for various matrix products}
In this article we will be combining matrices in a number of different ways.
Here we introduce the definitions and properties thereof which we will use in the later sections.

\paragraph{Tensor product}
Given two matrices $A\in\R^{m\times n}$ and $B\in\R^{k\times\ell}$
we define the ``tensor product'' (or Kronecker product), $A\otimes B\in\R^{mk\times n\ell}$ as
   \begin{align}
      A\otimes B = \begin{bmatrix} A_{1,1} B & \cdots & A_{1,n}B \\ \vdots & \ddots & \vdots \\ A_{m,1} B & \cdots & A_{m,n} B \end{bmatrix}.
   \end{align}
In particular of two vectors: $x\otimes y = [x_1 y_1, x_1 y_2, \dots, x_n y_n]^T$.
Notice this equals the flattened outer product $xy^T$.
Taking the tensor-product of a vector with itself, we get the tensor-powers:
$x^{\otimes k} = \underbrace{x \otimes \dots \otimes x}_{k \text{ times}}$.

The Kronecker product has the useful ``mixed product property'', when the sizes match up:
$
   (A\otimes B)(C\otimes D) = (AC)\otimes (BD)
$.
We note in particular the vector variants
$(I\otimes B)(x\otimes y) = x \otimes By$,
$\langle x\otimes y, z\otimes t\rangle = \langle x, z\rangle\langle y, t\rangle$
and
$\langle x^{\otimes k}, y^{\otimes k}\rangle = \langle x, y\rangle^k$.

\vspace{1em}

A related operation is the ``direct sum'' for vectors: $x \oplus y = \left[\begin{smallmatrix} x \\ y \end{smallmatrix}\right]$
and for matrices: $A \oplus B = \left[\begin{smallmatrix}A & 0 \\ 0 & B \end{smallmatrix}\right]$.
When the sizes match up, we have $(A \oplus B)(x\oplus y) = Ax + By$.
Also notice that if $I_k$ is the $k\times k$ identity matrix, then $I_k \otimes A = \underbrace{A\oplus\dots\oplus A}_{k \text{ times}}$.

\paragraph{Hadamard product}
Another useful vector product is the ``Hadamard product'', also sometimes known as the `element-wise product'.
We define $x \circ y = \left[ x_1 y_1, x_2 y_2, \dots, x_n y_n \right]^T$.
Taking the Hadamard product with itself gives the Hadamard-power:
$x^{\circ k} = \underbrace{x \circ \dots \circ x}_{k \text{ times}} = [x_1^k, x_2^k, \dots, x_n^k]^T$.

\paragraph{Face-splitting product}
The Face-splitting product~\cite{slyusar2003generalized} (or transposed Khatri-Rao product) is defined as the rows-by-rows tensor product:
\begin{align}
   M = M^{(1)}\bullet\dots\bullet M^{(c)}
   &= \begin{bmatrix}
      M^{(1)}_1 \otimes M^{(2)}_1 \otimes \dots \otimes M^{(c)}_1 \\
      \vdots\\
      M^{(1)}_m \otimes M^{(2)}_m \otimes \dots \otimes M^{(c)}_m
   \end{bmatrix}
   .
\end{align}
The product has the property (which follows directly from the tensor product), that $(M\bullet T)(x\otimes y) = Mx \circ Ty$.


\subsection{Auxiliary Lemmas}
In this section we state the auxiliary lemmas that we will use throughout the paper.
Most of the lemmas are already known and the proofs of the new lemmas are deferred to
\Cref{app:aux-lemmas-proof}.

\begin{lemma}[Khintchine's inequality~\cite{haagerup2007best}] \label{lem:khintchine}
    Let $p > 0$, $x\in \R^d$, and $(\sigma_i)_{i\in[d]}$
    be independent Rademacher $\pm1$ random variables.
    Then
    \begin{align}
       \norm{\sum_{i=1}^d \sigma_i x_i}{p} \,\le C_p\, \norm{x}{2}^2
       ,
    \end{align}
    where $C_p\le \sqrt2 \left(\frac{\Gamma((p+1)/2)}{\sqrt\pi}\right)^{1/p}\le\sqrt{p}$ for $p\ge 1$.
 
    One may replace $(\sigma_i)$ with an arbitrary independent sequence of random variables $(\varsigma_i)$ with $\Ep{\varsigma_i}=0$ and $\norm{\varsigma_i}p \lesssim \sqrt{p}$, and the lemma still holds up to a universal constant factor on the rhs.
\end{lemma}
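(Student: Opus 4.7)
The plan is to deduce both parts of the inequality by comparing Rademacher moments to Gaussian moments term-by-term, and then handling the $\varsigma_i$ extension via subgaussian tensorisation. Assume $\|x\|_2 = 1$ by homogeneity. By monotonicity of $L_p$ norms it suffices to prove the bound for $p=2k$ an even integer (a general $p\in[2k-2,2k]$ costs only a universal factor that is absorbed into $C_p$). Expanding multinomially, $E[(\sum_i \sigma_i x_i)^{2k}] = \sum_{\alpha:|\alpha|=k}\binom{2k}{2\alpha_1,\dots,2\alpha_d}x^{2\alpha}$, where the sum is restricted to all-even multi-indices since $E[\sigma_i^j]=0$ for odd $j$.

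Comparing with i.i.d.\ standard Gaussians $g_1,\dots,g_d$, the sum $\sum_i g_i x_i\sim N(0,1)$, and the same multinomial expansion yields exactly the Rademacher coefficients, each multiplied by an extra factor $\prod_i (2\alpha_i-1)!! \ge 1$. Hence $E[(\sum_i \sigma_i x_i)^{2k}] \le (2k-1)!! = E[g^{2k}]$ term by term, i.e.\ $\|\sum_i \sigma_i x_i\|_{2k} \le \|g\|_{2k}$. Next I would evaluate $\|g\|_p$ exactly via the substitution $u=t^2/2$ in $E[|g|^p] = \tfrac{2}{\sqrt{2\pi}}\int_0^\infty t^p e^{-t^2/2}\,dt$, which gives $E[|g|^p] = \tfrac{2^{p/2}}{\sqrt\pi}\Gamma(\tfrac{p+1}{2})$; taking $p$-th roots recovers exactly $C_p=\sqrt 2\,(\Gamma((p+1)/2)/\sqrt\pi)^{1/p}$, and the bound $C_p \le \sqrt p$ follows from a short Stirling estimate on $\Gamma((p+1)/2)$.

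For the extension to general centred $\varsigma_i$ with $\|\varsigma_i\|_p \lesssim \sqrt p$ for all $p\ge 2$, the cleanest route is via moment generating functions: this hypothesis is equivalent (up to universal constants) to $\varsigma_i$ being subgaussian, $E[e^{t\varsigma_i}]\le e^{Ct^2}$ for some universal $C$. By independence, $E[e^{t\sum_i \varsigma_i x_i}] \le \prod_i e^{Ct^2 x_i^2} = e^{Ct^2\|x\|_2^2}$, so $\sum_i \varsigma_i x_i$ is itself subgaussian with parameter $\|x\|_2$, and therefore $\|\sum_i \varsigma_i x_i\|_p \lesssim \sqrt p\,\|x\|_2$, which gives the lemma up to a universal constant. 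A slightly more combinatorial alternative is symmetrisation: $\|\sum_i \varsigma_i x_i\|_p \le 2\|\sum_i \sigma_i \varsigma_i x_i\|_p$ with independent Rademachers $\sigma_i$, then condition on $(\varsigma_i)$ and apply the Rademacher bound, controlling the outer $L_p$ norm of $(\sum_i \varsigma_i^2 x_i^2)^{1/2}$ by Minkowski in $L_{p/2}$; but this loses an extra $\sqrt p$ factor that the MGF route avoids.

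The main technical obstacle is keeping the combinatorial bookkeeping clean in the term-by-term comparison: every Rademacher coefficient must line up with a Gaussian coefficient of the form (multinomial)$\,\times\,\prod_i (2\alpha_i-1)!!$. The interpolation from even-integer $p$ to arbitrary real $p\ge 1$ needed to retain the \emph{optimal} constant (rather than a bound off by a factor of $\sqrt 2$ between consecutive even integers) is a more delicate step that we would import from \cite{haagerup2007best} rather than reproduce, since our applications only ever invoke the qualitative $C_p \le \sqrt p$ form.
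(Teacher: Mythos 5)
The paper does not prove this lemma --- it is cited directly from Haagerup~\cite{haagerup2007best} --- so there is no in-paper proof to compare against; yours is a correct, self-contained derivation. The Gaussian-comparison step is the standard one: for even $p=2k$ the multinomial expansion shows the Rademacher moment is dominated term-by-term by the Gaussian one (each Gaussian coefficient carries the extra factor $\prod_i(2\alpha_i-1)!!\ge1$), and evaluating $\Ep{|g|^p}=2^{p/2}\Gamma((p+1)/2)/\sqrt\pi$ gives the stated $C_p$. You rightly flag that upgrading from even-integer $p$ to arbitrary real $p\ge 2$ with the \emph{exact} constant is the nontrivial content of~\cite{haagerup2007best}; the crude interpolation of rounding $p$ up to the next even integer gives only $C_p\lesssim\sqrt p$, which is all the paper uses at generic $p$. (The one place a sharp Khintchine constant appears is $C_4=3^{1/4}$, giving $a=\sqrt3/4$ in the remark after \cref{thm:indeprows}; since $p=4$ is even, your term-by-term comparison covers it exactly.) For the extension, the MGF/subgaussian route is the right choice: $\Ep{\varsigma_i}=0$ together with $\norm{\varsigma_i}{p}\lesssim\sqrt p$ for all $p\ge2$ is equivalent, up to universal constants, to $\Ep{e^{t\varsigma_i}}\le e^{Ct^2}$, and tensorising over independent coordinates yields the claim; your observation that symmetrising and then applying Khintchine conditionally on $(\varsigma_i)$ would cost an additional $\sqrt p$ through $\norm{\sum_i\varsigma_i^2x_i^2}{p/2}^{1/2}\lesssim\sqrt p\,\norm{x}{2}$ is also correct.
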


We also need a version of Khintchine's inequality that work on tensors of Rademacher vectors.
\begin{lemma}[Generalized Khintchine's Inequality]\label{lem:gen-khinchine}
    Let $p \ge 1$, $c \in Z_{> 0}$, and $(\sigma^{(i)}\in\R^{d_i})_{i \in [c]}$
    be independent vectors each satisfying the Khintchine inequality $\norm{\langle\sigma^{(i)},x\rangle}p \le C_p\|x\|_2$ for any vector $x\in\R^{d_i}$.
    Let $(a_{i_1, \ldots, i_{c}} \in \R)_{i_j\in[d_j],j\in[c]}$
    be a tensor in $\R^{d_1\times\dots\times d_c}$, then
    \begin{align}
       \norm{
          \sum_{i_1\in[d_1],\dots,i_c\in[d_c]}
          \left(\prod_{j \in [c]} \sigma^{(j)}_{i_j}\right)
          a_{i_1, \ldots, i_{c}}
        }{p}
          \le C_p^c
          \left(\sum_{
                i_1\in[d_1],\dots,i_c\in[d_c]
          } a_{i_1, \ldots, i_{c}}^2\right)^{1/2}
          .
    \end{align}
    Or, considering $a\in\R^{d_1\cdots d_c}$ a vector, then simply
    $\norm{\langle \sigma^{(1)}\otimes\dots\otimes\sigma^{(c)}, a\rangle}p
    \le C_p^c\, \|a\|_2$.
\end{lemma}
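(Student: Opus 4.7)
The natural approach is induction on $c$. The base case $c=1$ is simply the hypothesis that each $\sigma^{(i)}$ satisfies Khintchine. For the inductive step, I would peel off one factor at a time, conditioning on the remaining Rademacher-like vectors.

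Concretely, write the sum as $S = \langle \sigma^{(c)}, b\rangle$ where $b \in \R^{d_c}$ has coordinates
\begin{align*}
b_{i_c} = \sum_{i_1\in[d_1],\dots,i_{c-1}\in[d_{c-1}]} \Bigl(\prod_{j=1}^{c-1} \sigma^{(j)}_{i_j}\Bigr) a_{i_1,\ldots,i_c}.
\end{align*}
Conditioning on $\sigma^{(1)},\dots,\sigma^{(c-1)}$ the vector $b$ becomes deterministic, so the one-dimensional Khintchine hypothesis on $\sigma^{(c)}$ gives $\Ep{|S|^p \mid \sigma^{(1)},\dots,\sigma^{(c-1)}} \le C_p^p\, \norm{b}{2}^p$. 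Taking the outer expectation and a $p$-th root yields $\norm{S}{p} \le C_p\, \norm{\,\norm{b}{2}\,}{p}$.

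The key remaining step is to bound $\norm{\,\norm{b}{2}\,}{p}$. I would square inside and use the Minkowski inequality in $L_{p/2}$ (which is a genuine norm for $p \ge 2$):
\begin{align*}
\norm{\,\norm{b}{2}\,}{p}^2 \;=\; \Bigl\|\sum_{i_c} b_{i_c}^2\Bigr\|_{p/2} \;\le\; \sum_{i_c} \norm{b_{i_c}^2}{p/2} \;=\; \sum_{i_c} \norm{b_{i_c}}{p}^2.
\end{align*}
Each $b_{i_c}$ is itself a $(c-1)$-fold tensorised Rademacher sum with coefficient tensor $(a_{i_1,\ldots,i_{c-1},i_c})$, so by the induction hypothesis $\norm{b_{i_c}}{p} \le C_p^{c-1} \bigl(\sum_{i_1,\dots,i_{c-1}} a_{i_1,\ldots,i_c}^2\bigr)^{1/2}$. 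Summing over $i_c$ telescopes into $\norm{a}{2}^2$, and combining with the first step gives $\norm{S}{p} \le C_p \cdot C_p^{c-1} \norm{a}{2} = C_p^c \norm{a}{2}$, closing the induction.

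The only subtle point is the use of the Minkowski inequality in $L_{p/2}$, which is valid exactly when $p\ge 2$. For $1\le p<2$ one falls back on monotonicity of $L_p$ norms ($\norm{S}{p}\le\norm{S}{2}$) and applies the already-established $p=2$ case; this is the main place one must be careful, but since every use of this lemma in the paper is in the Strong JL Moment regime $p\ge 2$, the issue is benign. I expect the conditioning / Minkowski juggling to be the only real obstacle — all the structure is built into the tensor factorisation of $S$, which lets each Rademacher-like vector be handled independently by Fubini.
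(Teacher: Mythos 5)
Your proof is correct and is essentially the paper's proof: both argue by induction on $c$, peel off one factor by conditioning, apply the triangle inequality in $L_{p/2}$ to the squared sum, and close with one-dimensional Khintchine. The only cosmetic difference is the order of peeling — you condition on $\sigma^{(1)},\dots,\sigma^{(c-1)}$ and apply Khintchine to $\sigma^{(c)}$ first, then the inductive hypothesis to the $b_{i_c}$'s, whereas the paper first applies the inductive hypothesis conditionally on $\sigma^{(c)}$ and then Khintchine to the inner sums $B_{i_1,\dots,i_{c-1}}$; the two are symmetric. You also correctly flag the $p\ge 2$ requirement for Minkowski in $L_{p/2}$, a subtlety the paper silently elides despite stating the lemma for $p\ge 1$.
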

This is related to Lata{\l}a's estimate for Gaussian chaoses~\cite{latala2006estimates}, but more simple in the case where $a$ is not assumed to have special structure.
Note that this implies the classical bound on the fourth moment of products of 4-wise independent hash functions~\cite{DBLP:conf/stacs/BravermanCLMO10, indyk2008declaring, DBLP:journals/jacm/PatrascuT12},
since $C_4=3^{1/4}$ for Rademachers we have $\norm{\langle \sigma^{(1)}\otimes\dots\otimes\sigma^{(c)}, x\rangle}{4}^4 \le 3^c\, \norm{x}{2}^4$ for four-independent $\sigma^{(i)}$s.
\begin{proof}
    The proof is deferred to \Cref{app:aux-lemmas-proof}
\end{proof}

A very useful result for computing the $p$-norm of a sum of random variables is the following:
\begin{lemma}[Latała's inequality, \cite{latala1997estimation}]\label{lem:latala-sup}
    If $p\ge 2$ and $X, X_1, \dots, X_n$ are iid. mean 0 random variables, then we have
    \begin{align}
       \norm{\sum_{i=1}^n X_i}p \sim \sup
       \left\{ \frac{p}s\left(\frac np\right)^{1/s}\norm{X}s \,\middle\vert \,  \max\left\{2,\frac pn\right\}\le s\le p\right\}.
    \end{align}
\end{lemma}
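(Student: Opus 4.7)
The statement here is Latała's inequality, cited to \cite{latala1997estimation}, so the plan is really to sketch how the original theorem is established rather than introduce anything new. I would prove the two sides of the equivalence $\sim$ separately.

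For the upper bound $\norm{\sum_i X_i}p \lesssim \sup_s \frac{p}{s}(n/p)^{1/s}\norm{X}s$, the plan is to use a truncation argument. Fix the threshold $t$ so that $n\Prp{\abs{X}>t}$ is on the order of a chosen parameter, and split $X_i = Y_i + Z_i$ where $Y_i = X_i\mathbf{1}_{\abs{X_i}\le t}$. Then by the triangle inequality in $L^p$, $\norm{\sum_i X_i}p \le \norm{\sum_i Y_i}p + \norm{\sum_i Z_i}p$. For the truncated part, a Bernstein/Bennett-type $p$-th moment bound gives control via a sub-Gaussian regime (essentially $s=2$) and a sub-exponential regime (essentially $s=p$), interpolating between them as $t$ varies. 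For the exceedance part $\sum_i Z_i$, note that $\norm{\sum_i Z_i}p \le p\norm{\max_i \abs{Z_i}}p + n\abs{\Ep{Z_1}}$; the maximum is in turn bounded by $\norm{\(\sum_i \abs{Z_i}^s\)^{1/s}}p$ for any $s$, which after passing through $\norm{X}s$ yields the desired form. Optimising over $t$ corresponds precisely to taking the supremum over $s\in[\max\{2,p/n\},p]$ on the right-hand side.

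For the lower bound, I would use standard symmetrization with Rademachers, $\norm{\sum_i X_i}p \gtrsim \norm{\sum_i \sigma_i X_i}p$, and then use the contraction / decoupling principle combined with a lower bound in terms of the order statistics. Concretely, for any $s$ in the allowed range, pick $k = \ceil{p/s}$ and compare $\sum_i X_i$ to the sum of its $k$ largest terms; because the $X_i$ are iid, $\norm{(k\text{-th largest }\abs{X_i})}p$ can be bounded below by $(n/k)^{1/s}\norm{X}s$ by a standard computation on order statistics, and multiplying by $k\sim p/s$ gives the stated expression. Taking the supremum over admissible $s$ yields the bound.

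The main obstacle is the matching of the parameter regimes on the two sides: one has to verify that the same range $\max\{2,p/n\}\le s\le p$ is natural from both the truncation optimisation (upper bound) and the order-statistics comparison (lower bound), and that universal constants absorbed in $\sim$ are indeed independent of $n$, $p$, and the distribution of $X$. The second subtle point is handling the centering contribution $n\abs{\Ep{Z_1}}$ in the exceedance part, which requires that truncation is done at the right level so that this centering does not dominate; this is where the mean-zero hypothesis is used essentially. Since these matching constants and range verifications are exactly what Latała's original argument treats in detail, I would simply defer to \cite{latala1997estimation} and use the stated two-sided bound as a black box in the rest of the paper.
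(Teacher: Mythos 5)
The paper does not prove this lemma; it is stated as a cited black box from Latała~\cite{latala1997estimation}, exactly as you ultimately propose to treat it, so your decision to defer is the correct one and matches the paper's usage. That said, your sketch of the upper bound via truncation plus Bernstein/Rosenthal and the lower bound via order statistics is a reasonable heuristic picture, but it is not how Latała's actual proof proceeds: his argument introduces the implicit quantity $\|X\|_{(p)} = \inf\{t>0 : \sum_i \log \mathrm{E}\,|1 + X_i/t|^p \le p\}$, shows by a direct (and fairly delicate) induction-style argument that $\|\sum_i X_i\|_p$ is comparable to $\|X\|_{(p)}$ up to universal constants, and only then derives the supremum form stated here as a consequence of computing $\|X\|_{(p)}$ for iid summands. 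The truncation/order-statistics route you describe would require you to re-derive the constant-independence and regime-matching that Latała's implicit-norm formulation handles cleanly; it is not wrong in spirit, but if you were to actually execute it you would be re-proving a Rosenthal/Hitczenko-type bound rather than Latała's theorem, and the matching of constants across the whole range $\max\{2,p/n\}\le s\le p$ is precisely the part that is nontrivial. Since you and the paper both use this as a citation, none of this affects downstream correctness.
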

 
 The following lemma first appeared in \cite{hitczenko1994domination}, but the following is roughly taken from \cite{de2012decoupling}, which we also recommend for readers interested in more general versions.
 \begin{lemma}[General decoupling, \cite{de2012decoupling} Theorem 7.3.1, paraphrasing]\label{lem:gen_decoup}
    Given a sequence $X_1, \dots, X_n$ of random variables and a filtration $\mathcal F_1, \dots, \mathcal F_n$.
    Define $Y_1, \dots, Y_n$ such that
    \begin{enumerate}
       \item $\Epcond{Y_i}{\mathcal F_{i-1}} = \Epcond{X_i}{\mathcal F_{i-1}}$ for all $i$.
       \item The sequence $(Y_i)_i$ is conditionally independent given $X_1, \dots, X_n$.
       \item $\Epcond{Y_i}{\mathcal F_{i-1}} = \Epcond{Y_i}{X_1, \dots, X_n}$ for all $i$.
    \end{enumerate}
    Then for all $p\ge 1$,
    \begin{align}
       \norm{\sum_i X_i}p \lesssim \norm{\sum_i Y_i}p
    \end{align}
 \end{lemma}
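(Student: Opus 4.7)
The plan is to follow the classical tangent-sequence decoupling route: split each $X_i$ into its $\mathcal F_{i-1}$-predictable conditional mean and a martingale difference, and then bound the two pieces separately in terms of $\sum_i Y_i$. Concretely, set $a_i := \Epcond{X_i}{\mathcal F_{i-1}}$; by property~(1), $a_i = \Epcond{Y_i}{\mathcal F_{i-1}}$ as well. Write $X_i = a_i + d_i$, so that $\Epcond{d_i}{\mathcal F_{i-1}}=0$, and $Y_i = a_i + e_i$ with $\Epcond{e_i}{\mathcal F_{i-1}}=0$. By the triangle inequality $\norm{\sum_i X_i}p \le \norm{\sum_i a_i}p + \norm{\sum_i d_i}p$, so it suffices to control each summand by a constant times $\norm{\sum_i Y_i}p$.

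For the predictable part, I would use property~(3) directly: it says $\Epcond{Y_i}{X_1,\dots,X_n}=\Epcond{Y_i}{\mathcal F_{i-1}}=a_i$. Summing over $i$ and applying conditional Jensen,
\[
\norm{\sum_i a_i}p = \norm{\Epcond{\sum_i Y_i}{X_1,\dots,X_n}}p \le \norm{\sum_i Y_i}p.
\]

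For the martingale-difference part, the sequence $(e_i)$ is a \emph{tangent sequence} to $(d_i)$: by~(1) the two have the same $\mathcal F_{i-1}$-conditional law, and by~(2) the $e_i$ are conditionally independent given $X_1,\dots,X_n$. The classical tangent-sequence decoupling inequality of Hitczenko~\cite{hitczenko1994domination} (in the form packaged by de la Peña-Giné) then yields $\norm{\sum_i d_i}p \lesssim \norm{\sum_i e_i}p$. Combining this with $\norm{\sum_i e_i}p \le \norm{\sum_i Y_i}p + \norm{\sum_i a_i}p \le 2\norm{\sum_i Y_i}p$ closes the argument.

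The main obstacle is the tangent-sequence inequality itself; every other step is bookkeeping with the triangle inequality and conditional Jensen. That inequality is established by introducing an independent Rademacher sequence $(\sigma_i)$, using the conditional independence of $(e_i)$ given $X_1,\dots,X_n$ (property~(2)) to replace $\sum_i e_i$ by the symmetrized $\sum_i \sigma_i e_i$ up to a universal constant, symmetrizing $\sum_i d_i$ in parallel, and finally invoking Burkholder's square-function inequality. A secondary technicality is filtration management: one must verify that the enlargement needed to make $(d_i)$ and $(e_i)$ simultaneously adapted preserves the martingale-difference structure, which follows from properties~(2) and~(3) together.
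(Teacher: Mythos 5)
The paper does not prove this lemma at all: it is stated as a paraphrase of de la Peña--Gin\'e Theorem~7.3.1 / Hitczenko's decoupling result and used as a black box, so there is no ``paper's proof'' to compare against. Your proposal, however, is not an independent proof either. The key step---$\norm{\sum_i d_i}{p}\lesssim\norm{\sum_i e_i}{p}$---is obtained by invoking exactly the same tangent-sequence decoupling theorem of Hitczenko / de la Peña--Gin\'e that the lemma is citing. In effect you are reducing the stated lemma to itself plus some triangle-inequality bookkeeping, which is circular. The centering $X_i = a_i + d_i$, $Y_i = a_i + e_i$ is also unnecessary: the cited theorem is already stated for arbitrary tangent sequences (not just martingale differences), so the lemma follows from it directly without any decomposition.

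There is also a genuine logical gap, albeit one inherited from the paper's own imprecision. Property~(1) as written asserts only that $\Epcond{Y_i}{\mathcal F_{i-1}} = \Epcond{X_i}{\mathcal F_{i-1}}$, i.e.\ equality of \emph{conditional expectations}. Your step ``by~(1) the two have the same $\mathcal F_{i-1}$-conditional law'' upgrades this to equality of \emph{conditional distributions}, which is what tangency actually requires---and what the decoupling theorem needs. These are not the same, and matching first conditional moments alone cannot give $\norm{\sum d_i}{p}\lesssim\norm{\sum e_i}{p}$ for all $p$. The paper flags this silently by writing ``paraphrasing,'' and in its application (the proof of the product lemma) it verifies the stronger condition $\Prpcond{Z_k>t}{\cdot}=\Prpcond{Y_k>t}{\cdot}$ for all $t$, i.e.\ matching conditional laws, not merely conditional means. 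If you want a correct proof, the honest move is to state condition~(1) as distributional equality and then the result is literally de la Peña--Gin\'e Theorem~7.3.1; a from-scratch proof would require actually carrying out the symmetrization/Burkholder argument you gesture at, which is several pages and not just ``bookkeeping.''
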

 
The next lemma is a type of Rosenthal inequality, but which mixes large and small moments in a careful way.
It bears similarity to the one sided bound in~\cite{boucheron2013concentration} (Theorem 15.10) derived from the Efron Stein inequality, and the literature has many similar bounds, but we still include a proof here based on first principles. 
\begin{lemma}[Rosenthal-type inequality]\label{lem:rosenthal-variant}
    Let $p \ge 2$ and $X_0, \ldots, X_{k - 1}$ be independent non-negative random variables
    with $p$-moment, then
    \[
       \norm{\sum_{i \in [k]} (X_i - \Ep{X_i})}{p}
          \lesssim \sqrt{p}\sqrt{\sum_{i \in [k]} \Ep{X_i}}\norm{\max_{i \in [k]} X_i}{p}^{1/2}
             + p\norm{\max_{i \in [k]} X_i}{p}
    \]
\end{lemma}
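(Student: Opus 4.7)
The plan is to combine a symmetrization with a conditional Khintchine inequality, and then to exploit the non-negativity of the $X_i$ to reduce the resulting variance-type term to the maximum.

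First, introduce independent copies $X_i'$ of $X_i$ together with a sequence of Rademachers $(\sigma_i)$ independent of everything else. By Jensen's inequality in the primed variables (the primed means are zero when centred) and the standard symmetrization trick,
\begin{align*}
   \norm{\sum_{i \in [k]} (X_i-\Ep{X_i})}{p}
   &\le \norm{\sum_{i\in[k]}(X_i-X_i')}{p}
   = \norm{\sum_{i\in[k]} \sigma_i(X_i-X_i')}{p}
   \le 2\,\norm{\sum_{i\in[k]} \sigma_i X_i}{p},
\end{align*}
where the equality uses that $X_i - X_i'$ is symmetric. Applying \Cref{lem:khintchine} conditionally on $(X_i)_i$ and then taking outer expectations yields
$\norm{\sum_i \sigma_i X_i}{p} \lesssim \sqrt{p}\,\norm{(\sum_i X_i^2)^{1/2}}{p}$.

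The key step is to bound the $\ell_2$-square by the product of the maximum and the $\ell_1$-sum. Since each $X_i \ge 0$, we have $X_i^2 \le X_i \max_j X_j$, hence $\sum_i X_i^2 \le (\max_j X_j)\sum_i X_i$. One application of Cauchy--Schwarz in $L_{p/2}$ (using $\tfrac1p+\tfrac1p=\tfrac2p$) then gives
\[
   \norm{(\textstyle\sum_i X_i^2)^{1/2}}{p}^{2}
   = \norm{\textstyle\sum_i X_i^2}{p/2}
   \le \norm{\max_j X_j}{p}\cdot\norm{\textstyle\sum_i X_i}{p}.
\]

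To close the loop, set $S := \norm{\sum_i(X_i-\Ep{X_i})}{p}$, $\mu := \sum_i \Ep{X_i}$, and $M := \norm{\max_i X_i}{p}$. The triangle inequality gives $\norm{\sum_i X_i}{p} \le S + \mu$, so combining the three steps produces a self-bounding inequality $S^2 \lesssim p\,M\,(S+\mu)$. Solving this quadratic in $S$ and using $\sqrt{a+b}\le\sqrt a + \sqrt b$ gives $S \lesssim pM + \sqrt{pM\,\mu}$, which is exactly the claimed bound.

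The main obstacle I anticipate is recognising the right shape of the right-hand side. Classical Rosenthal produces a term $\sqrt{p}\,(\sum_i \Ep{X_i^2})^{1/2}$, which is not naturally expressible in terms of $\norm{\max_i X_i}{p}$ and $\sum_i \Ep{X_i}$ alone; the non-negativity trick $X_i^2 \le X_i\,\max_j X_j$ is precisely what trades ``sum of squares'' for ``max times sum'' and lets the self-bounding quadratic close cleanly. Once this exchange is in hand, the rest is routine. The only technical care needed is to ensure that the symmetrization absorbs the $\Ep{X_i}$ terms correctly, which it does because the differences $X_i - X_i'$ have the same distribution whether or not we first centre.
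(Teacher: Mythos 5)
Your proof is correct and follows essentially the same route as the paper's: symmetrize, apply Khintchine conditionally, use $\sum_i X_i^2 \le (\max_j X_j)\sum_i X_i$ to trade the sum of squares for a maximum, Cauchy--Schwarz in $L_{p/2}$, then close the self-bounding quadratic. The only cosmetic difference is that the paper sets up the quadratic in $C = \norm{\sum_i(X_i-\Ep{X_i})}{p}^{1/2}$ while you work directly with $S = C^2$, which amounts to the same solve.
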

\begin{proof}
    The proof is deferred to \Cref{app:aux-lemmas-proof}.
\end{proof}

\section{Constructions}\label{sec:constructions}

The moral of this section is that, if $M$ is a JL matrix, and $M'$ is an independent copy, then for many classic constructions of such matrices,
$M \bullet M'$ is also a JL matrix.
This is important for tensor sketching, because we have the identity $(M\bullet M')(x\otimes y) = Mx \circ M'y$.
That is, matrices on this form can be efficiently applied to tensors.

Note that it is unfortunately not possible to give a general guarantee from the JL property.
To see this, note that it doesn't destroy the JL property of a matrix with $m$ rows to add $m$ more rows with all 0s.
If we append $m$ such rows to $M$ and prepend as many rows to $M'$ then $M\bullet M'=0$.

If the rows of $M$ are independent the guarantee does hold.
This follows essentially from our analysis in \cref{sec:normconst}.
Unfortunately many interesting JL distributions don't have independent rows.
One particular example is the Fast Johnson Lindenstrauss algorithm by Ailon and Chazelle~\cite{ailon2006approximate}.
They considered a matrix $SHD\in\R^{m\times d}$ where $S\in R^{m\times d}$ is a sampling matrix with one 1 per row; $H\in\R^{d\times d}$ is the Hadamard matrix defined as $\left[\begin{smallmatrix}1&1\\1&-1\end{smallmatrix}\right]^{\otimes k}$ when $d=2^k$; and $D\in\R^{d\times d}$ is a diagonal matrix with $\pm1$ entries.
Because all three matrices allow fast matrix multiplication, Fast JL allows reducing the dimension from $d$ to $m$ in time $O(m+d\log d)$.

We show that nevertheless, if $SHD$ is a Fast JL matrix and $S'H'D'$ is an independent copy, then $SHD \bullet S'H'D'$ is a JL matrix as well.
Our analysis is based on Nelson~\cite{jelnotes} and loses a factor of $\log1/\delta$ in $m$ compared to the best known method~\cite{krahmer2011new} based on the Restricted Isometry Property,
but other than that it gets the precise $(\log1/\delta)^c$ behavior that we expect from fully independent (and slow) JL matrices.

\vspace{.5em}

The constructions in this section all follow the ``direct composition'' paradigm of the original tensor sketch.
For this reason they all incur exponential dependencies in $c$.
In the next section we will reduce this to a linear dependency, but we will do so by combining the constructions of this section parameterised with $c=2$.

\subsection{Matrices with independent and identical rows}\label{sec:normconst}

The classic JL matrix, $M\in\R^{m\times d}$, is a dense matrix with independent Sub Gaussian entries, such as Gaussians or Rademachers ($\pm1$ with even chance.)
The formal definition of such a variable is that $\|X\|_p \lesssim \sqrt{p}$.

In this section we analyse our construction on a slightly more general family of matrices.
In particular we consider $M$ constructed as follows:

\begin{theorem}\label{thm:indeprows}
   Let $\eps,\delta\in[0,1]$ and let $c\ge 1$ be some integer.
   Let $T\in\R^{m\times d}$ be a matrix with iid. rows $T_1, \dots, T_m \in \R^d$ such that $\Ep{(T_1x)^2}=\|x\|_2^2$ and $\norm{T_1x}p \le \sqrt{a p} \|x\|_2$ for some $a>0$ and all $p\ge 2$.
   Let $M = T^{(1)}\bullet\dots\bullet T^{(c)}$ where $T^{(1)}, \dots, T^{(c)}$ are independent copies of $T$.
   Then there exists a constant $K>0$ such that $M$ has the \sjlmp{\eps}{\delta} given
   \begin{align}
      m \ge K \left[(4a)^{2c} \eps^{-2} \log1/\delta
      + (2ae)^{c} \eps^{-1} (\log1/\delta)^{c}\right]
      .
   \end{align}
\end{theorem}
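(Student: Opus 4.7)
The plan is to establish the Strong JL Moment Property for $M$ by controlling $\norm{\norm{Mx}2^2-1}p$ uniformly in $2\le p\le \log1/\delta$ for unit $x\in\R^{d^c}$, by combining the Generalized Khintchine Inequality with Latała's inequality (both stated in the excerpt).

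First I would fix a unit $x$ and decompose $\norm{Mx}2^2 - 1 = \sum_{i=1}^m Y_i$, where $Y_i = (\langle M_i, x\rangle^2-1)/m$ with $M_i = T^{(1)}_i\otimes\dots\otimes T^{(c)}_i$ (absorbing the required $1/\sqrt m$ normalization into $M$ so that $\Ep{\norm{Mx}2^2}=1$). Independence across rows makes the $Y_i$ iid and mean-zero. For the per-row moment bound I would use that each $T^{(k)}_i$ satisfies Khintchine with constant $C_q\le\sqrt{aq}$ by assumption; the Generalized Khintchine Inequality applied to the tensor vector $T^{(1)}_i\otimes\dots\otimes T^{(c)}_i$ (viewing $x$ as a $c$-tensor) then yields
\[
   \norm{\langle M_i, x\rangle}q \le C_q^c\norm{x}2 \le (aq)^{c/2}.
\]
Squaring, $\norm{\langle M_i, x\rangle^2}p \le (2ap)^c$, and hence $\norm{Y_i}s \lesssim (2as)^c/m$ for all $s\ge 2$.

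Next I would plug this into Latała's inequality for iid sums:
\[
   \norm{\sum_{i=1}^m Y_i}p \sim \sup_{\max\{2,p/m\}\le s\le p}\frac{p}{s}\left(\frac{m}{p}\right)^{1/s}\norm{Y_1}s.
\]
Substituting the moment bound and evaluating the supremum at the two endpoints produces two regimes: at $s=2$ one obtains the sub-Gaussian scale $(4a)^c\sqrt{p/m}$, and at $s=p$, since $(m/p)^{1/p}=O(1)$ in the relevant range of $m,p$, one obtains the heavy-tail contribution $O((2ae)^c\,p^c/m)$. Combining,
\[
   \norm{\sum_i Y_i}p \lesssim (4a)^c\sqrt{p/m} + (2ae)^c\,p^c/m.
\]

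Finally I would force each of these two terms to be at most $\tfrac{\eps}{2e}\sqrt{p/\log1/\delta}$ for every $2\le p\le\log1/\delta$. The Gaussian-scale term has matching $\sqrt p$ factors on both sides and yields $m\gtrsim (4a)^{2c}\log(1/\delta)/\eps^2$. The heavy-tail term has LHS growing like $p^c$ while the RHS grows like $\sqrt p$, so it is tightest at $p=\log(1/\delta)$, yielding $m\gtrsim (2ae)^c(\log 1/\delta)^c/\eps$. Together these give the two summands in the stated bound.

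The main obstacle is obtaining the $p^c/m$ (rather than $p^{c+1}/m$) scaling of the heavy-tail term: a direct Rosenthal-type bound such as the variant stated in the excerpt pays a factor $p\cdot\norm{\max_i Y_i}p\lesssim p\cdot (2ap)^c/m$, which would yield a sub-optimal $(\log 1/\delta)^{c+1}/\eps$ dependence. Latała's inequality saves this factor of $p$ by evaluating the iid sum at the optimal exponent $s=p$, where the prefactor $(m/p)^{1/p}$ stays bounded. A small technical point is that for very small $p$ the factor $(m/p)^{1/p}$ could be large; however in that regime the $s=2$ Gaussian-scale term already dominates Latała's supremum, so the combined bound keeps its shape across the full range $2\le p\le\log1/\delta$.
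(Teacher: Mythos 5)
Your proposal is correct and follows essentially the same route as the paper: bound each row's moment via the Generalized Khintchine Inequality, then apply Lata\l{}a's inequality to the iid sum $\sum_i Y_i$ (the paper packages the endpoint evaluation and the $(m/p)^{1/p}\le e^c$ observation into Corollaries~\ref{cor:latala-cor} and~\ref{cor:latala-cor2}, but your direct endpoint analysis with the remark that the $s=2$ term dominates when $(m/p)^{1/p}$ could be large is the same argument). Your closing observation about why a Rosenthal-style bound loses a factor of $p$ is a nice explanation of why the Lata\l{}a route is needed, and is consistent with the paper's choice.
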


\paragraph{Remark}
In the particular case of rows with iid. Rademachers we get $a=\sqrt{3}/4$
(the Khintchine constant from \cref{lem:khintchine})
and
$$
   m =O\(
   3^{c} \eps^{-2} \log1/\delta\,
   + \eps^{-1} (2.36\, \log1/\delta)^{c}
   \).
   $$
The same constants also holds for standard Gaussian random variables.
In the case of constant $c=O(1)$ this is simply
$O\left(\eps^{-2} \log1/\delta + \eps^{-1} (\log1/\delta)^{c}\right)$.
This improves upon the parallel work in~\cite{sarlos_neurips} which gets $m=\Omega(\eps^{-2}\log1/\eps\delta + \eps^{-1}(\log1/\eps\delta)^c)$ in this range.

If we are only interested in the JL property, and not the strong version, it is possible to get $\(\frac{\log1/\delta}{c}\)^c$ in the right term at the cost of an extra factor $e^c$ on the left term.

The proof is based on a generalized Khintchine inequality for tensor products (\cref{lem:gen-khinchine}), as well as the following consequence of a strong result by Latała (\cref{lem:latala-sup}):
\begin{corollary}\label{cor:latala-cor2}
   Let $p\ge2, C>0$ and $\alpha\ge 1$.
   Let $(X_i)_{i\in[n]}$ be iid. mean 0 random variables such that $\norm{X_i}p\sim (C p)^\alpha$,
   then
   $\norm{\sum_i X_i}p \lesssim C^\alpha\max\{2^\alpha\sqrt{pn},\, (ep)^\alpha\}$.
\end{corollary}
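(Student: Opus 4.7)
The plan is to apply Latała's inequality (Lemma~\ref{lem:latala-sup}) and then reduce the problem to a one-dimensional optimization. Substituting the hypothesis $\|X\|_s \lesssim (Cs)^\alpha$ into Latała's sup gives
\[
   \norm{\sum_i X_i}p \,\lesssim\, \sup_{s \in I} \frac{p}{s}\left(\frac{n}{p}\right)^{1/s}\!(Cs)^\alpha
   \;=\; C^\alpha \sup_{s \in I} h(s), \qquad h(s) := p\, s^{\alpha-1}\, (n/p)^{1/s},
\]
where $I = [\max\{2, p/n\},\, p]$. The goal is to show $\sup_{s\in I} h(s) \lesssim \max\{2^\alpha \sqrt{pn},\, (ep)^\alpha\}$.

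The first observation is that $h$ takes its supremum on $I$ at one of the endpoints. Differentiating,
\[
   \frac{d}{ds}\log h(s) \;=\; \frac{\alpha-1}{s} - \frac{\log(n/p)}{s^2}.
\]
If $n \le p$, then $\log(n/p)\le 0$, so $h$ is monotone nondecreasing and the supremum is at $s=p$. If $n > p$, the unique critical point $s^\ast = \log(n/p)/(\alpha-1)$ is a minimum of $h$ (the second derivative of $\log h$ is positive there), so again the supremum is at an endpoint of $I$; and since then $p/n<1<2$, the left endpoint is just $s=2$. Thus it suffices to bound $h(2)$ and $h(p)$.

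At $s=2$ we get $h(2) = 2^{\alpha-1}\sqrt{pn} \le 2^\alpha\sqrt{pn}$, which is absorbed by the first term of the claimed bound. At $s=p$ we get $h(p) = p^\alpha (n/p)^{1/p}$, and I split into two subcases. If $(n/p)^{1/p} \le e^\alpha$, then immediately $h(p) \le (ep)^\alpha$. Otherwise $n > p\, e^{\alpha p}$, and one verifies by a direct logarithmic computation that $h(p)\le 2^\alpha\sqrt{pn}$; this reduces to the inequality $\alpha(\log p - \log 2 - p/2 + 1) \le \log p$, which holds because the function $p\mapsto \log p - p/2$ is maximized at $p=2$ with value $\log 2 - 1$, so the bracket is nonpositive for all $p\ge 2$ while $\log p \ge \log 2 > 0$.

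The one genuine obstacle is the last crossover case: when $(n/p)^{1/p}$ is large, the naive endpoint bound at $s=p$ does not obviously fit either of the two target terms. The resolution is precisely the calculation above, which shows that the very condition forcing $(n/p)^{1/p}$ to be large (namely $n\gg p\,e^{\alpha p}$) already makes the $\sqrt{pn}$ term dominant, so the two regimes in the $\max$ of the conclusion exactly cover the two endpoint estimates. The remaining steps are routine substitutions.
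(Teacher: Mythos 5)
Your proof is correct and follows essentially the same route as the paper: apply Latała's inequality, show the supremand is maximized at an endpoint of the interval, and then handle the $(n/p)^{1/p}p^\alpha$ contribution by observing that either $(n/p)^{1/p}\le e^\alpha$ or else $n$ is so large that $\sqrt{pn}$ dominates — your argument is the contrapositive of the one in the paper, which assumes the $p^\alpha$ term dominates and deduces $n^{1/p}\le e^\alpha$. One tiny imprecision: when $\alpha=1$ and $n>p$ there is no critical point ($\log h$ is then strictly decreasing), but the conclusion that the sup is attained at $s=2$ still holds, so nothing is affected.
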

We prove this in \cref{proof:latala-cor2}, but for now we focus on the proof of~\cref{thm:indeprows}.
\begin{proof}
   Without loss of generalization we may assume $\norm{x}2=1$.
   We notice that $\norm{\|Mx\|_2^2 - 1}p \le \norm{\tfrac1m\sum_i (M_i x)^2 - 1}p$
   is the mean of iid. random variables.
   Call these $Z_i = (M_ix)^2-1$.
   Then $EZ_i = 0$
   and $\norm{Z_i}p = \norm{(M_ix)^2-1}p \lesssim \norm{(M_ix)^2}p = \norm{M_ix}{2p}^2$ by symmetrization (see e.g.~\cite{jelnotes}).
   Now by the assumption $\norm{T_1x}p \le \sqrt{a p} \|x\|_2 = \sqrt{ap}$, and by \cref{lem:gen-khinchine}, we get that
   $\norm{M_ix}{p} = \norm{(T^{(1)}_i\otimes\dots\otimes T_i^{(c)})x}p \le (ap)^{c/2}$,
   and so $\norm{Z_i}p \le (2ap)^{c}$ for all $i\in[m]$.

   We now use \cref{cor:latala-cor2} which implies
   \begin{align}
      \norm{\frac1m\sum_i Z_i}p
      \le
      L \max\left\{(4a)^{c}\sqrt{p/m}, \, (2eap)^{c}/m\right\},
   \end{align}
   for some constant $L$.
   Taking $m = (Le)^2\max\{(4a)^{2c} \eps^{-2} \log1/\delta
   , (2ae)^{c} \eps^{-1} (\log1/\delta)^{c}\}$
   we get
   \begin{align}
      \norm{\frac1m\sum_i Z_i}p
      &\le \max\left\{
         \frac{\eps}{e}\sqrt{\frac{p}{\log1/\delta}}, \,
         \frac{\eps}{e^2}\left(\frac{p}{\log1/\delta}\right)^{c}
   \right\}
    \le \frac{\eps}{e}\sqrt{\frac{p}{\log1/\delta}},
   \end{align}
   where we used $c\ge1$ and $p\le\log1/\delta$.
\end{proof}

We show in~\cref{appendix:lowerbound-subgaussian} that the analysis is optimal up to constant factors.
This is harder than showing the upper bound, but for the particular case of $c=2$ the following simple argument gives the intuition:

Assume $M$ and $T$ are iid. Gaussian matrices and $x=e_1^{\otimes 2}$ were a simple tensor with a single 1 entry.
Then $|\|(M\bullet T)x\|_2^2-\|x\|_2^2| = |\|Mx'\circ Tx'\|_2^2-1| \sim |(gg')^2-1|$ for $g,g'\in R$ iid. Gaussians.
Now $\Pr[(gg')^2 > (1+\epsilon)] \approx \exp(-\min(\epsilon,\sqrt{\epsilon}))$,
thus requiring $m = \Omega(\epsilon^{-2}\log1/\delta + \epsilon^{-1}(\log1/\delta)^2)$
matching the upper bound.

\subsection{Fast Constructions}\label{sec:fastconst}

The construction with independent rows can be applied to order $c$ tensors in the time it takes to do $c$ matrix-vector multiplications.
The issue is that those each take time $md$.
For large $m$ and $d$ we would like to get this closer to the size of the input.
In this section we analyse an approach that takes just $m+d\log d$ time per matrix multiplication.

As mentioned we will analyse $SHD \bullet S'H'D' \bullet \dots \in \R^{m\times d^c}$ where
$S\in R^{m\times d}$ is a sampling matrix with one 1 per row; $H\in\R^{d\times d}$ is the Hadamard matrix defined as $\left[\begin{smallmatrix}1&1\\1&-1\end{smallmatrix}\right]^{\otimes k}$ when $d=2^k$; and $D\in\R^{d\times d}$ is a diagonal matrix with $\pm1$ entries.
Each of these allow fast matrix multiplication, which immediately gives that $SHD$ can be applied fast.

What is more surprising is that, given the diagonal of $D$ is a tensor product of shorter Rademacher vectors, the $SHD$ construction is particularly applicable to tensor sketching.
For an example of this, see \cref{fig:ftjlt} below.

\begin{figure}[h]
   \begin{mdframed}[backgroundcolor=yellow!5, roundcorner=20pt, innerbottommargin=15pt, innertopmargin=0]
   \begin{align}
      &SHD(x\otimes y)
      \\
      &\quad=
      \begin{bmatrix}
         1 & 0 & 0 & 0 \\
         0 & 0 & 1 & 0 \\
         0 & 1 & 0 & 0
      \end{bmatrix}
      \begin{bmatrix}
         1 & 1 & 1 & 1 \\
         1 & -1 & 1 & -1 \\
         1 & 1 & -1 & -1 \\
         1 & -1 & -1 & 1
      \end{bmatrix}
      \begin{bmatrix}
         \sigma_1 \rho_1 & 0 & 0 & 0 \\
         0 & \sigma_1 \rho_2 & 0 & 0 \\
         0 & 0 & \sigma_2 \rho_1 & 0 \\
         0 & 0 & 0 & \sigma_2 \rho_2 \\
      \end{bmatrix}
      \begin{bmatrix}
         x_1y_1 \\
         x_2y_1 \\
         x_1y_2 \\
         x_2y_2
      \end{bmatrix}
      \\[5pt]
      &\quad=
      \left(
      \begin{bmatrix}
         1 & 0 \\
         0 & 1 \\
         1 & 0
      \end{bmatrix}
      \bullet
      \begin{bmatrix}
         1 & 0 \\
         1 & 0 \\
         0 & 1
      \end{bmatrix}
      \right)
      \left(
      \begin{bmatrix}
         1 & 1 \\
         1 & -1
      \end{bmatrix}
      \otimes
      \begin{bmatrix}
         1 & 1 \\
         1 & -1
      \end{bmatrix}
      \right)
      \left(
      \begin{bmatrix}
         \sigma_1 & 0 \\
         0 & \sigma_2 \\
      \end{bmatrix}
      \otimes
      \begin{bmatrix}
         \rho_1 & 0 \\
         0 & \rho_2 \\
      \end{bmatrix}
      \right)
      \left(
      \begin{bmatrix}
         x_1 \\
         x_2
      \end{bmatrix}
      \otimes
      \begin{bmatrix}
         y_1 \\
         y_2
      \end{bmatrix}
      \right)
      \\[5pt]
      &\quad=
      \left(
      \begin{bmatrix}
         1 & 0 \\
         0 & 1 \\
         1 & 0
      \end{bmatrix}
      \bullet
      \begin{bmatrix}
         1 & 0 \\
         1 & 0 \\
         0 & 1
      \end{bmatrix}
      \right)
      \left(
      \begin{bmatrix}
         1 & 1 \\
         1 & -1
      \end{bmatrix}
      \begin{bmatrix}
         \sigma_1 & 0 \\
         0 & \sigma_2 \\
      \end{bmatrix}
      \begin{bmatrix}
         x_1 \\
         x_2
      \end{bmatrix}
      \,\otimes\,
      \begin{bmatrix}
         1 & 1 \\
         1 & -1
      \end{bmatrix}
      \begin{bmatrix}
         \rho_1 & 0 \\
         0 & \rho_2 \\
      \end{bmatrix}
      \begin{bmatrix}
         y_1 \\
         y_2
      \end{bmatrix}
      \right)
      \\[5pt]
      &\quad=
      \begin{bmatrix}
         1 & 0 \\
         0 & 1 \\
         1 & 0
      \end{bmatrix}
      \begin{bmatrix}
         1 & 1 \\
         1 & -1
      \end{bmatrix}
      \begin{bmatrix}
         \sigma_1 & 0 \\
         0 & \sigma_2 \\
      \end{bmatrix}
      \begin{bmatrix}
         x_1 \\
         x_2
      \end{bmatrix}
      \,\circ\,
      \begin{bmatrix}
         1 & 0 \\
         1 & 0 \\
         0 & 1
      \end{bmatrix}
      \begin{bmatrix}
         1 & 1 \\
         1 & -1
      \end{bmatrix}
      \begin{bmatrix}
         \rho_1 & 0 \\
         0 & \rho_2 \\
      \end{bmatrix}
      \begin{bmatrix}
         y_1 \\
         y_2
      \end{bmatrix}
      .
   \end{align}
   \end{mdframed}
   \caption{An example of splitting up Hadamard matrices and tensor-diagonal matrices for the
   Fast Tensor Johnson Lindenstrauss transformation using the mixed product property of tensor products.}
   \label{fig:ftjlt}
\end{figure}

The properties we will use are the following:
\begin{enumerate}
   \item If $S$ and $S'$ are iid. sampling matrices with independent rows, then $S\bullet S'$ is also a sampling matrix with independent rows.
      This follows because sampling a random value in $[d^2]$ can be decomposed as sampling $i_1,i_2\in[d]$ and then taking $i=i_1d+i_2$.
      This is exactly what happens when tensoring a row of $S$ with a row of $S'$.
   \item Since the Hadamard matrix $H$ of size $d$ equals $\left[\begin{smallmatrix}1&1\\1&-1\end{smallmatrix}\right]^{\otimes k}$ for some $k$ (remember we assume $d$ is a power of 2), we naturally have that $H\otimes H$ is the Hadamard matrix of size $d^2$.
   \item If $D$ and $D'$ are diagonal matrices with respectively $\sigma\in\R^d$ and $\rho\in\R^d$ on their diagonals, then it is easy to check that $D\otimes D'$ is a diagonal matrix with $\sigma\otimes\rho$ on its diagonal.
\end{enumerate}

From these facts we have that $SHD\bullet S'H'D' = S'' H D''$ which is exactly a Fast JL construction, except for $D$ now having a simple tensor on its diagonal instead of a fully independent Rademacher vector.
The theorem below will show that, up to some extra log factors, this is not a problem.

\begin{theorem}[Fast Tensor Johnson Lindenstrauss]\label{thm:tensorfastjl}
   Let $c \in \Z_{> 0}$, and $(D^{(i)})_{i \in [c]} \in \prod_{i \in [c]} \R^{d_i \times d_i}$
   be independent diagonal matrices with independent Rademacher variables. Define
   $d = \prod_{i \in [c]} d_i$ and $D = \bigotimes_{i \in [c]} D_i \in \R^{d\times d}$.
   Let $S \in \R^{m \times d}$ be an independent sampling matrix
   which samples exactly one coordinate per row.
   Let $x \in \R^d$ be any vector with $\|x\|_2=1$ and $p \ge 1$, then
   \begin{align}
      \norm{\tfrac{1}{m}\norm{SHDx}{2}^2 - 1}{p} \lesssim \sqrt{\frac{pq^c}{m}} + \frac{pq^c}{m} ,
   \end{align}
   where $q=\max\{p,\log m\}$.
   Setting $m = \epsilon^{-2}\log1/\delta(K' \log1/\epsilon\delta)^c$ for some universal constant $K'$,
   we get that $\frac1{\sqrt{m}}SHD$ satisfies the Strong JL Moment Property (\cref{defn:strong-jl-moment}).
\end{theorem}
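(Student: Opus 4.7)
The plan is to write $\|SHDx\|_2^2 = \sum_{i=1}^{m} Z_i$ with $Z_i = ((HDx)_{s_i})^2$, where $s_i\in[d]$ is the uniformly random coordinate sampled by the $i$-th row of $S$. Since $\|HDx\|_2^2 = d\|Dx\|_2^2 = d\|x\|_2^2 = d$, we have $\Epcond{Z_i}{D}=1$ identically, so $\frac{1}{m}\|SHDx\|_2^2 - 1 = \frac{1}{m}\sum_i(Z_i - 1)$. Conditionally on $D$ the $Z_i$ are iid non-negative, so \cref{lem:rosenthal-variant} applies to the conditional expectation and gives $\norm{\sum_i(Z_i-1)}{p,S}\lesssim \sqrt{pm}\,\norm{\max_i Z_i}{p,S}^{1/2} + p\,\norm{\max_i Z_i}{p,S}$. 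Taking $p$-th moments in $D$, and using $\norm{\norm{Y}{p,S}}{p,D}=\norm{Y}p$ (tower) together with Jensen $\norm{\norm{\max_i Z_i}{p,S}^{1/2}}{p,D}\le \norm{\max_i Z_i}p^{1/2}$, I would lift this to the unconditional bound
\begin{align*}
   \Bigl\|\tfrac{1}{m}\!\sum_i(Z_i-1)\Bigr\|_p
   \;\lesssim\; \sqrt{p/m}\,\norm{\max_i Z_i}{p}^{1/2} + \tfrac{p}{m}\,\norm{\max_i Z_i}{p}.
\end{align*}

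The remaining task is to bound $\norm{\max_i Z_i}p$. Since the maximum is only over $m$ sampled entries, the standard moment-of-max inequality gives $\norm{\max_{i\in[m]} Z_i}q \le m^{1/q}\max_i\norm{Z_i}q$; choosing $q=\max\{p,\log m\}$ makes $m^{1/q}\le e$, and $\norm{\cdot}p\le\norm{\cdot}q$. This is where the tensor structure of $D$ enters: for each row $h_j\in\{\pm1\}^d$ of $H$,
\begin{align*}
   (HDx)_j \;=\; \bigl\langle \sigma^{(1)}\otimes\cdots\otimes\sigma^{(c)},\; h_j\circ x\bigr\rangle,
\end{align*}
and $\|h_j\circ x\|_2=\|x\|_2=1$. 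The generalized Khintchine inequality (\cref{lem:gen-khinchine}) then yields $\norm{(HDx)_j}{2q}\lesssim (2q)^{c/2}$ uniformly in $j$, so averaging over the uniform choice of $s_i$,
\begin{align*}
   \norm{Z_i}q^{q} \;=\; \tfrac{1}{d}\sum_{j\in[d]} \Ep{|(HDx)_j|^{2q}} \;\lesssim\; (2q)^{cq},
\end{align*}
hence $\norm{Z_i}q\lesssim q^c$ and $\norm{\max_i Z_i}p\lesssim q^c$. Substituting into the display above gives exactly $\norm{\frac{1}{m}\|SHDx\|_2^2-1}p\lesssim \sqrt{pq^c/m} + pq^c/m$.

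For the Strong JL Moment Property I would set $m = \eps^{-2}(\log 1/\delta)(K'\log(1/\eps\delta))^{c}$ for a sufficiently large $K'$, whence $\log m\lesssim \log(1/\eps\delta)$ and therefore $q^c \lesssim (K'\log(1/\eps\delta))^c$ for all $2\le p\le \log1/\delta$. A direct substitution then shows $\sqrt{pq^c/m}\le \frac{\eps}{e}\sqrt{p/\log(1/\delta)}$, and the second term $pq^c/m\lesssim \eps^2 p/\log(1/\delta)$ is dominated by the first whenever $\eps\le 1$ and $p\le\log1/\delta$ (their ratio is $\eps\sqrt{p/\log(1/\delta)}\le 1$). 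The main obstacle I anticipate is the clean conditional-to-unconditional lift for Rosenthal — dependencies between $Z_i$ come only through $D$, and the Jensen step above handles this — and, once that is in place, invoking \cref{lem:gen-khinchine} is what lets the tensor-structured diagonal $D$ be treated as effectively a fully independent Rademacher diagonal at the cost of a factor $q^c$ in the moment bound.
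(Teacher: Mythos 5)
Your proposal is correct and follows essentially the same route as the paper's proof: split the squared norm into the $m$ conditionally-iid sampled terms, apply the Rosenthal-type inequality (\cref{lem:rosenthal-variant}) conditionally on $D$, lift to the unconditional $p$-norm via the tower rule (you use Jensen's inequality where the paper uses Cauchy--Schwarz, but both yield the same $\norm{\max_i Z_i}{p}^{1/2}$ factor since $\sum_i \Epcond{Z_i}{D}=m$ is deterministic), bound the max by an $m^{1/q}$-weighted $q$-norm with $q=\max\{p,\log m\}$, and invoke the generalized Khintchine inequality (\cref{lem:gen-khinchine}) after observing that $(HDx)_j=\langle\sigma^{(1)}\otimes\cdots\otimes\sigma^{(c)},h_j\circ x\rangle$ with $\|h_j\circ x\|_2=\|x\|_2$. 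Your exposition of the last step is in fact somewhat cleaner than the paper's (which compresses it to $\norm{H_{S_i}Dx}{2p}=\norm{Dx}{2p}$), and the hidden $2^c$ Khintchine constant you implicitly absorb is handled identically in the paper by the universal $K'^c$ in the final choice of $m$.
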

Note that setting $c=1$, the analysis is very similar to the Fast Johnson Lindenstrauss analysis in~\cite{cohen2016optimal, jelnotes}.

\begin{proof}[Proof of \Cref{thm:tensorfastjl}]
    For every $i \in [m]$ we let $S_i$ be the random variable that says which
    coordinate the $i$'th row of $S$ samples, and we define the random variable
    $Z_i = M_i x = H_{S_i} D x$. We note that since the variables $(S_i)_{i \in [m]}$
    are independent then the variables $(Z_i)_{i \in [m]}$ are conditionally
    independent given $D$, that is, if we fix $D$ then $(Z_i)_{i \in [m]}$ are
    independent.

    We use \Cref{lem:rosenthal-variant}, the triangle inequality, and Cauchy-Schwartz to get that
    \begin{align}\label{eq:FJLT-proof-eq}
        &\norm{\tfrac{1}{m}\sum_{i \in [m]} Z_i^2 - 1}{p}
            \\&\quad= \norm{\Epcond{\left(\tfrac{1}{m}\sum_{i \in [m]} Z_i^2 - 1\right)^p}{D}^{1/p}}{p} \label{eq:tfjl-cond}
              \\&\quad\lesssim \frac{1}{m}\norm{\sqrt{p}\, \Epcond{\left(\max_{i \in [m]} Z_i^2 \right)^{p}}{D}^{1/(2p)} \sqrt{\sum_{i \in [m]}\Epcond{Z_i^2}{D}} 
                   + p\, \Epcond{\left(\max_{i \in [m]} Z_i^2 \right)^{p}}{D}^{1/p}}{p} \label{eq:tfjl-lem5}
            \\&\quad\le \frac{\sqrt{p}}{m} \norm{\Epcond{\left(\max_{i \in [m]} Z_i^2 \right)^{p}}{D}^{1/(2p)} \sqrt{\sum_{i \in [m]}\Epcond{Z_i^2}{D}}}{p}
                + \frac{p}{m}\norm{\max_{i \in [m]} Z_i^2}{p} \label{eq:tfjl-tri}
            \\&\quad\le \frac{\sqrt{p}}{m} \norm{\max_{i \in [m]} Z_i^2}{p}^{1/2} \norm{\sum_{i \in [m]}\Epcond{Z_i^2}{D}}{p}^{1/2}
                + \frac{p}{m}\norm{\max_{i \in [m]} Z_i^2}{p} \label{eq:tfjl-cs}
        \; .
    \end{align}
    Here \cref{eq:tfjl-cond} follows from the definition of the $p$-norm as well as the law of total expectation, \cref{eq:tfjl-lem5} follows from \cref{lem:rosenthal-variant}, \cref{eq:tfjl-tri} is the triangle inequality, and \cref{eq:tfjl-cs} uses Cauchy Schwarz: $\norm{AB}p \le \norm{A}{2p}\norm{B}{2p}$ as well as a few manipulations of norms and powers.
 
    By orthogonality of $H$ we have $\norm{HDx}{2}^2 = d\norm{x}{2}^2$ independent of $D$.
    Hence 
    \[
        \sum_{i \in [m]} \Epcond{Z_i^2}{D} = \sum_{i \in [m]} \norm{x}{2}^2 = m
        \; .
    \]
    To bound $\norm{\max_{i \in [m]} Z_i^2}{p}$ we first use \cref{lem:gen-khinchine} to show
    \begin{align}
        \norm{Z_i^2}{p}
            = \norm{H_{S_i} D x}{2p}^2
            = \norm{D x}{2p}^2
            \le p^{k} \norm{x}{2}^2
        \; .
    \end{align}
    We then bound the maximum using a sufficiently high powered sum:
    \begin{align}
        \norm{\max_{i \in [m]} Z_i^2}{p}
            \le \norm{\max_{i \in [m]} Z_i^2}{q}
            \le \left(\sum_{i \in [m]} \norm{Z_i^2}{q}^q\right)^{1/q}
            \le m^{1/q} q^{c} \norm{x}{2}^2
            \le e q^{c}
        \; ,
    \end{align}
    where the last inequality follows from $q \ge \log m$.
    This gives us that
    \[
        \norm{\tfrac{1}{m}\sum_{i \in [m]} Z_i^2 - \norm{x}{2}^2}{p}
            \lesssim \sqrt{\frac{pq^c}{m}} + \frac{pq^c}{m}
        \; ,
    \]
    which finishes the first part of the proof.

    To show the Strong JL Moment Property we choose $q=2e \log m/\delta$
    and $m$ such that $m=K\eps^{-2}(\log 1/\delta)q^c \lesssim \eps^{-2}(\log 1/\delta)(K'\log 1/\eps\delta)^c$ for some universal constants $K$ and $K'$.
    Then
    \begin{align}
       \sqrt{\frac{pq^c}{m}} + \frac{pq^c}{m}
       \le \eps \left(\sqrt{\frac{p}{K\log 1/\delta}} + \frac{\eps p}{K\log 1/\delta}\right)
       \le \frac{\eps}{e} \sqrt{\frac{p}{\log 1/\delta}}
    \end{align}
    for $p\le \log1/\delta$.
\end{proof}

\section{The High Probability Tensor Sketch}

In the previous section we assumed that a tensor sketch had to look like $M = M^{(1)} \bullet M^{(2)} \bullet \dots \bullet M^{(c)}$ for some matrices $M^{(i)}$,
such that $M(x\otimes y\otimes\dots) = M^{(1)}x \circ M^{(2)}y \circ\dots$.
This worked well when $c$ was constant, but as we saw it produced a matrix with a number of rows exponential in $c$.

In this section we instead consider a ``sketch and reduce'' approach.
We'd like to have
\begin{align}
   M(x\otimes y \otimes z \dots)
   =
   M^{(1)}(x \otimes M^{(2)}(y\otimes M^{(3)}(z \otimes \dots)))
\end{align}
where we assume the vectors are already sufficiently reduced.
If $M^{(i)}$ are good tensor sketches for $c=2$ this combination should be fast and succinct.
The matrix $M$ that expands as above on tensors is
\begin{align}
   M = M^{(c)}(M^{(c - 1)} \otimes I_{d})(M^{(c-2)} \otimes I_{d^2})\ldots(M^{(1)} \otimes I_{d^{c - 1}}),
\end{align}
or more formally $M=Q^{(c)}$ where $Q^{(0)} = 1 \in \R$ and recursively $Q^{(i)} = M^{(i)}(Q^{(i - 1)} \otimes I_d) \in \R^{m\times d^i}$.

In this section we will prove \cref{thm:gen-sketch} using this construction.
We do this in two steps: 1) We show that each of $M^{(c - i)} \otimes I_{d^i}$ has the \sjlmp\eps\delta and 2) That the product of such matrices have it.

The first step follows rather easily from the following lemma:
\begin{lemma}\label{lem:jl-sum}
   Let $\eps \in (0, 1)$, $\delta > 0$ and $p\ge 0$.
   If $P \in \R^{m_1 \times d_1}$ and $Q \in \R^{m_2 \times d_2}$ are two random matrices (not necessarily independent)
   such that $\norm{\norm{Px}2^2-1}p$ and $\norm{\norm{Qx}2^2-1}p$ are both upper bounded by $\kappa$ for all $x\in\R^{d_1}$ (resp. $x\in\R^{d_2}$), $\norm{x}2=1$;
   then $\norm{(P \oplus Q)x}p \le \kappa$
\end{lemma}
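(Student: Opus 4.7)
The plan is to reduce to the unit-vector case via a simple convexity argument and then apply Minkowski's inequality in the random-variable $p$-norm. First I would take an arbitrary unit vector $x \in \R^{d_1 + d_2}$ and decompose it along the block structure as $x = u \oplus v$ with $u \in \R^{d_1}$ and $v \in \R^{d_2}$. Setting $\alpha = \norm{u}2^2$ and $\beta = \norm{v}2^2$ gives $\alpha + \beta = 1$. Because $P \oplus Q$ is block-diagonal, $(P \oplus Q)x = Pu \oplus Qv$, and hence $\norm{(P\oplus Q)x}2^2 = \norm{Pu}2^2 + \norm{Qv}2^2$.

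Next, subtracting $1 = \alpha + \beta$ and pulling out the scalar factors by the positive homogeneity of the norm yields
\begin{align*}
   \norm{(P\oplus Q)x}2^2 - 1 \;=\; \alpha\left(\norm{P\hat u}2^2 - 1\right) + \beta\left(\norm{Q\hat v}2^2 - 1\right),
\end{align*}
where $\hat u = u/\norm{u}2$ and $\hat v = v/\norm{v}2$ are unit vectors in $\R^{d_1}$ and $\R^{d_2}$ respectively (with the convention that the corresponding term vanishes when $u$ or $v$ is zero, so the degenerate cases are handled for free).

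Finally, I would apply Minkowski's inequality to the random-variable $p$-norm together with the hypothesis, term by term:
\begin{align*}
   \norm{\norm{(P\oplus Q)x}2^2 - 1}p
   \;\le\; \alpha\,\norm{\norm{P\hat u}2^2 - 1}p + \beta\,\norm{\norm{Q\hat v}2^2 - 1}p
   \;\le\; (\alpha + \beta)\kappa = \kappa.
\end{align*}
Crucially, no independence between $P$ and $Q$ is needed: Minkowski's inequality in the $p$-norm is valid for arbitrary joint distributions, so the bound propagates through regardless of how $P$ and $Q$ are correlated. There really is no main obstacle here; the only things to be careful about are the bookkeeping of the degenerate case $\alpha = 0$ or $\beta = 0$ and the fact that we need $p \ge 1$ for Minkowski, which is harmless since the intended downstream use has $p \ge 2$.
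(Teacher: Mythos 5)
Your argument is essentially identical to the paper's: both decompose $x$ along the block structure, observe $\norm{(P\oplus Q)x}2^2 = \norm{Pu}2^2 + \norm{Qv}2^2$, apply the triangle (Minkowski) inequality for the $p$-norm, and then use the hypothesis with the scaling $\norm{\norm{Pu}2^2 - \norm{u}2^2}p = \norm{u}2^2\,\norm{\norm{P\hat u}2^2 - 1}p \le \kappa\norm{u}2^2$. You simply make the normalization step explicit where the paper leaves it implicit, and you correctly flag that Minkowski requires $p\ge 1$ (the lemma's stated ``$p\ge 0$'' is looser than what the proof actually needs, but this is harmless for the intended use).
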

\begin{proof}
   Let $x \in \R^{d_1 + d_2}$ and choose $y \in \R^{d_1}$ and $z \in \R^{d_2}$
   such that $x = y \oplus z$.
   Now using the triangle inequality and, we get that
   \begin{align}
      \norm{\norm{(P \oplus Q)x}{2}^2 - \norm{x}{2}^2}{p}
         &=\norm{\norm{Py}{2}^2 + \norm{Pz}{2}^2 - \norm{y}{2}^2 - \norm{z}{2}^2}{p}
         \\&\le \norm{\norm{Py}{2}^2 - \norm{y}{2}^2}{p} + \norm{\norm{Qz}{2}^2 - \norm{z}{2}^2}{p}
         \\&\le \kappa\norm{y}{2}^2 + \kappa\norm{z}{2}^2
         \\&= \kappa\norm{y\oplus z}{2}^2
         ,
   \end{align}
   which is what we want, since $y\oplus z=x$.
\end{proof}

An easy consequence of this lemma is that for any matrix $Q$ with the
\sjlmp{\eps}{\delta}, then $I_\ell \otimes Q$ and $Q\otimes I_\ell$ have the \sjlmp{\eps}{\delta} too.
This follows simply from
$
   I_\ell \otimes Q = \underbrace{Q \oplus Q \oplus \ldots \oplus Q}_{\ell\text{ times}}
$ and the fact that you can obtain $Q \otimes I_\ell$ by reordering the rows and columns of $I_\ell \otimes Q$,
which does not change the JL moment property since it corresponds to permutations of the input or output vectors.


We continue to show a ``product lemma'' for JL matrices, to complement the one we just proved for sums.
\begin{lemma}\label{lem:jl-product}
    There exists a universal constant $L$,
    such that for any $\eps, \delta \in [0,1]$,
    if
    $
        M^{(1)} \in \R^{d_2 \times d_1}, \ldots,
        M^{(c)} \in \R^{d_{c+1} \times d_c}
    $
    are independent random matrices with the Strong
    $(\eps/(L\sqrt{k}), \delta)$-JL Moment Property,
    then the matrix $M = M^{(c)} \cdots M^{(1)}$
    has the \sjlmp{\eps}{\delta}.
\end{lemma}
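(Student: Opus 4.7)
The strategy is to translate the product of matrices into a martingale difference sum and then expand that sum via the multilinear identity $\prod_i(1+\xi_i) - 1 = \sum_k e_k(\xi)$ into elementary symmetric polynomials of the per-step errors, bounding each $e_k$ by a higher-order Khintchine-type inequality. Fix a unit vector $x$, define $y_0 = x$ and $y_i = M^{(i)} y_{i-1}$, so that $Mx = y_c$, and set $\xi_i := \norm{M^{(i)}(y_{i-1}/\norm{y_{i-1}}{2})}{2}^2 - 1$. With respect to the filtration $\mathcal F_i := \sigma(M^{(1)}, \ldots, M^{(i)})$, the unit vector $y_{i-1}/\norm{y_{i-1}}{2}$ is $\mathcal F_{i-1}$-measurable and $M^{(i)}$ is independent of $\mathcal F_{i-1}$, so the \sjlmp{\eps/(L\sqrt{c})}{\delta} of $M^{(i)}$ yields $\Epcond{\xi_i}{\mathcal F_{i-1}} = 0$ and the conditional moment bound $\Epcond{|\xi_i|^p}{\mathcal F_{i-1}}^{1/p} \le (\eps/(eL\sqrt{c}))\sqrt{p/\log 1/\delta}$ for every $p \in [2, \log 1/\delta]$. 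In particular, the $\xi_i$'s form a conditionally sub-Gaussian martingale difference sequence with uniform parameter $b := \eps/(eL\sqrt{c\log 1/\delta})$.

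Using $\norm{Mx}{2}^2 = \prod_{i=1}^c (1 + \xi_i)$, I would write
\begin{align*}
   \norm{Mx}{2}^2 - 1 = \sum_{k=1}^c e_k(\xi_1, \ldots, \xi_c),
\end{align*}
where $e_k$ is the $k$-th elementary symmetric polynomial of $\xi$, and then prove a higher-order Khintchine-type estimate
\begin{align*}
   \norm{e_k(\xi)}{p} \le \br{C\,b\,\sqrt{pec/k}}^k
\end{align*}
for some universal constant $C$. The intended derivation combines the general decoupling result (\cref{lem:gen_decoup}) --- used to replace the martingale sequence $(\xi_i)$ by conditionally independent copies with matching conditional distributions --- with $k$ iterated applications of the generalized Khintchine inequality (\cref{lem:gen-khinchine}) on the resulting decoupled $k$-linear form. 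The factor $\binom{c}{k}^{1/2} \le (ec/k)^{k/2}$ arises as the Euclidean mass of the symmetric indicator coefficient tensor supported on $k$-subsets of $[c]$.

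Substituting $b$ yields $\norm{e_k(\xi)}{p} \le \br{(C\eps/(eL))\sqrt{pe/(k\log 1/\delta)}}^k$. Using $\eps \le 1$ and $p \le \log 1/\delta$, the ratio of consecutive terms in $k$ is at most $C\sqrt{e}/L$, which is bounded by $1/2$ once $L$ is a sufficiently large universal constant. Consequently the series converges geometrically, and $\sum_k \norm{e_k(\xi)}{p} \le 2\,\norm{e_1(\xi)}{p} \le (2C\sqrt{e}/L)(\eps/e)\sqrt{p/\log 1/\delta}$. For $L$ large enough the right-hand side is at most $(\eps/e)\sqrt{p/\log 1/\delta}$, which is exactly the \sjlmp{\eps}{\delta} for $M = M^{(c)}\cdots M^{(1)}$.

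The main technical obstacle is the higher-order Khintchine bound on $\norm{e_k(\xi)}{p}$. The generalized Khintchine inequality (\cref{lem:gen-khinchine}) is formulated for tensor products of \emph{independent} random vectors, whereas the $\xi_i$'s here are only conditionally independent along the filtration, and their conditional distribution depends on $y_{i-1}/\norm{y_{i-1}}{2}$, which is itself a function of earlier $\xi_j$'s. Closing this gap through iterated applications of \cref{lem:gen_decoup} is delicate: one must verify that each decoupling step only costs a universal multiplicative constant, so that the cumulative loss after $k$ levels is $C^k$ with $C$ free of $c$, $k$, and $\delta$. Without this step, the naive triangle-inequality bound applied to $e_k$ would produce a factor $\binom{c}{k}$ rather than $\binom{c}{k}^{1/2}$, ultimately forcing $L$ to scale with $\sqrt{c}$ and defeating the entire lemma.
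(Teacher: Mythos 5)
Your route is genuinely different from the paper's. The paper writes $\norm{Mx}{2}^2-1$ as an \emph{additive} telescoping sum $\sum_{i} Y_i$ with $Y_i=X_i-X_{i-1}$ and $X_i=\norm{M^{(i)}\cdots M^{(1)}x}{2}^2$, then runs a single induction on the number of factors: at each step it applies \cref{lem:gen_decoup} once, conditions, applies Khintchine once, and uses the induction hypothesis $\norm{X_j}{p}\le 2$ to bound the resulting quadratic variation by $2\sqrt{c}$. You instead use the \emph{multiplicative} decomposition $\norm{Mx}{2}^2=\prod_i(1+\xi_i)$, expand into elementary symmetric polynomials $e_k(\xi)$, bound each degree-$k$ term by a higher-order chaos estimate, and sum a geometric series. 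Both hinge on the same two tools (decoupling of a martingale-difference sum, then conditional Khintchine), but you invoke them $k$ times per degree rather than once per step of an induction on $c$. What your approach buys is a cleaner normalization ($\xi_i$ is an honest sub-Gaussian martingale difference with a uniform parameter $b$, with no $X_{i-1}$ factor to drag through the argument) and a more transparent reason for the $\sqrt{c}$ loss (it is exactly $\binom{c}{1}^{1/2}$ from the degree-one term, with all higher degrees absorbed geometrically). What the paper's approach buys is avoiding the degree-$k$ moment bound entirely.

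The gap you flag is real but closeable, and in fact it closes by exactly the mechanism the paper uses, iterated over $k$. Write $e_k(\xi_{1:j})=\sum_{\ell=k}^{j}\xi_\ell\,e_{k-1}(\xi_{1:\ell-1})$; this is a martingale-difference sum in the filtration $\mathcal F_\ell$ with $\mathcal F_{\ell-1}$-measurable coefficients, so one application of \cref{lem:gen_decoup} and one conditional Khintchine give
\begin{align}
\norm{e_k(\xi_{1:j})}{p}\le C_0C_1\,b\sqrt{p}\Bigl(\sum_{\ell=k}^{j}\norm{e_{k-1}(\xi_{1:\ell-1})}{p}^2\Bigr)^{1/2},
\end{align}
and the hockey-stick identity $\sum_{\ell=k}^{j}\binom{\ell-1}{k-1}=\binom{j}{k}$ closes the induction on $k$ with constant $(C_0C_1)^k$. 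So you do not need a fully general tetrahedral-decoupling theorem for martingale chaoses, nor a single ``all-at-once'' decoupling of the $k$-linear form; the one-level result \cref{lem:gen_decoup} suffices when iterated through the degree. One small caveat: \cref{lem:gen-khinchine} as stated assumes independent vectors, so each Khintchine application has to be the conditional, sub-Gaussian variant mentioned in the footnote to \cref{lem:khintchine}, followed by the $L^{p/2}$ triangle inequality to pull the $p$-norm inside the square root --- exactly as in the paper's inductive step. With that filled in, your constants and the geometric-series argument check out, and the proposal is a valid alternative proof.
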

\begin{proof}
    Let $x \in \R^d$ be an arbitrary, fixed unit vector, and fix $1 < p \le \log(1/\delta)$.
    We define $X_i = \norm{M^{(i)} \cdots M^{(1)}x}{2}^2$ and
    $Y_i = X_i - X_{i - 1}$ for every $i \in [k]$. By telescoping we then have that
    $X_k - 1 = \sum_{i \in [k]} Y_i$.

    We will prove by induction on $k\in[c]$ that
    \begin{align}\label{eq:jl-moment-prod-induc}
        \norm{\sum_{i \in [k]} Y_i}{p}
            \le \frac{\eps}{e}\sqrt{\frac{t}{\log(1/\delta)}}
            \le 1
            \quad\text{for all $k\in[c]$.}
    \end{align}
    Intuitively this should be true, since $Y_i$'s are a sub-Gaussian Martingale difference sequence, however the lack of a moment generating function makes it more tricky to show.
    Note that since $\sum_{i \in [c]} Y_i = X_c-1 = \norm{Mx}2^2-1$, this is exactly the statement that $M$ has the \sjlmp{\eps}{\delta}.

    Induction start:
    For $i = 1$ we have $Y_1 = X_1 - X_0 = \norm{M^{(1)}x}{2}^2 - \norm{x}{2}^2$.
    Now as $M^{(1)}$ has the \sjlmp{\eps/(L\sqrt{k})}{\delta} we get that
    \begin{align}
        \norm{\sum_{i \in [1]} Y_i}{p} =
        \norm{\norm{M^{(1)}x}{2}^2 - 1}{p}
            \le \frac{\eps}{e L \sqrt{k}} \sqrt{\frac{p}{\log(1/\delta)}}
            \le \frac{\eps}{e}\sqrt{\frac{p}{\log(1/\delta)}}
        .
    \end{align}

    For the induction step we introduce $(T^{(i)})_{i \in [k]}$ as
    independent copies of $(M^{(i)})_{i \in [k]}$ and define
    $
        Z_i
        = \norm{T^{(i)} M^{(i - 1)} \ldots M^{(1)}x}{2}^2 - \norm{M^{(i - 1)} \ldots M^{(1)}x}{2}^2
    $
    for every $i\in[k]$.
    We can verify the following three properties:
    \begin{enumerate}
        \item $\Prpcond{Z_k > t}{(M^{(j)})_{j \in [k - 1]}} = \Prpcond{Y_k > t}{(M^{(j)})_{j \in [k - 1]}}$ for every $t \in \R, k \in [c]$.
        \item The sequence $(Z_i)_{i \in [c]}$ is conditionally independent given $(M^{(i)})_{i \in [c]}$.
        \item $\Prpcond{Z_k > t}{(M^{(i)})_{i \in [c - 1]}} = \Prpcond{Z_k > t}{(M^{(i)})_{i \in [c]}}$ for every $t \in \R, k \in [c]$.
    \end{enumerate}
    This means we can use \Cref{lem:gen_decoup} to get
    \begin{align}\label{eq:jl-prod-decoup}
       \norm{\sum_{i \in [k]} Y_i}{p} \le C_0\norm{\sum_{i \in [k]} Z_i}{p}
        \quad\text{for every $i \in [c]$}
    \end{align}
    and so it suffices to show \cref{eq:jl-moment-prod-induc} on $\sum_{i \in [k]} Z_i$ which is somewhat more well behaved than $\sum_{i \in [k]} Y_i$.

    Now assume that \eqref{eq:jl-moment-prod-induc} is true for $i - 1$. Using \eqref{eq:jl-prod-decoup}
    we get that $\norm{X_i - 1}{p} = \norm{\sum_{j \in [i]} Y_j}{p} \le C_0 \norm{\sum_{j \in [i]} Z_j}{p}$.
    By using that $(T^{(j)})_{j \in [i]}$ has
    the Strong $(\eps/(L\sqrt{k}), \delta)$-JL Moment Property together with Khintchine's inequality
    (\Cref{lem:khintchine}), we get that
    \begin{align}
        \norm{\sum_{j \in [i]} Z_j}{p}
            &= \norm{\Epcond{\left( \sum_{j \in [i]} Z_j \right)^p}{(M^{(j)})_{j \in [i]}}^{1/p}}{p}
            \\&\le C_1 \norm{\frac{\eps}{e L \sqrt{k}}\sqrt{\frac{p}{\log(1/\delta)}}\sqrt{\sum_{j \in [i]} X_j^2 }}{p}
            \\&\le C_1 \frac{\eps}{e} \sqrt{\frac{p}{\log(1/\delta)}} \,\frac{1}{L \sqrt{k}} \sqrt{\sum_{j \in [i]} \norm{X_j}{p}^2}
        ,
    \end{align}
    where the last inequality follows from the triangle inequality.
    Using the triangle inequality and \eqref{eq:jl-moment-prod-induc} we have
    $\norm{X_i}{p}
            \le 1 + \norm{X_k - 1}{p}
            = 1 + \norm{\sum_{i\in[k]}Y_i}{p}
            \le 2
            $ by the induction hypothesis.
    Setting $L = 2C_0C_1$ we get that
    \begin{align}
        \norm{\sum_{i \in [k]} Y_i}{p}
            &\le \frac{\eps}{e} \sqrt{\frac{p}{\log(1/\delta)}} \,\frac{C_0 C_1}{L \sqrt{k}} \sqrt{\sum_{i \in [k]} \norm{X_i}{p}^2}
            \\&\le \frac{\eps}{e}\sqrt{\frac{p}{\log(1/\delta)}} \,\frac{C_0 C_1}{L \sqrt{k}} \,2\sqrt{k}
            \\&\le \frac{\eps}{e}\sqrt{\frac{p}{\log(1/\delta)}}
        \; ,
    \end{align}
    which finishes the induction.
\end{proof}

Combining the two lemmas finally gives \cref{thm:gen-sketch}.


\section{Oblivious Subspace Embedding}\label{sec:techniques}
In this last section of paper we show how to get an Oblivious Subspace Embedding
using our results. First we formally define Oblivious Subspace Embedding.

\begin{definition}[$\eps$-Subspace embedding]
   $M\in \R^{k\times D}$ is a subspace embedding for $\Lambda\subseteq \R^D$
   if for any $x\in\Lambda$,
   \begin{align}
      |\|Mx\|_2 - \|x\|_2| \le \eps\|x\|_2.
   \end{align}
\end{definition}

\begin{definition}[$(\lambda,\eps,\delta)$-Oblivious Subspace Embedding]
   \label{defn:ose}
   A distribution, $\mathcal M$, over $R^{m\times D}$ matrices is a $(D,\lambda)$-Oblivious Subspace Embedding if for any linear subspace, $\Lambda \subseteq \R^{D}$, of dimension $\lambda$,
   $M\sim\mathcal M$ is an $\eps$-subspace embedding for $\Lambda$ with probability at least $1-\delta$.
\end{definition}

We show two different ways to obtain an Oblivious Subspace Embedding. The first approach
is using Approximate Matrix Multiplication, which is formally defined below, and the
approach is using a standard net-argument.

\begin{definition}[$(\eps,\delta)$-Approximate Matrix Multiplication]
   We say a distribution over random matrices $M\in \R^{k\times d}$ has the $(\eps,\delta)$-Approximate Matrix Multiplication property if for any matrices $A,B$ with proper dimensions,
   \begin{align}
      \|\|A^TM^TMB-A^TB\|_F\|_{p} \le \eps\delta^{1/p}\|A\|_F\|B\|_F.
   \end{align}
\end{definition}

We will show that the JL-moment property implies Approximate Matrix Multiplication.
To do this we first show that the JL-moment property implies concentration on the inner product
of two vectors.
\begin{lemma}[Two vector JL-moment property]
   \label{lem:bamm}
   For any $x,y\in \R^d$, if $S$ has the $(\eps,\delta)$-JL-moment-property, then also
   \begin{align}
      \| (Sx)^T(Sy) - x^Ty \|_p \le \eps\delta^{1/p}\|x\|_2\|y\|_2
      \label{eq:better-mat-mul}
   \end{align}
\end{lemma}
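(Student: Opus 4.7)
The natural route is polarization. The identity $4\,u^Tv = \|u+v\|_2^2 - \|u-v\|_2^2$ applied both to $(Sx,Sy)$ and to $(x,y)$ gives
\begin{align}
4\bigl((Sx)^T(Sy) - x^Ty\bigr)
   = \bigl(\|S(x+y)\|_2^2 - \|x+y\|_2^2\bigr) - \bigl(\|S(x-y)\|_2^2 - \|x-y\|_2^2\bigr).
\end{align}
So the first step is to reduce the inner-product deviation to two instances of the squared-norm deviation controlled by the JL-moment property.

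Next I would apply the triangle inequality for $\|\cdot\|_p$ to the displayed equation, and then invoke the $(\eps,\delta)$-JL-moment property on each of the vectors $x\pm y$. The JL-moment property is stated for unit vectors, but by homogeneity it extends to arbitrary $u$ as $\bigl\|\|Su\|_2^2 - \|u\|_2^2\bigr\|_p \le \eps\delta^{1/p}\|u\|_2^2$. This yields
\begin{align}
4\,\bigl\|(Sx)^T(Sy) - x^Ty\bigr\|_p
   \;\le\; \eps\delta^{1/p}\bigl(\|x+y\|_2^2 + \|x-y\|_2^2\bigr)
   \;=\; 2\eps\delta^{1/p}\bigl(\|x\|_2^2 + \|y\|_2^2\bigr),
\end{align}
using the parallelogram identity for the last equality.

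Finally, to tighten $\|x\|_2^2+\|y\|_2^2$ down to $2\|x\|_2\|y\|_2$, I use a simple rescaling trick. Both sides of the target bound \eqref{eq:better-mat-mul} are invariant under the substitution $(x,y)\mapsto(\alpha x,\alpha^{-1}y)$ for any $\alpha>0$. Applying the inequality above to this rescaled pair gives
\begin{align}
\bigl\|(Sx)^T(Sy) - x^Ty\bigr\|_p
   \;\le\; \tfrac{\eps\delta^{1/p}}{2}\bigl(\alpha^2\|x\|_2^2 + \alpha^{-2}\|y\|_2^2\bigr),
\end{align}
and choosing $\alpha^2 = \|y\|_2/\|x\|_2$ (assuming $x,y\neq 0$; the zero case is trivial) makes the right-hand side equal to $\eps\delta^{1/p}\|x\|_2\|y\|_2$ by AM-GM, which is exactly the claim.

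I don't expect any real obstacle here: the only thing to be careful about is not losing a constant factor in the polarization step, which is precisely why the rescaling trick is needed. Everything else is direct from the JL-moment property and the triangle inequality in $L_p$.
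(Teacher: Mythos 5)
Your proof is correct and follows essentially the same route as the paper: polarization, triangle inequality in $L_p$, the JL-moment property applied to $x\pm y$, and the parallelogram identity. The only cosmetic difference is that you handle the scaling via the $\alpha$-rescaling trick at the end, whereas the paper simply normalizes $\|x\|_2 = \|y\|_2 = 1$ at the outset — these exploit the same bi-homogeneity and are interchangeable.
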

\begin{proof}
   We can assume by linearity of the norms that $\|x\|_2=\|y\|_2=1$.
   We then use that $\|x-y\|^2_2 = \|x\|^2_2 + \|y\|^2_2 - 2x^Ty$
   and $\|x+y\|^2_2 = \|x\|^2_2 + \|y\|^2_2 + 2x^Ty$
   such that $x^Ty = (\|x+y\|^2_2-\|x-y\|^2_2)/4$.

   Plugging this into the left hand side of \cref{eq:better-mat-mul} gives
   \begin{align}
      \| (Sx)^T(Sy) - x^Ty \|_p
        &=
        \big\| \|Sx+Sy\|^2_2 - \|x+y\|^2_2 - \|Sx-Sy\|^2_2 + \|x-y\|^2_2 \big\|_p/4
      \\&\le
      (\big\| \|S(x+y)\|^2_2 - \|x+y\|^2_2\big\|_p + \big\|\|S(x-y)\|^2_2 - \|x-y\|^2_2 \big\|_p)/4
      \\&\le
      \eps\delta^{1/p} (\|x+y\|^2_2+\|x-y\|^2_2)/4 \quad\text{(JL property)}
      \\&=
      \eps\delta^{1/p} (\|x\|^2_2+\|y\|^2_2)/2
      \\&=
      \eps\delta^{1/p}
      .
   \end{align}
\end{proof}

Now we can prove the following lemma which was first proved in~\cite{kane2014sparser} theorem 6.2, see also~\cite{DBLP:journals/fttcs/Woodruff14}.
\begin{lemma}[JL implies Approximate Matrix Multiplication]
   \label{lemma:amm}
   Any distribution that has the $(\eps,\delta,p)$-JL-moment-property
   has the $(\eps,\delta,p)$-Approximate Matrix Multiplication property.
\end{lemma}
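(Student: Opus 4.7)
The plan is to expand the Frobenius norm entrywise and reduce the matrix-multiplication bound to the two-vector bound of \cref{lem:bamm} applied to each pair of columns, then sum the per-entry contributions via the triangle inequality at the $p/2$ level.

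Write $A = [A_1\,|\,\dots\,|\,A_n]$ and $B = [B_1\,|\,\dots\,|\,B_k]$ in terms of their columns. The $(i,j)$ entry of $A^TM^TMB - A^TB$ is exactly $(MA_i)^T(MB_j) - A_i^TB_j$, so
\begin{align}
   \norm{A^TM^TMB - A^TB}{F}^2
   = \sum_{i\in[n],\,j\in[k]} \left((MA_i)^T(MB_j) - A_i^TB_j\right)^2.
\end{align}
The next step is to use the identity $\norm{Z^{1/2}}{p} = \norm{Z}{p/2}^{1/2}$ valid for any nonnegative random variable $Z$, which converts the Frobenius $p$-norm into a $p/2$-norm of a sum of nonnegative terms.

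Assuming $p\ge 2$ so that $p/2\ge 1$, the triangle (Minkowski) inequality in $L_{p/2}$ then gives
\begin{align}
   \norm{\norm{A^TM^TMB - A^TB}{F}}{p}^2
   \le \sum_{i,j} \norm{\left((MA_i)^T(MB_j) - A_i^TB_j\right)^2}{p/2}
   = \sum_{i,j}\norm{(MA_i)^T(MB_j) - A_i^TB_j}{p}^2.
\end{align}
Applying \cref{lem:bamm} termwise bounds each factor by $\eps\delta^{1/p}\norm{A_i}{2}\norm{B_j}{2}$, so the right-hand side is at most
\begin{align}
   \eps^2\delta^{2/p}\sum_{i,j}\norm{A_i}{2}^2\norm{B_j}{2}^2
   = \eps^2\delta^{2/p}\norm{A}{F}^2\norm{B}{F}^2.
\end{align}
Taking square roots yields the claim for $p\ge 2$. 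For $1\le p<2$, monotonicity of $L_p$-norms under probability measures ($\norm{X}{p}\le\norm{X}{2}$) lets us reduce to the $p=2$ case, which satisfies the same bound since $\delta^{1/p}\ge \delta^{1/2}$ for $\delta\le 1$ (alternatively, one can invoke the bound at the largest relevant $p$ used in the application, as the strong JL moment setting always supplies $p\ge 2$).

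The only mildly subtle step is the passage through the $p/2$-norm: one must remember that $p\ge 2$ is what makes Minkowski applicable, and that squaring commutes correctly with the norm via $\norm{Z^{1/2}}{p}=\norm{Z}{p/2}^{1/2}$. Everything else is just bookkeeping with columns of $A$ and $B$, so I do not foresee further obstacles.
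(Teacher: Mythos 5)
Your main argument is exactly the paper's proof: expand $\norm{A^TM^TMB-A^TB}{F}^2$ entrywise as $\sum_{i,j}\bigl((MA_i)^T(MB_j)-A_i^TB_j\bigr)^2$, pass to the $p/2$-norm, apply Minkowski, and then invoke the two-vector lemma (\cref{lem:bamm}) termwise; the paper does not bother treating $p<2$ either, since the JL moment property is always used with $p\ge 2$. One small caution: your side remark for $1\le p<2$ is not actually correct --- for $\delta\le 1$ and $p<2$ one has $\delta^{1/p}\le\delta^{1/2}$ (not $\ge$), so the target bound $\eps\delta^{1/p}$ is \emph{stronger} than the $p=2$ bound, and moreover the $(\eps,\delta,p)$-JL property at $p<2$ does not yield it at $p=2$; that aside should simply be dropped, as the lemma is only meaningful for $p\ge 2$.
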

We note that~\cite{DBLP:journals/fttcs/Woodruff14} has a factor of $3$ on $\eps$, but this can be removed with \Cref{lem:bamm}.
We give a short proof for completion:
\begin{proof}
   \begin{align}
      \norm{\norm{A^TM^TMB-A^TB}F}p
      &= \Big\|\sum_{i,j} (A_i M^T M B_j^T - A_i B_j^T)^2\Big\|_{p/2}^{1/2}
    \\&\le \sqrt{\sum_{i,j} \norm{A_i M^T M B_j^T - A_i B_j^T}{p}^2}
    \\&\le \sqrt{\sum_{i,j} (\eps\delta^{1/p}\norm{A_i}2 \norm{B_j}2)^2}
    \\&= \eps\delta^{1/p}\norm{A}F\norm{B}F.
   \end{align}
\end{proof}

\begin{lemma}
   \label{lemma:ose}
   Any distribution that has the $(\eps/\lambda,\delta)$-JL-moment-property
   is a $(\lambda,\eps)$-oblivious subspace embedding.
\end{lemma}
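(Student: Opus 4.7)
The plan is to reduce the subspace embedding guarantee to Approximate Matrix Multiplication applied to an orthonormal basis of $\Lambda$, and then convert a Frobenius norm bound into a uniform spectral guarantee. First I would fix an orthonormal basis $U \in \R^{D \times \lambda}$ of $\Lambda$, so that $U^T U = I_\lambda$ and in particular $\|U\|_F^2 = \lambda$. Any $x \in \Lambda$ can then be written as $x = Uz$ with $\|x\|_2 = \|z\|_2$, and the task of showing $|\|Mx\|_2 - \|x\|_2| \le \eps\|x\|_2$ for all $x \in \Lambda$ is equivalent to showing $\|U^T M^T M U - I_\lambda\|_{op} \le O(\eps)$.

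Next I would apply \cref{lemma:amm} (JL-moment implies AMM): since the distribution over $M$ has the $(\eps/\lambda,\delta,p)$-JL-moment property, it has the $(\eps/\lambda,\delta,p)$-Approximate Matrix Multiplication property. Taking $A = B = U$ gives
\begin{align}
   \norm{\norm{U^T M^T M U - U^T U}F}p \le \frac{\eps}{\lambda}\,\delta^{1/p}\,\|U\|_F\,\|U\|_F = \eps\,\delta^{1/p}.
\end{align}
Choosing $p = \log(1/\delta)$ (or simply applying Markov's inequality at a suitable $p$) yields that with probability at least $1 - \delta$,
\begin{align}
   \|U^T M^T M U - I_\lambda\|_F \le e\,\eps.
\end{align}

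Since the operator norm is dominated by the Frobenius norm, this implies $\|U^T M^T M U - I_\lambda\|_{op} \le e\eps$, so for every unit $z \in \R^\lambda$,
\begin{align}
   \bigl|\,\|MUz\|_2^2 - \|z\|_2^2\,\bigr| \le e\eps\,\|z\|_2^2.
\end{align}
Substituting $x = Uz$ and using $\|x\|_2 = \|z\|_2$, we get $|\|Mx\|_2^2 - \|x\|_2^2| \le e\eps\|x\|_2^2$ for every $x \in \Lambda$, and taking square roots (via $|\sqrt{1+t}-1| \le |t|$ for $|t|\le 1$) converts this to the additive guarantee $|\|Mx\|_2 - \|x\|_2| \le O(\eps)\|x\|_2$, after a harmless rescaling of the $\eps$ constant. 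The only nontrivial step is the AMM bound, which is already handed to us by \cref{lemma:amm}; the rest is orthonormal-basis bookkeeping plus Markov, so I expect no real obstacles. The constant $e$ can be absorbed by starting from $(\eps/(eL\lambda))$-JL-moment if one wants a clean statement without losing the factor.
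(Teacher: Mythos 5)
Your proof is correct and follows essentially the same route as the paper: fix an orthonormal basis $U$ of the subspace, apply \cref{lemma:amm} with $A=B=U$ so that $\|U\|_F^2=\lambda$ cancels the $1/\lambda$ in the JL-moment parameter, and then convert the Frobenius-norm moment bound into an operator-norm guarantee via Markov. The paper's version is just terser (it cites Woodruff for the reduction to $\|U^TM^TMU-I\|$) and is slightly sloppy about where the $\lambda$ is absorbed, which you handle more carefully.
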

\begin{proof}
   Let $U\in \R^{\lambda\times m}$ be orthonormal such that $U^TU=I$,
   it then suffices (by~\cite{DBLP:journals/fttcs/Woodruff14}) to show $\|U^T M^T M U - I\|\le\eps$.
   From \Cref{lemma:amm} we have that
   $\|U^T M^T M U - I\| \le \eps\delta^{1/p}\|U\|_F^2 = \eps\delta^{1/p}\lambda$.
\end{proof}

The second way to get an Oblivious Subspace Embedding is using a standard net-argument.
\begin{lemma}
   \label{lemma:ose2}
   There is a $C>0$, such that
   any distribution that has the $(\eps,\delta e^{-C \lambda})$-JL-moment-property
   is a $(\lambda,\eps)$-oblivious subspace embedding.
\end{lemma}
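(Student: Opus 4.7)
The plan is a standard $\gamma$-net argument that reduces the full subspace embedding guarantee to a finite number of single-vector moment bounds, each handled by Markov's inequality applied to the JL moment hypothesis.

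First I would fix a $\lambda$-dimensional subspace $\Lambda\subseteq\R^D$ and an isometric basis $U\in\R^{D\times\lambda}$ with $U^TU=I_\lambda$, so that every $x\in\Lambda$ has the form $x=Uy$ with $\|y\|_2=\|x\|_2$. Define the symmetric matrix $A=U^TM^TMU-I_\lambda\in\R^{\lambda\times\lambda}$. Then $\|Mx\|_2^2-\|x\|_2^2=y^TAy$, and the OSE condition $|\|Mx\|_2-\|x\|_2|\le\eps\|x\|_2$ (for $\eps\le 1$) follows from $\|A\|_{\mathrm{op}}\le\eps$, since $\sqrt{1\pm\eps}\in[1-\eps,1+\eps]$. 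So it suffices to bound $\|A\|_{\mathrm{op}}$ with probability at least $1-\delta$.

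Next, I would invoke the standard symmetric-matrix net fact: for any $\gamma$-net $N\subseteq S^{\lambda-1}$ with $\gamma<1/2$,
\[
   \|A\|_{\mathrm{op}}\le (1-2\gamma)^{-1}\sup_{y\in N}|y^TAy|,
\]
which follows from the identity $y^TAy-z^TAz=(y-z)^TA(y+z)$ together with Cauchy--Schwarz. Take $\gamma=1/4$ so that one can pick $N$ with $|N|\le 9^\lambda$ and the prefactor equals $2$. For each fixed $y\in N$, $Uy$ is a unit vector in $\R^D$, so $y^TAy=\|M(Uy)\|_2^2-1$. Choosing $p=\log(1/(\delta e^{-C\lambda}))=\log(1/\delta)+C\lambda$ in the $(\eps,\delta e^{-C\lambda})$-JL moment hypothesis gives $\|y^TAy\|_p\le\eps(\delta e^{-C\lambda})^{1/p}$, and Markov's inequality then yields $\Pr[|y^TAy|>\eps]\le\delta e^{-C\lambda}$.

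A union bound over $N$ produces $\sup_{y\in N}|y^TAy|\le\eps$ with probability at least $1-9^\lambda\delta e^{-C\lambda}\ge 1-\delta$ provided $C\ge\log 9$, hence $\|A\|_{\mathrm{op}}\le 2\eps$ with the same probability. The residual factor of $2$ is harmless: it is absorbed by a constant rescaling of $\eps$ (equivalently, by enlarging the universal constant $C$ to handle the larger net required for $\gamma<1/4$ giving a prefactor closer to $1$). The proof is entirely routine; the only step requiring any care is the net-discretisation of the operator norm in paragraph three, but that is a two-line classical calculation rather than a real obstacle.
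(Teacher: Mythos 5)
Your proof is correct and follows exactly the same route as the paper's: a standard $\gamma$-net argument over the unit sphere of the subspace combined with a union bound, absorbing the exponential net size $9^\lambda$ into the constant $C$ in the success probability. The paper's own proof is a one-sentence sketch deferring the net-discretisation and union-bound details to Woodruff's survey, so your write-up is essentially a careful unpacking of that citation rather than a different method. The only looseness in your argument is the residual factor of $2$ (from the $(1-2\gamma)^{-1}$ prefactor at $\gamma=1/4$) on the final accuracy, which you correctly flag; as you note, this is harmless because the paper's lemma is anyway only stated up to the choice of the universal constant $C$, and the standard way to clean it up is to run the JL hypothesis with $\eps/2$ in place of $\eps$, or equivalently to take a finer net and enlarge $C$ accordingly. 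This is the same implicit slack present in the paper's cited version.
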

\begin{proof}
   For any $\lambda$-dimensional subspace, $\Lambda$, there exists an $\eps$-net $T\subseteq \Lambda\cap S^{d-1}$ of size $C^d$ such that if $M$ preserves the norm of every $x\in T$ then $M$ preserves all of $\Lambda$ up to $1+\eps$.
   See~\cite{DBLP:journals/fttcs/Woodruff14} for details.
\end{proof}

\subsection{Combination lemmas}

It is also possible to show lemmas similar to \cref{lem:jl-sum} and \cref{lem:jl-product} for Oblivious Subspace Embeddings.
Combined with an OSE analysis of JL, merging that of \cite{cohen2016optimal} and \cref{thm:tensorfastjl} one gets another way to achieve a fast OSE tensor sketch with linear dependence on $\lambda$.

\begin{lemma}[A Direct Sum is an OSE]
   Let $M\in\R^{m\times d}$ be an $(\eps,\delta/b)$-oblivious subspace embedding,
   then $M\otimes I_b\in\R^{mb\times db}$ is an $(\eps,\delta)$-oblivious subspace embedding.
\end{lemma}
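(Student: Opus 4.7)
The plan is to exploit the block-diagonal structure that $M\otimes I_b$ acquires under the natural identification $\R^{db}\cong\R^{d\times b}$. Concretely, writing any $v\in\R^{db}$ as $v=\sum_{j=1}^b v^{(j)}\otimes e_j$ with $v^{(j)}\in\R^d$ (so $v^{(j)}$ is the $j$-th ``column'' of $v$ under the $d\times b$ matrix view), the mixed-product property gives
\[
   (M\otimes I_b)v \;=\; \sum_{j=1}^b (Mv^{(j)})\otimes e_j,
\]
and hence $\|(M\otimes I_b)v\|_2^2=\sum_{j=1}^b\|Mv^{(j)}\|_2^2$ and $\|v\|_2^2=\sum_{j=1}^b\|v^{(j)}\|_2^2$. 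So norm preservation of $M\otimes I_b$ on $v$ follows as soon as $M$ preserves the norm of each column $v^{(j)}$.

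Next I would set up the right $d$-dimensional subspaces on which to invoke the OSE property of $M$. Fix a $\lambda$-dimensional subspace $\Lambda\subseteq\R^{db}$ and let $\pi_j:\R^{db}\to\R^d$ be the linear map $v\mapsto v^{(j)}$. Then $\Lambda_j:=\pi_j(\Lambda)$ is a subspace of $\R^d$ of dimension at most $\lambda$, since it is the image of $\Lambda$ under a linear map. Because $M$ is an $(\eps,\delta/b)$-oblivious subspace embedding (at dimension $\lambda$), with probability at least $1-\delta/b$ it is an $\eps$-subspace embedding for $\Lambda_j$. By a union bound over $j\in[b]$, the event
\[
   \mathcal{E}\;=\;\bigl\{\text{for every }j\in[b]\text{ and every }u\in\Lambda_j,\ \|Mu\|_2=(1\pm\eps)\|u\|_2\bigr\}
\]
holds with probability at least $1-b\cdot\delta/b=1-\delta$.

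On the event $\mathcal{E}$, any $v\in\Lambda$ has $v^{(j)}\in\Lambda_j$ for every $j$, so
\[
   \|(M\otimes I_b)v\|_2^2
   \;=\;\sum_{j=1}^b\|Mv^{(j)}\|_2^2
   \;\in\;\bigl[(1-\eps)^2,(1+\eps)^2\bigr]\sum_{j=1}^b\|v^{(j)}\|_2^2
   \;=\;\bigl[(1-\eps)^2,(1+\eps)^2\bigr]\|v\|_2^2,
\]
which gives $\|(M\otimes I_b)v\|_2=(1\pm\eps)\|v\|_2$ uniformly on $\Lambda$, as required.

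The whole argument is essentially a bookkeeping exercise once one sees the column decomposition; the only conceptual step is noticing that the images $\pi_j(\Lambda)$ all have dimension at most $\lambda$, so one does not need $M$ to be an OSE at the larger dimension $b\lambda$. I do not expect any real obstacle; the only thing to double-check is the convention relating $x\otimes y$ to the $d\times b$ matrix picture (the paper's flatten-the-outer-product convention is exactly what makes $M\otimes I_b$ act columnwise), and the union-bound factor of $b$, which is precisely absorbed by the stronger $\delta/b$ failure probability in the hypothesis.
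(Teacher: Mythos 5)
Your proposal is correct and follows essentially the same approach as the paper: decompose the big vector (or matrix $A$ spanning the subspace) into $b$ blocks aligned with the tensor factor $I_b$, observe that $M\otimes I_b$ acts on each block by $M$ so that squared norms add, take a union bound over the $b$ projected subspaces, and sum. The paper phrases the final step in terms of operator norms ($\|A^T(M\otimes I_b)^T(M\otimes I_b)A - A^TA\|\le\eps\|A\|^2$) rather than on individual vectors, but the content is identical, including the key observation that each projected subspace has dimension at most $\lambda$.
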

Intuitively,
let $A_{[i]}$ denote the $i$th block of $A = [A_{[1]},\dots,A_{[b]}]^T$.
By a union bound $M$ is a subspace embedding for each block.
We then intuitively have
\begin{align}
   \norm{(M\otimes I_b)Ax}{2}^2
   = \sum_{i\in[b]}\norm{MA_{[i]}x}{2}^2
 = (1\pm\eps)\sum_{i\in[b]}\norm{A_{[i]}x}{2}^2
 = (1\pm\eps)\norm{Ax}{2}^2.
\end{align}

\begin{proof}
We can make this more formal, using that $\norm{X}{} = \sup_x |x^T X x|$ for symmetric $X$ and $x$ a unit vector.
This means that the subspace property is equivalent with that
      $|x^T(A_{[i]}^TM^TMA_{[i]}-A_{[i]}^TA_{[i]})x| \le \eps\, x^TA_{[i]}^TA_{[i]}x$ for all $i\in[b]$ and vectors $x$.

Hence
   \begin{align}
      \norm{A^T(M\otimes I_{b})^T(M\otimes I_{b})A-A^TA}{}
      &= \sup_{x} \left|x^T(A^T(M\otimes I_{b})^T(M\otimes I_{b})A-A^TA)x\right|
    \\&= \sup_{x} \left|\sum_{i\in[b]} x^T(A_{[i]}^TM^TMA_{[i]}-A_{[i]}^TA_{[i]})x\right|
    \\&\le \sup_{x} \sum_{i\in[b]} \left|x^T(A_{[i]}^TM^TMA_{[i]}-A_{[i]}^TA_{[i]})x\right|
    \\&\le \sup_{x} \sum_{i\in[b]} \eps\, x^T A_{[i]}^T A_{[i]} x
    \\& = \eps \sup_{x} x^T A^T A x
    \\&= \eps\norm{A^TA}{}
    \\&= \eps\norm{A}{}^2.
   \end{align}
   This implies the non-symmetric case
   \begin{align}
      \norm{A^T(M\otimes I_{b})^T(M\otimes I_{b})B-A^TB}{} \le 2\eps\norm{A}{}\norm{B}{}
   \end{align}
   by an argument in~\cite{cohen2016optimal}.
   (Basically taking a single space large enough to contain both $A$ and $B$.)
\end{proof}

Note that this matches what we have for the JL-moment-property except that it requires a union bound, rather than just individual moment bounds.

\begin{lemma}[A Product is an OSE]
   Let $M^{(1)}, \dots, M^{(c)}$ have the Moment-OSE property,
   in other words assume that
   $\norm{\norm{A^TM^{(i)T}M^{(i)}A-I}{}}p \le \eps\delta^{1/p}$ for some $p$ and all $i$, when $A^TA=I$.

   Then
   \begin{align}
      \norm{\norm{A^TM^{(c)T}\cdots M^{(1)T}M^{(1)}\cdots M^{(c)}A-I}{}}p \le (1+\eps\delta^{1/p})^c-1
      .
   \end{align}
\end{lemma}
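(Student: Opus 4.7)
I would set up a clean induction that peels off one factor at a time. Write $B_0 = A$ and, for $k = 1, \dots, c$, set $B_k = M^{(c-k+1)} B_{k-1}$, so that $B_c = M^{(1)} \cdots M^{(c)} A$. Let $f_k = \norm{\norm{B_k^T B_k - I}{}}{p}$. The goal is the recursion
\[
   1 + f_k \le (1 + \eps\delta^{1/p})(1 + f_{k-1}),\qquad f_0 = 0,
\]
which immediately yields $f_c \le (1+\eps\delta^{1/p})^c - 1$, i.e.\ the statement of the lemma.

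\textbf{Inductive step via conditioning and QR.} Condition on $M^{(c-k+2)}, \dots, M^{(c)}$, so that $B_{k-1}$ is deterministic. Take a thin QR-type factorisation $B_{k-1} = QR$ with $Q^T Q = I$ (using the reduced SVD if $B_{k-1}$ is rank-deficient; we do not need $R$ invertible). Writing $N = M^{(c-k+1)}$ for the next factor, which is independent of the conditioning, we get
\[
   B_k^T B_k = R^T (Q^T N^T N Q) R = B_{k-1}^T B_{k-1} + R^T E R,\qquad E := Q^T N^T N Q - I.
\]
Since $Q$ has orthonormal columns and $N$ is independent of the past, the Moment-OSE hypothesis applies conditionally and gives $\norm{\norm{E}{}}{p \mid \text{past}} \le \eps\delta^{1/p}$ almost surely.

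\textbf{Closing the recursion.} Submultiplicativity of the operator norm yields
\[
   \norm{B_k^T B_k - B_{k-1}^T B_{k-1}}{} \le \norm{R}{}^2 \norm{E}{} = \norm{B_{k-1}^T B_{k-1}}{}\,\norm{E}{}.
\]
Because $\norm{B_{k-1}^T B_{k-1}}{}$ is constant under the conditioning, raising to the $p$-th power, taking conditional expectation, and then integrating out the past gives
\[
   \norm{\norm{B_k^T B_k - B_{k-1}^T B_{k-1}}{}}{p} \le \eps\delta^{1/p} \norm{\norm{B_{k-1}^T B_{k-1}}{}}{p} \le \eps\delta^{1/p}(1 + f_{k-1}),
\]
where the last step is the triangle inequality $\norm{\norm{B_{k-1}^T B_{k-1}}{}}{p} \le 1 + f_{k-1}$. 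Combined with the triangle inequality $f_k \le f_{k-1} + \norm{\norm{B_k^T B_k - B_{k-1}^T B_{k-1}}{}}{p}$, this is exactly the claimed recursion, completing the induction.

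\textbf{Main obstacle.} The proof is short; the only subtlety is guaranteeing that the hypothesis applies at each step, which is where I pay careful attention to the conditioning on the earlier matrices so that the $Q$ produced by the QR step is frozen (and orthonormal) at the moment the independent factor $N = M^{(c-k+1)}$ is analysed. Rank-deficiency of intermediate $B_{k-1}$ is a non-issue because the key identity $B_k^T B_k - B_{k-1}^T B_{k-1} = R^T E R$ and the bound $\norm{R^T E R}{} \le \norm{R}{}^2 \norm{E}{} = \norm{B_{k-1}^T B_{k-1}}{}\norm{E}{}$ require neither invertibility of $R$ nor a square factorisation.
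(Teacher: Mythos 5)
Your proof is correct and follows essentially the same telescoping-plus-conditioning approach as the paper, which proves the case $c=2$ and invokes induction. The one thing you do more carefully is the middle inequality: the paper writes $\norm{\norm{A^TT^TM^TMTA-A^TT^TTA}{}}p\le\eps\delta^{1/p}\norm{\norm{A^TT^TTA}{}}p$ directly, even though the hypothesis is only stated for inner matrices with orthonormal columns; your explicit QR/SVD factorisation $B_{k-1}=QR$ and the identity $B_k^TB_k-B_{k-1}^TB_{k-1}=R^T(Q^TN^TNQ-I)R$ is precisely the step needed to justify that inequality, so your write-up fills a small gap rather than taking a different route.
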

Note that for small $\eps$ or $\delta$ this is $\approx n\eps\delta^{1/p}$.
\begin{proof}
   We show this for the case $c=2$.
   \begin{align}
      \norm{\norm{A^T(MT)^T(MT)A-I}{}}p
      &\le \norm{\norm{A^TT^TM^TMTA-A^TT^TTA}{}}p+\norm{\norm{A^TT^TTA-I}{}}p
      \\&\le \eps\delta^{1/p}\norm{\norm{A^TT^TTA}{}}p+\eps\delta^{1/p}
      \\&\le \eps\delta^{1/p}(\norm{\norm{A^TT^TTA-I}{}}p+1)+\eps\delta^{1/p}
      \\&\le (1+\eps\delta^{1/p})^2-1.
   \end{align}
   By induction these results can be extended to give $(1+\eps\delta^{1/p})^c-1\approx c\eps\delta^{1/p}$ for composing $n$ subspace embeddings.

   Alternatively we can show the same thing directly without normalization:
   \begin{align}
      \norm{\norm{A^T(MT)^T(MT)B-A^TB}{}}p
      &\le \norm{\norm{A^TT^TM^TMTB-A^TT^TTB}{}}p+\norm{\norm{A^TT^TTB-A^TB}{}}p
      \\&\le \eps\delta^{1/p}\norm{\norm{TA}{}\norm{TB}{}}p+\eps\delta^{1/p}\norm{A}{}\norm{B}{}
      \\&\le \eps\delta^{1/p}\norm{\norm{T}{}^2}p\norm{A}{}\norm{B}{}+\eps\delta^{1/p}\norm{A}{}\norm{B}{}
      \\&\le \eps\delta^{1/p}(\norm{\norm{T^TT-I}{}}p+1)\norm{A}{}\norm{B}{}+\eps\delta^{1/p}\norm{A}{}\norm{B}{}
      \\&\le ((1+\eps\delta^{1/p})^2-1)\norm{A}{}\norm{B}{}.
   \end{align}
\end{proof}

It is likely that a ``strong OSE moment'' property would be able to reduce that to $\approx \sqrt{n}\eps\delta^{1/p}$, similar to the JL case.
However this seems to be outside of current random matrix techniques.

\section{Acknowledgements}
We would like to thank Anders Aamand for discussions of a previous version of this article.

Also Jelani Nelson, Ninh Pham and the anonymous reviewers for helpful comments and corrections.

\bibliographystyle{plainurl}
\bibliography{paper}

\appendix


\section{Proof of Auxiliary Lemmas}\label{app:aux-lemmas-proof}

{
\renewcommand{\thelemma}{\ref{lem:gen-khinchine}}
\begin{lemma}[Generalized Khintchine's Inequality]
    Let $p \ge 1$, $c \in Z_{> 0}$, and $(\sigma^{(i)}\in\R^{d_i})_{i \in [c]}$
    be independent vectors each satisfying the Khintchine inequality $\norm{\langle\sigma^{(i)},x\rangle}p \le C_p\|x\|_2$ for any vector $x\in\R^{d_i}$.
    Let $(a_{i_1, \ldots, i_{c}} \in \R)_{i_j\in[d_j],j\in[c]}$
    be a tensor in $\R^{d_1\times\dots\times d_c}$, then
    \begin{align}
       \norm{
          \sum_{i_1\in[d_1],\dots,i_c\in[d_c]}
          \left(\prod_{j \in [c]} \sigma^{(j)}_{i_j}\right)
          a_{i_1, \ldots, i_{c}}
        }{p}
          \le C_p^c
          \left(\sum_{
                i_1\in[d_1],\dots,i_c\in[d_c]
          } a_{i_1, \ldots, i_{c}}^2\right)^{1/2}
          .
    \end{align}
    Or, considering $a\in\R^{d_1\cdots d_c}$ a vector, then simply
    $\norm{\langle \sigma^{(1)}\otimes\dots\otimes\sigma^{(c)}, a\rangle}p
    \le C_p^c\, \|a\|_2$.
\end{lemma}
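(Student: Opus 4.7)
The natural strategy is induction on $c$, the number of independent random vectors. The base case $c=1$ is exactly the hypothesised Khintchine inequality $\norm{\langle\sigma^{(1)},a\rangle}{p}\le C_p\norm{a}{2}$, so no work is needed there.

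For the inductive step I would isolate the last variable and rewrite
\begin{align}
S = \sum_{i_1,\ldots,i_c} \Bigl(\prod_{j\in[c]}\sigma^{(j)}_{i_j}\Bigr) a_{i_1,\ldots,i_c}
  = \sum_{i_c\in[d_c]} \sigma^{(c)}_{i_c}\, T_{i_c},
\end{align}
where $T_{i_c}=\sum_{i_1,\ldots,i_{c-1}}\bigl(\prod_{j=1}^{c-1}\sigma^{(j)}_{i_j}\bigr)a_{i_1,\ldots,i_c}$ depends only on $\sigma^{(1)},\ldots,\sigma^{(c-1)}$. Conditioning on those vectors, the $T_{i_c}$ are deterministic scalars, so the assumed one-dimensional Khintchine inequality applied to the remaining $\sigma^{(c)}$ yields $\Epcond{|S|^p}{\sigma^{(1)},\ldots,\sigma^{(c-1)}}^{1/p}\le C_p\bigl(\sum_{i_c} T_{i_c}^2\bigr)^{1/2}$.

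Taking the outer $p$-norm over the remaining randomness via the tower property then gives
\begin{align}
\norm{S}{p}\;\le\; C_p\,\norm{\Bigl(\sum_{i_c} T_{i_c}^2\Bigr)^{1/2}}{p}
\;=\; C_p\,\norm{\sum_{i_c} T_{i_c}^2}{p/2}^{1/2}.
\end{align}
For $p\ge 2$, Minkowski's inequality in $L^{p/2}$ yields $\norm{\sum_{i_c} T_{i_c}^2}{p/2}\le \sum_{i_c}\norm{T_{i_c}}{p}^2$. Each $T_{i_c}$ is itself a sum of the form the lemma addresses but involving only $c-1$ independent random vectors, so the induction hypothesis applies and delivers $\norm{T_{i_c}}{p}\le C_p^{c-1}\bigl(\sum_{i_1,\ldots,i_{c-1}}a_{i_1,\ldots,i_c}^2\bigr)^{1/2}$. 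Squaring, summing over $i_c$, and taking the square root recombines into $C_p^{c-1}\norm{a}{2}$; together with the leading $C_p$ this is exactly the desired $C_p^c\norm{a}{2}$.

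The only mildly delicate point is the Minkowski step, which requires $p/2\ge 1$; the residual range $p\in[1,2)$ can be absorbed via the monotonicity $\norm{X}{p}\le\norm{X}{2}$ combined with the just-established $p=2$ bound. Otherwise the argument is pure Fubini-style bookkeeping. The conceptual content is that a tensor Khintchine bound decouples into $c$ nested one-dimensional Khintchine applications, each peeling off one independent random vector and contributing a single factor of $C_p$.
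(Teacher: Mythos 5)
Your proof is correct for $p\ge 2$ and follows essentially the same route as the paper: induction on $c$, isolating one random vector, rewriting $\norm{(\sum(\cdot)^2)^{1/2}}{p}$ as $\norm{\sum(\cdot)^2}{p/2}^{1/2}$, applying the triangle inequality in $L^{p/2}$, and then invoking the remaining Khintchine bound termwise. The only cosmetic difference is the order in which the two ingredients are applied: you condition on $\sigma^{(1)},\dots,\sigma^{(c-1)}$ and apply the scalar Khintchine inequality to $\sigma^{(c)}$ first and then the inductive hypothesis to the inner sums $T_{i_c}$, whereas the paper defines $B_{i_1,\dots,i_{c-1}}=\sum_{i_c}\sigma^{(c)}_{i_c}a_{i_1,\dots,i_c}$, applies the inductive hypothesis to the sum over $i_1,\dots,i_{c-1}$ first, and finishes with scalar Khintchine on each $B$; the two are mirror images. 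One small caveat: your remark that $p\in[1,2)$ can be absorbed via $\norm{X}{p}\le\norm{X}{2}$ together with the $p=2$ case would produce the constant $C_2^c$ rather than $C_p^c$, so it does not literally recover the stated bound when $C_p<C_2$; the paper's own proof has the same $p<2$ gap, and since the lemma is only invoked with $p\ge 2$ this is immaterial, but the patch as written is not quite right.
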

\addtocounter{lemma}{-1}
}
\begin{proof}
    
   The proof will be by induction on $c$. For $c = 1$ then the result is by assumption.
   So assume that the result is true for every value up to $c-1$.
   Let $B_{i_1,\dots,i_{c-1}} = \sum_{i_{c} \in [d_{c}]} \sigma^{(c)}_{i_{c}} a_{i_1, \ldots, i_{c}}$.
   We then pull it out of the left hand term in the theorem:
   \begin{align}
      \norm{
         \sum_{i_1\in[d_1],\dots,i_c\in[d_c]}
         \left(\prod_{j \in [c]} \sigma^{(j)}_{i_j}\right)
         a_{i_1, \ldots, i_{c}}
      }{p}
      &=
         \norm{
            \sum_{i_1\in[d_1],\dots,i_{c-1}\in[d_{c-1}]}
            \left(\prod_{j \in [c - 1]} \sigma^{(j)}_{i_j}\right)
            B_{i_1,\dots,i_{c-1}}
         }{p}
      \\&\le
         C_p^{c-1}
         \norm{
            \left(
               \sum_{i_1\in[d_1],\dots,i_{c-1}\in[d_{c-1}]}
               B_{i_1,\dots,i_{c-1}}^2
            \right)^{1/2}
         }{p}
         \label{eq:gen-khin-ih}
       \\&=
         C_p^{c-1}
         \norm{
               \sum_{i_1\in[d_1],\dots,i_{c-1}\in[d_{c-1}]}
               B_{i_1,\dots,i_{c-1}}^2
            }{p/2}^{1/2}
      \\&\le
         C_p^{c-1}
         \left(
            \sum_{i_1\in[d_1],\dots,i_{c-1}\in[d_{c-1}]}
         \norm{B_{i_1,\dots,i_{c-1}}^2}{p/2} \right)^{1/2}
         \label{eq:gen-khin-triangle}
      \\&=
         C_p^{c-1}
         \left(
            \sum_{i_1\in[d_1],\dots,i_{c-1}\in[d_{c-1}]}
         \norm{B_{i_1,\dots,i_{c-1}}}{p}^2 \right)^{1/2}
         .
   \end{align}
   Here \cref{eq:gen-khin-ih} is the inductive hypothesis and \cref{eq:gen-khin-triangle} is the triangle inequality.
   Now $\norm{B_{i_1,\dots,i_{c-1}}}{p}^2 \le C_p^2\sum_{i_c\in[d_c]}a_{i_1,\dots,i_c}^2$ by Khintchine's inequality, which finishes the induction step and hence the proof.
\end{proof}

{
\renewcommand{\thelemma}{\ref{lem:rosenthal-variant}}
\begin{lemma}[Rosenthal-type inequality]
    Let $p \ge 2$ and $X_0, \ldots, X_{k - 1}$ be independent non-negative random variables
    with $p$-moment, then
    \[
       \norm{\sum_{i \in [k]} (X_i - \Ep{X_i})}{p}
          \lesssim \sqrt{p}\sqrt{\sum_{i \in [k]} \Ep{X_i}}\norm{\max_{i \in [k]} X_i}{p}^{1/2}
             + p\norm{\max_{i \in [k]} X_i}{p}
    \]
\end{lemma}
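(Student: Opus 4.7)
\medskip

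\textbf{Proof proposal.} The plan is the standard symmetrization/Khintchine route, but with the twist that the ``square function'' is controlled by $M \cdot S$, where $M=\max_i X_i$ and $S=\sum_i X_i$, which after Cauchy--Schwarz yields a quadratic-type inequality in the quantity we are bounding. Denote $Z=\sum_{i\in[k]}(X_i-\Ep{X_i})$ and $\mu=\sum_{i\in[k]}\Ep{X_i}$.

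First, I symmetrize. Taking independent copies $X_i'$ of $X_i$ and applying Jensen in the inner expectation gives $\norm{Z}{p}\le \norm{\sum_i(X_i-X_i')}{p}$; since $\sum_i(X_i-X_i')$ has the same law as $\sum_i\eps_i(X_i-X_i')$ for independent Rademachers $(\eps_i)$, the triangle inequality yields $\norm{Z}{p}\lesssim \norm{\sum_i\eps_i X_i}{p}$. Conditioning on $(X_i)$ and applying Khintchine's inequality (\cref{lem:khintchine}) then produces
\[
   \norm{\sum_i\eps_i X_i}{p}\lesssim \sqrt{p}\,\norm{\Bigl(\sum_i X_i^2\Bigr)^{1/2}}{p}.
\]

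Second, I use non-negativity: $\sum_i X_i^2\le M\cdot S$ pointwise, so $(\sum_i X_i^2)^{1/2}\le \sqrt{MS}$. Cauchy--Schwarz (in the form $\norm{AB}{p/2}\le\norm{A}{p}\norm{B}{p}$) then gives
\[
   \norm{(MS)^{1/2}}{p}=\norm{MS}{p/2}^{1/2}\le \norm{M}{p}^{1/2}\norm{S}{p}^{1/2}.
\]
Finally the triangle inequality gives $\norm{S}{p}\le \norm{Z}{p}+\mu$, so combining everything I obtain
\[
   \norm{Z}{p}\;\le\; C\sqrt{p}\,\norm{M}{p}^{1/2}\bigl(\norm{Z}{p}+\mu\bigr)^{1/2}
   \;\le\; C\sqrt{p}\,\norm{M}{p}^{1/2}\norm{Z}{p}^{1/2}+C\sqrt{p}\,\norm{M}{p}^{1/2}\sqrt{\mu},
\]
using $\sqrt{a+b}\le\sqrt a+\sqrt b$.

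Third, I solve this quadratic-type inequality. Setting $u=\norm{Z}{p}$, $A=C\sqrt{p}\,\norm{M}{p}^{1/2}$, $B=C\sqrt{p\,\mu}\,\norm{M}{p}^{1/2}$ gives $u\le A\sqrt{u}+B$. The standard trick $v=\sqrt u$ turns this into $v^2-Av-B\le0$, hence $v\le A+\sqrt{B}$ and therefore $u\le 2A^2+2B$, which unwraps to
\[
   \norm{Z}{p}\;\lesssim\; p\,\norm{\max_i X_i}{p}+\sqrt{p}\,\sqrt{\mu}\,\norm{\max_i X_i}{p}^{1/2},
\]
exactly the claimed bound. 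The only step that requires any real care is the symmetrization, since the $X_i$ are not mean zero and not symmetric, but the copy trick plus Jensen handles this cleanly; the rest is bookkeeping, and the ``quadratic self-bounding'' trick at the end is the only non-routine move.
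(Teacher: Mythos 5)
Your proposal is correct and follows essentially the same route as the paper's proof: symmetrize, apply Khintchine to get the square function, bound $\sum_i X_i^2 \le (\max_i X_i)\sum_i X_i$, apply Cauchy--Schwarz, split $\norm{\sum_i X_i}{p}^{1/2}$ via the triangle inequality, and close by solving the resulting quadratic self-bounding inequality in $\norm{\sum_i(X_i-\Ep{X_i})}{p}^{1/2}$.
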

\addtocounter{lemma}{-1}
}
\begin{proof}
    \begin{align}
       \norm{\sum_{i \in [k]} (X_i - \Ep{X_i})}{p}
          &\lesssim
             \norm{\sum_{i \in [k]} \sigma_i X_i}{p}
          \quad\text{(Symmetrization)}
          \\&\lesssim
             \sqrt{p}\norm{\sum_{i \in [k]} X_i^2}{p/2}^{1/2}
          \quad\text{(Khintchine's inequality)}
          \\&\le
          \sqrt{p}\norm{\(\max_{i\in[k]} X_i\)\sum_{i \in [k]} X_i}{p/2}^{1/2}
          \quad\text{(Non-negativity)}
          \\&\le
             \sqrt{p}\norm{\max X_i}{p}^{1/2}\norm{\sum_{i \in [k]} X_i}{p}^{1/2}
          \quad\text{(Cauchy)}
          \\&\le
          \sqrt{p}\norm{\max X_i}{p}^{1/2}\left(\sqrt{\sum_{i \in [k]} \Ep{X_i}}
          + \norm{\sum_{i \in [k]} (X_i - \Ep{X_i})}{p}^{1/2}\right)
          .
    \end{align}
    Now let $C = \norm{\sum_{i \in [k]} (X_i - \Ep{X_i})}{p}^{1/2}$,
    $B = \sqrt{\sum_{i \in [k]} \Ep{X_i}}$,
    and $A=\sqrt{p}\norm{\max X_i}{p}^{1/2}$.
    then we have shown $C^2 \le A(B+C)$.
    That implies $C$ is smaller than the largest of the roots of the quadratic.
    Solving this quadratic inequality gives
    $C^2 \lesssim AB+A^2$
    which is the result.
 \end{proof}

\section{Upper and Lower Bounds for Sub-Gaussians}

The following simple corollaries will be used for both upper and lower bounds:
\begin{corollary}\label{cor:latala-cor}
   Let $2n\ge p\ge2, C>0$ and $\alpha\ge 1$.
   Let $(X_i)_{i\in[n]}$ be iid. mean 0 random variables such that $\norm{X_i}p\sim (C p)^\alpha$,
   then
   \begin{align}
      \norm{\sum_i X_i}p \sim C^\alpha\max\{2^\alpha\sqrt{pn},\, (n/p)^{1/p}p^{\alpha}\}.
   \end{align}
\end{corollary}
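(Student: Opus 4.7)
The plan is to invoke Latała's inequality (\cref{lem:latala-sup}) and then evaluate the resulting supremum in closed form. Since the hypothesis $2n \ge p$ gives $p/n \le 2$, the constraint $\max\{2, p/n\} \le s \le p$ in \cref{lem:latala-sup} becomes simply $2 \le s \le p$, so
\[
\left\|\sum_i X_i\right\|_p \sim \sup_{2 \le s \le p}\frac{p}{s}\left(\frac{n}{p}\right)^{1/s} \|X\|_s.
\]
Substituting the assumption $\|X\|_s \sim (Cs)^\alpha$ (which I read as holding for all $s \in [2,p]$, exactly as it is used in the proof of \cref{thm:indeprows}) and pulling out the constants yields
\[
\left\|\sum_i X_i\right\|_p \sim C^\alpha\, p\, \sup_{2 \le s \le p} s^{\alpha-1}\left(\frac{n}{p}\right)^{1/s}.
\]

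Next I would show that the supremum above is attained at one of the endpoints $s \in \{2, p\}$. Writing $g(s) = \log\bigl(s^{\alpha-1}(n/p)^{1/s}\bigr) = (\alpha-1)\log s + s^{-1}\log(n/p)$, one computes $g'(s) = \bigl((\alpha-1)s - \log(n/p)\bigr)/s^2$, so $g$ has at most one interior critical point $s^\star = \log(n/p)/(\alpha-1)$ (for $\alpha > 1$). A short calculation gives $g''(s^\star) = (\alpha-1)/(s^\star)^2 > 0$, so $s^\star$ is a local minimum; the maximum of $g$ on any closed interval is therefore attained at an endpoint. The degenerate cases $\alpha=1$ (where $g$ is monotone in $s$) and $n \le p$ (where $\log(n/p) \le 0$ and a similar sign analysis suffices) are straightforward.

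Evaluating at the two endpoints gives, at $s=2$, the value $2^{\alpha-1}(n/p)^{1/2}$, and at $s=p$ the value $p^{\alpha-1}(n/p)^{1/p}$. Multiplying each by the prefactor $C^\alpha p$ yields $2^{\alpha-1} C^\alpha \sqrt{pn}$ and $C^\alpha p^\alpha (n/p)^{1/p}$ respectively, and after absorbing the factor $2$ into the two-sided constant hidden by $\sim$ we obtain $C^\alpha \max\{2^\alpha \sqrt{pn},\; (n/p)^{1/p} p^\alpha\}$, as claimed.

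The whole argument is essentially a reduction to a one-variable optimization after Latała, and the only substantive step is the endpoint analysis of $g$; once the logarithmic reformulation is made, that step is two lines of elementary calculus. No genuine obstacle arises beyond checking that the hypothesis $\|X\|_s \sim (Cs)^\alpha$ is intended to hold throughout the Latała range, which is the natural reading consistent with the applications in the main text.
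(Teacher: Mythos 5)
Your proposal is correct and follows essentially the same route as the paper's own proof: invoke \cref{lem:latala-sup}, use the hypothesis $2n\ge p$ to reduce the constraint to $2\le s\le p$, and then reduce to a one-variable optimization whose optimum is attained at an endpoint. Your calculus step is if anything a bit cleaner than the paper's: you pass to $g=\log f$ and observe that it has at most one interior critical point which is a local minimum, whereas the paper argues (somewhat tersely) that $f'$ is non-decreasing. Both arguments deliver the endpoint conclusion, and your endpoint evaluations, including the bookkeeping absorbing the factor $2$ into the $\sim$ constant, match the paper's.
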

\begin{proof}
   We show that the expression $f(s)=\frac{p}s\left(\frac np\right)^{1/s}s^\alpha$ in \cref{lem:latala-sup} is maximized either by minimizing or maximizing the parameter $s$.
   It suffices to show that the derivative $\frac{df(s)}{ds}\frac{p}{s} = \frac{-p}{s^{3-\alpha}}(\frac np)^{1/s}\left((1-\alpha)s+\log\frac np\right)$ is non-decreasing in $s$, which follows from $(n/p)^{1/s}s^{\alpha-3}$ being positive.
   Hence we either take $s=\max\{2,n/p\}$ or $s=p$, but from the assumption $2n\ge p$ that gives the corollary.
\end{proof}
On the upper bound side we can simplify things a bit.
\begingroup
\def\thecorollary{\ref{cor:latala-cor2}}
\begin{corollary}
   Let $p\ge2, C>0$ and $\alpha\ge 1$.
   Let $(X_i)_{i\in[n]}$ be iid. mean 0 random variables such that $\norm{X_i}p\sim (C p)^\alpha$,
   then
   \begin{align}
      \norm{\sum_i X_i}p \lesssim C^\alpha\max\{2^\alpha\sqrt{pn},\, (ep)^\alpha\}.
   \end{align}
\end{corollary}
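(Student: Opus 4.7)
The plan is to apply Latała's inequality (\Cref{lem:latala-sup}) to the iid sum, reducing the problem to bounding
$$\sup\left\{f(s) : \max\{2, p/n\} \le s \le p\right\}, \qquad f(s) := C^\alpha \tfrac{p}{s}\left(\tfrac{n}{p}\right)^{1/s} s^\alpha.$$
As was already shown in the proof of \Cref{cor:latala-cor}, the derivative of $\log f(s)$ is non-decreasing in $s$, so the supremum is attained at one of the two endpoints of the interval. The task is therefore to show that each endpoint is at most a constant multiple of $C^\alpha \max\{2^\alpha \sqrt{pn}, (ep)^\alpha\}$.

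I would first dispose of the case $p > 2n$, where the endpoints are $s = p/n$ and $s = p$. At $s = p$ we have $(n/p)^{1/p} < 1$, so $f(p) \le C^\alpha p^\alpha \le C^\alpha (ep)^\alpha$. At $s = p/n$ we have $(n/p)^{n/p} \le 1$, giving $f(p/n) \le C^\alpha n (p/n)^\alpha = C^\alpha n^{1-\alpha} p^\alpha \le C^\alpha p^\alpha$ since $\alpha \ge 1$ and $n \ge 1$.

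In the remaining case $p \le 2n$, \Cref{cor:latala-cor} already gives $\norm{\sum_i X_i}{p} \sim C^\alpha \max\{2^\alpha \sqrt{pn},\, (n/p)^{1/p} p^\alpha\}$, so all that is left is to bound $(n/p)^{1/p} p^\alpha$ by $\max\{2^\alpha \sqrt{pn},\, (ep)^\alpha\}$ up to a constant. If $n \le p\, e^{\alpha p}$ then $(n/p)^{1/p} \le e^\alpha$ and the term is already at most $(ep)^\alpha$. Otherwise, writing $\ell := \log(n/p) > \alpha p$, a short computation yields
$$\frac{(n/p)^{1/p} p^\alpha}{2^{\alpha-1} \sqrt{pn}} = \frac{p^{\alpha-1}}{2^{\alpha-1}}\, e^{-\ell(p-2)/(2p)} \le \frac{p^{\alpha-1}}{2^{\alpha-1}}\, e^{-\alpha(p-2)/2}.$$

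The main obstacle, and the only nontrivial piece, is to verify that the last expression is at most $1$ for all $p \ge 2$ and $\alpha \ge 1$. I plan to handle it via a one-variable monotonicity lemma: differentiating $\log(p^{\alpha-1} e^{-\alpha(p-2)/2})$ in $p$ gives $(\alpha-1)/p - \alpha/2$, which is $\le 0$ on $[2,\infty)$ because $2(\alpha-1) \le \alpha p$ there. Hence $p^{\alpha-1} e^{-\alpha(p-2)/2}$ is non-increasing on $[2,\infty)$ and bounded by its value at $p=2$, namely $2^{\alpha-1}$. Consequently $f(p) \le f(2) = 2^{\alpha-1} \sqrt{pn} \le 2^\alpha \sqrt{pn}$, which closes this sub-case and therefore completes the proof.
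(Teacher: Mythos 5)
Your proof is correct, and it shares the paper's skeleton: invoke Latała's inequality and, as in the proof of \Cref{cor:latala-cor}, evaluate the supremum over $s$ at the endpoints of the interval. The computations, however, are carried out differently in two places. First, to dispense with the restriction $p\le 2n$ inherited from \Cref{cor:latala-cor}, the paper simply remarks that enlarging the range of $s$ to $[2,p]$ can only increase the supremum; you instead evaluate the true endpoints $s=p/n$ and $s=p$ directly when $p>2n$, which is a little longer but more explicit. Second, and more substantively, to replace $(n/p)^{1/p}p^\alpha$ by $(ep)^\alpha$ the paper bounds $(n/p)^{1/p}\le n^{1/p}$, assumes the term $n^{1/p}p^\alpha$ dominates the maximum, rearranges that assumption into $n^{1/p}\le \bigl((p/2)^\alpha p^{-1/2}\bigr)^{2/(p-2)}$, and argues that $(p/2)^{2\alpha/(p-2)}$ is decreasing with limit $e^\alpha$ as $p\to 2^+$. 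You instead split on whether $n\le p\,e^{\alpha p}$, and in the harder sub-case reduce everything to the monotonicity of the single-variable function $p\mapsto p^{\alpha-1}e^{-\alpha(p-2)/2}$ on $[2,\infty)$, verified by a derivative sign check. Both routes are valid; yours avoids the limit at the removable singularity $p=2$ (where the exponent $2/(p-2)$ blows up) in favor of a clean monotonicity argument, at the cost of the extra case split. One small notational slip: in the last line $f(2)$ equals $C^\alpha 2^{\alpha-1}\sqrt{pn}$, not $2^{\alpha-1}\sqrt{pn}$; this does not affect the conclusion.
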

\addtocounter{corollary}{-1}
\endgroup
\begin{proof}\label{proof:latala-cor2}
   The requirement on the upper bound disappears since we can always take the supremum over a larger range without decreasing the result.
   We get rid of the term $(n/p)^{1/p}\le n^{1/p}$ with the following argument:
   Assume $n^{1/p}p^\alpha$ dominates the maximum.
   Then $2^\alpha\sqrt{pn}\le n^{1/p}p^\alpha
   \implies n^{1/p}
   \le ((p/2)^\alpha p^{-1/2})^{\frac{2}{p-2}}
   $.
   One can check that
   $p^{\frac{-1}{p-2}}\le 1$ and
   $(p/2)^{\frac{2}{p-2}}$ is decreasing for $p\ge2$ and so $n^{1/p}$ is bounded by
   $\lim_{p\to2} (p/2)^{\frac{2\alpha}{p-2}}
   = e^\alpha$.
\end{proof}

For the lower bound we will also use the following result by Hitczenko, which provides an improvement on Khintchine for Rademacher random variables.

\begin{lemma}[Sharp bound on Rademacher sums~\cite{hitczenko1993domination}]\label{lem:hitczenko}
   Let $\sigma\in\{-1,1\}^n$ be a random Rademacher sequence and let $a\in\R^n$ be an arbitrary real vector, then
   \begin{align}
      \norm{\langle a,\sigma\rangle}p \sim \sum_{i\le p}a_i + \sqrt{p}\big(\sum_{i>p}a_i^2\big)^{1/2}
   \end{align}
\end{lemma}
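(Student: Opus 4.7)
The plan is to prove Hitczenko's two-sided estimate up to universal constants, assuming (without loss of generality, by absorbing signs into $\sigma$ and reordering) that $a_1 \ge a_2 \ge \dots \ge 0$. Fix the splitting $a = h + t$ where $h = (a_1,\ldots,a_p,0,\ldots,0)$ is the ``head'' and $t = (0,\ldots,0,a_{p+1},\ldots,a_n)$ is the ``tail.'' Since $\max\{A,B\} \le A+B \le 2\max\{A,B\}$, it is enough to match the two sides up to a constant.

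For the upper bound I would apply Minkowski and treat each piece with its natural estimate:
\[
    \norm{\langle a,\sigma\rangle}{p} \le \norm{\langle h,\sigma\rangle}{p} + \norm{\langle t,\sigma\rangle}{p}
    \le \sum_{i\le p}a_i + C\sqrt{p}\,\bigl(\textstyle\sum_{i>p}a_i^2\bigr)^{1/2},
\]
where the head bound is pointwise (Rademachers are $\pm 1$) and the tail bound is Khintchine (\cref{lem:khintchine}).

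For the lower bound I would establish each of the two terms separately. For the head piece, let $E$ be the event that $\sigma_i = 1$ for every $i \le p$; then $\Pr[E] = 2^{-p}$, and conditionally on $E$ the head contribution equals $\sum_{i\le p}a_i$ exactly while the tail $\langle t,\sigma\rangle$, being symmetric and independent of $E$, is nonnegative with probability at least $1/2$. Hence
\[
    \Pr\bigl[\langle a,\sigma\rangle \ge \textstyle\sum_{i\le p}a_i\bigr] \ge 2^{-p-1},
\]
so $\norm{\langle a,\sigma\rangle}{p}^p \ge 2^{-p-1}\bigl(\sum_{i\le p}a_i\bigr)^p$ and taking $p$-th roots gives $\norm{\langle a,\sigma\rangle}{p} \ge \tfrac14 \sum_{i\le p}a_i$. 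For the tail piece, the convex symmetric function $x\mapsto \mathbb E_\sigma|x+\langle t,\sigma\rangle|^p$ is minimized at $x=0$, so conditioning on $h$ yields $\norm{\langle a,\sigma\rangle}{p} \ge \norm{\langle t,\sigma\rangle}{p}$; it remains to show $\norm{\langle t,\sigma\rangle}{p} \gtrsim \sqrt{p}\,\|t\|_2$.

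The main obstacle is precisely this sharp Khintchine lower bound for the tail, which is not automatic: e.g. if $t$ had just one nonzero coordinate, one would lose the $\sqrt{p}$ factor. The way out is a case split using the ordering. If $p\|t\|_\infty^2 \le \|t\|_2^2$ (the ``flat'' regime), then the standard Khintchine lower bound applies; one can derive it from Latała's inequality (\cref{lem:latala-sup}) by grouping the tail entries into dyadic level sets of comparable magnitude, where within each level the variables are effectively iid so Latała's sup is attained at $s = 2$ giving $\sqrt{p}\,\|t\|_2$. If on the other hand $p\|t\|_\infty^2 > \|t\|_2^2$, then $\|t\|_\infty = a_{p+1} \le a_p$ so
\[
    \sqrt{p}\,\|t\|_2 \;<\; p\,a_{p+1} \;\le\; p\,a_p \;\le\; \sum_{i\le p}a_i,
\]
and the head lower bound already dominates the target. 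Combining the two regimes completes the proof.
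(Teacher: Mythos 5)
The paper itself gives no proof of this lemma: it is quoted verbatim from Hitczenko~\cite{hitczenko1993domination} as a black box, so there is no ``paper proof'' to compare against, and what you wrote has to stand on its own. Most of your outline is sound and in fact follows the same architecture as Hitczenko's original argument: the split $a=h+t$ into the $p$ largest coordinates and the rest, the Minkowski-plus-Khintchine upper bound, the $2^{-p-1}$ conditioning argument for the head lower bound, the reduction $\norm{\langle a,\sigma\rangle}{p}\ge\norm{\langle t,\sigma\rangle}{p}$ (a convex even function of the head is minimized at $0$), and the observation that when $p\norm{t}{\infty}^2>\norm{t}{2}^2$ the head term already swallows $\sqrt{p}\norm{t}{2}$ are all correct.

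The gap is in the remaining case $p\norm{t}{\infty}^2\le\norm{t}{2}^2$, where you need $\norm{\langle t,\sigma\rangle}{p}\gtrsim\sqrt{p}\norm{t}{2}$, and this is precisely the heart of the theorem; the rest of the proof is bookkeeping. Your proposed route — cut the tail into dyadic level sets $L_k$, argue each level is ``effectively iid'' so Latała (\cref{lem:latala-sup}) gives $\norm{T_k}{p}\sim\sqrt{p}\,\norm{t_{L_k}}{2}$, then conclude — does not close: the only independence-and-symmetry tool available to recombine levels is $\norm{T}{p}\ge\max_k\norm{T_k}{p}$, and a maximum of per-level masses can sit a factor $\sqrt{\#\text{levels}}$ below the aggregate $\norm{t}{2}$ (take, say, $\Theta(\log n)$ dyadic levels each carrying an equal share of $\norm{t}{2}^2$; this is still in the flat regime). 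Latała as stated in \cref{lem:latala-sup} is also only for iid summands, so even the within-level step needs an explicit contraction argument rather than ``effectively iid.''

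If you want a self-contained proof of the flat-tail estimate, the standard route is different from what you sketched. One option: partition the tail indices into $p$ buckets of roughly equal $\ell_2$ mass $\approx\norm{t}{2}^2/p$ (possible exactly because $\norm{t}{\infty}^2\le\norm{t}{2}^2/p$), use Paley--Zygmund (\cref{lem:pz}) together with the Khintchine fourth-moment bound to show each bucket's sum exceeds $\norm{t}{2}/\sqrt{2p}$ in absolute value with constant probability, apply Chernoff across the $p$ buckets, and finally condition on the sign pattern of the bucket sums exactly as in your head argument to realise a value $\gtrsim\sqrt{p}\norm{t}{2}$ on an event of probability $e^{-O(p)}$. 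Another option is an exponential-moment argument: with $\lambda=\sqrt{p}/\norm{t}{2}$ one has $\Ep{e^{\lambda\langle t,\sigma\rangle}}=\prod_i\cosh(\lambda a_i)\ge e^{cp}$ (each $|\lambda a_i|\le1$), while Hoeffding gives $\Ep{e^{2\lambda\langle t,\sigma\rangle}}\le e^{Cp}$, and a Cauchy--Schwarz split then yields $\Prp{\langle t,\sigma\rangle>c'\sqrt{p}\norm{t}{2}}\ge e^{-C'p}$. Either of these is real work; it is not a corollary of Latała, and it is where a complete proof must spend its effort.
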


Finally the lower bound will use the Paley-Zygmund inequality (also known as the one-sided Chebyshev inequality):
\begin{lemma}[Paley-Zygmund]\label{lem:pz}
   Let $X\ge 0$ be a real random variable with finite variance, and let $\theta\in[0,1]$, then
   \begin{align}
      \Prp{X >\theta \Ep{X}} \ge (1-\theta)^2\frac{\Ep{X}^2}{\Ep{X^2}}.
   \end{align}
\end{lemma}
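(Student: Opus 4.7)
The plan is to prove the Paley--Zygmund inequality by the classical two-step argument: decompose $\Ep{X}$ according to whether $X$ exceeds the threshold $\theta\Ep{X}$, then use Cauchy--Schwarz on the tail contribution to introduce the probability we want to bound.

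First I would write, using nonnegativity of $X$,
\begin{align}
   \Ep{X} = \Ep{X\cdot\mathbf{1}_{\{X\le \theta\Ep{X}\}}} + \Ep{X\cdot\mathbf{1}_{\{X> \theta\Ep{X}\}}}.
\end{align}
On the event $\{X\le\theta\Ep{X}\}$ the integrand is at most $\theta\Ep{X}$, so the first term is at most $\theta\Ep{X}$. Subtracting gives
\begin{align}
   (1-\theta)\Ep{X} \le \Ep{X\cdot\mathbf{1}_{\{X> \theta\Ep{X}\}}}.
\end{align}

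Second, I would bound the right-hand side by Cauchy--Schwarz, treating it as $\Ep{X\cdot Y}$ with $Y=\mathbf{1}_{\{X>\theta\Ep{X}\}}$ satisfying $\Ep{Y^2}=\Prp{X>\theta\Ep{X}}$:
\begin{align}
   \Ep{X\cdot\mathbf{1}_{\{X> \theta\Ep{X}\}}} \le \Ep{X^2}^{1/2}\,\Prp{X>\theta\Ep{X}}^{1/2}.
\end{align}
Combining the two displays and squaring yields $(1-\theta)^2\Ep{X}^2 \le \Ep{X^2}\,\Prp{X>\theta\Ep{X}}$, which rearranges to the stated bound (the case $\Ep{X^2}=0$ forces $X=0$ a.s., and both sides vanish). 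There is no real obstacle; the only points requiring care are that the decomposition of $\Ep{X}$ into two nonnegative pieces relies on $X\ge 0$, and that the inequality $X\le\theta\Ep{X}$ on the complement event is used in the correct direction.
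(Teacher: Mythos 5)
The paper states Paley--Zygmund as a classical, well-known fact and does not include a proof, so there is no ``paper's approach'' to compare against. Your argument is the standard proof and it is correct: the decomposition $\Ep{X}=\Ep{X\mathbf{1}_{\{X\le\theta\Ep{X}\}}}+\Ep{X\mathbf{1}_{\{X>\theta\Ep{X}\}}}$, bounding the first term by $\theta\Ep{X}$ (which uses both $X\ge 0$, so that $\Ep{X}\ge 0$ and the indicator cut makes the threshold nonnegative, and the fact that the probability of the complementary event is at most $1$), and Cauchy--Schwarz on the tail term with $\mathbf{1}^2=\mathbf{1}$. The only slightly loose spot is the degenerate-case remark: if $\Ep{X^2}=0$ the right-hand side is formally $0/0$ rather than literally ``both sides vanish,'' but this is a non-issue since the lemma is vacuous in that case and one can simply restrict to $\Ep{X^2}>0$.
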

A classical strategy when using Paley-Zygmund is to prove $\Ep{X}\ge2\eps$ and $\Ep{X}^2/\Ep{X^2}>4\delta$ for some $\eps,\delta>0$, and then take $\theta=1/2$ to give
$\Prp{X > \eps} > \delta$.

\subsection{Lower Bound for Sub-Gaussians}\label{appendix:lowerbound-subgaussian}

The following lower bound considers the sketching matrix consisting of the direct composition of matrices with Rademacher entries.
Note however that the assumptions on Rademachers are only used to show that the $p$-norm of a single row with a vector is $\sim\sqrt{p}$.
For this reason the same lower bound hold if the Rademacher entries are substituted for say Gaussians.

\begin{theorem}[Lower bound]\label{thm:lower}
   For some constants $C_1,C_2,B>0$, let $d,m,c\ge1$ be integers, let $\eps\in[0,1]$ and $\delta\in[0,1/16]$.
   Further assume that $d\ge\log1/\delta\ge c/B$.
   Then the following holds.

   Let $M^{(1)}, \dots, M^{(c)} \in \R^{m\times d}$ be matrices with all independent Rademacher entries
   and let $M=\tfrac1{\sqrt m} M^{(1)}\bullet\dots\bullet M^{(c)}$.
   Then there exists some unit vector $y\in\R^{d^c}$ such that if
   \begin{align}
      m < C_1 \max\left\{
            3^c \eps^{-2}\frac{\log1/\delta}c
         ,\, \eps^{-1}\left(\frac{C_2\log1/\delta}c\right)^{c}\right\}
            \quad\text{then}\quad \Pr\left[\abs{\norm{My}2^2-1}>\eps\right]>\delta.
   \end{align}
\end{theorem}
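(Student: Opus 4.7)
\medskip

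\noindent\textbf{Proof plan.} The natural bad vector is the tensored uniform one: take $x = d^{-1/2}(1,1,\dots,1)^T \in \R^d$ and set $y = x^{\otimes c}$, which is a unit vector in $\R^{d^c}$. With this choice the Face‑splitting structure gives
\begin{align}
   \norm{My}{2}^2 - 1 = \tfrac{1}{m}\sum_{i \in [m]}(W_i - 1),
   \qquad
   W_i := \prod_{j \in [c]}\bigl(T^{(j)}_i \cdot x\bigr)^{\!2},
\end{align}
where each $T^{(j)}_i \cdot x$ is an independent normalized Rademacher sum, $\Ep{W_i}=1$, and the $W_i$ are iid across $i$. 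So the task reduces to showing anti‑concentration for a mean of iid non‑negative variables with heavy upper tail.

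The next step is to compute the $p$-norms of $W_i$ and of the sum. Applying the sharp Hitczenko bound (\cref{lem:hitczenko}) to $T^{(j)}_i \cdot x$, whose coordinate vector is the uniform $d^{-1/2}\mathbf{1}$, gives $\norm{T^{(j)}_i \cdot x}{p} \sim \sqrt{p}$ in the range $p\le d$ guaranteed by the hypothesis $d \ge \log1/\delta$. Squaring and using independence across $j$ yields $\norm{W_i}{p} \sim p^c$ and, once $p$ is large enough that the mean becomes negligible, $\norm{W_i - 1}{p} \sim p^c$ as well. Then I would apply \cref{cor:latala-cor} with $\alpha = c$ and $C = O(1)$ to obtain
\begin{align}
   \bigl\|\tfrac{1}{m}\textstyle\sum_i (W_i - 1)\bigr\|_p
   \sim \max\Bigl\{ O(1)^c \sqrt{p/m},\; O(1)^c\, (m/p)^{1/p}\, p^c / m\Bigr\}.
\end{align}
Crucially the leading constant in the $\sqrt{pm}$ term tracks $\sqrt{E[Z^4]^c}$, which is $\sqrt{3^c}$ because a well‑spread normalized Rademacher sum has fourth moment tending to $3$; this is precisely where the $3^c$ factor in the theorem originates.

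Finally I would use Paley–Zygmund (\cref{lem:pz}) applied to $X = \bigl(\sum_i (W_i-1)\bigr)^{2k}$ for a suitable even $2k$, converting the $p$-moment lower bound into a probability lower bound of order $(\|X\|_1/\|X\|_2)^{2} \sim e^{-O(ck)}$. The two regimes of the theorem correspond to two choices of $k$: setting $2k \sim \log1/\delta$ (so that the resulting probability is at least $\delta$) and balancing against the Gaussian term $\sqrt{3^c p/m}$ gives the first lower bound $m \gtrsim 3^c \eps^{-2}(\log1/\delta)/c$; setting $2k \sim \log1/\delta$ together with the heavy‑tail term $p^c/m$ produces the second lower bound $m \gtrsim \eps^{-1}(C_2\log1/\delta/c)^c$. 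Taking the worse of the two regimes finishes the argument.

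\medskip
\noindent\textbf{Main obstacle.} The delicate step is the Paley–Zygmund calculation: for it to yield probability at least $\delta$ one must control the ratio $\norm{\sum_i(W_i-1)}{2p}/\norm{\sum_i(W_i-1)}{p}$ by $e^{O(c)}$ uniformly in the chosen $p$, so that the factor $\Ep{X}^2/\Ep{X^2}$ does not decay faster than the target $\delta$. This is where the sharp Hitczenko bound (rather than ordinary Khintchine) is needed, since losing even constant factors per level of the product $\prod_j Z_j^2$ would blow up to $c^{\Theta(c)}$ and ruin the $3^c$ and $(C_2/c)^c$ constants. The hypothesis $d \ge \log1/\delta$ is used precisely to keep Hitczenko's bound in its Gaussian regime throughout the range of moments we invoke, and the lower bound $\log1/\delta \ge c/B$ is needed so that the $p^c$ term in \cref{cor:latala-cor} dominates for the second regime.
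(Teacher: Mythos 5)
Your proposal follows exactly the same route as the paper's proof: the same bad vector $y=(d^{-1/2}\mathbf 1)^{\otimes c}$, the same reduction to a mean of iid products $\prod_j Z_j^2-1$, Hitczenko's sharp bound to get $\norm{W_i}{p}\sim p^c$ with the explicit fourth moment pinning the $3^c$ factor, \cref{cor:latala-cor} for the two-sided moment estimate, and Paley--Zygmund to convert a $p$-moment lower bound into a tail probability. One small slip worth fixing: the moment index must be chosen as $p\asymp\log(1/\delta)/c$, not $p\asymp\log(1/\delta)$ --- otherwise the Paley--Zygmund factor $\bigl(\norm{\cdot}{p}/\norm{\cdot}{2p}\bigr)^{2p}\gtrsim e^{-O(cp)}$ lands at $\delta^{\Theta(c)}$ instead of $\delta$, and balancing $\sqrt{3^c p/m}$ and $p^c/m$ against $\eps$ would not reproduce the $(\log1/\delta)/c$ and $\bigl(C_2\log(1/\delta)/c\bigr)^c$ shapes in the theorem; your own ``main obstacle'' paragraph already states the ratio constraint that forces this scaling, so this reads as a notational slip rather than a conceptual gap.
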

\begin{proof}
   Let $y=[1,\dots,1]^T/\sqrt{d}\in\R^d$ and let $x=y^{\otimes c}$.
   We have
   \begin{align}
      \norm{My}2^2-1
      = \frac1m\left\|M^{(1)}x \circ \dots \circ M^{(c)}x\right\|_2^2-1
      = \frac1m\sum_{j\in[m]} \big(\prod_{i\in[c]} Z_{i,j}^2\big)-1
   \end{align}
   where each $Z_{i,j}=\sum_{k\in[d]}M^{(i)}_{j,k}/\sqrt{d}$ are independent averages of $d$ independent Rademacher random variables.
   By~\cref{lem:hitczenko} we have $\norm{Z_{i,j}}p \sim \min\{\sqrt{p},\sqrt{d}\}$ which is $\sqrt{p}$ by the assumption $d\ge\log1/\delta$ as long as $p\le\log1/\delta$.
   By the expanding $Z_{i,j}^4$ into monomials and linearity of expectation we get $\|Z_{i,j}\|_4 = \frac1{\sqrt{d}}(d+3d(d-1))^{1/4} = (3-2/d)^{1/4}$.

   Now define $X_j = \prod_{i\in[c]}Z_{i,j}^2-1$, then $EX_j=0$
   and $\norm{X_j}p \ge \norm{\prod_{i\in[c]}Z_{i,j}^2}p-1 = \norm{Z_{i,j}}{2p}^{2c}-1 \ge K^c p^c$ for some $K$, assuming $p\ge 2$.
   In particular, $\norm{X_j}2\ge \norm{Z_{i,j}}{4}^{2c}-1 = (3-2/d)^{c/2}-1
   \sim 3^{c/2}$ by the assumption $d\ge c\ge1$.

   We have $\norm{\norm{My}2^2-1}p = \frac1m\norm{\sum_{j\in[m]}X_m}p$ is a sum of iid. random variables,
   so we can use \cref{cor:latala-cor} to show
   \begin{align}
            K_3\max\left\{\sqrt{3^c p/m}, (m/p)^{1/p}K_1^c p^c/m\right\}
            &\lesssim \norm{\norm{My}2^2-1}p
          \\&\lesssim K_4\max\left\{\sqrt{3^c p/m}, (m/p)^{1/p}K_2^c p^c/m\right\}
            \label{eq:upper-lower}
   \end{align}
   for some universal constants $K_1,K_2,K_3,K_4>0$.

   Assume now that
   $m < \max\left\{
      AK_3^23^c \eps^{-2}\frac{\log1/\delta}c,
      \frac{K_3}4\eps^{-1}\left(4AK_1\frac{\log1/\delta}c\right)^{c}
   \right\}$ as in the theorem.
   We take $p=4A\frac{\log1/\delta}{c}$ for some constant $A$ to be determined.
   We want to show $\norm{\norm{My}2^2-1}p \ge 2\eps$.
   For this we split into two cases depending on which term of $m<\max\{(1),(2)\}$ dominates.
   If $(1) \ge (2)$ we pick the first lower bound in \cref{eq:upper-lower} and get
   $\norm{\norm{My}2^2-1}p \ge K_3\sqrt{3^c p/m} \ge K_3 \sqrt{\frac{4\eps^2}{K_3^2}} = 2\eps$.
   Otherwise, if $(2)\ge(1)$, we pick the other lower bound and also get:
   \begin{align}
      \norm{\norm{My}2^2-1}p \ge
      K_3 (m/p)^{1/p}\frac{K_1^c p^c}m
      \ge
      \frac{K_3}2 \frac{K_1^c \left(4A\frac{\log1/\delta}{c}\right)^c}{\frac{K_3}4\eps^{-1}\left(4AK_1\frac{\log1/\delta}c\right)^{c}}
      = 2\eps,
   \end{align}
   where we used $(m/p)^{1/p}\ge e^{-1/(em)}\ge1/2$ for $m\ge 1$.
   Plugging into Paley-Zygmund (\cref{lem:pz}) we have
   \begin{align}
      \Pr\left[\abs{\norm{My}2^2-1}\ge\eps\right]
      &\ge \Pr\left[\abs{\norm{My}2^2-1}^p\ge\norm{\norm{My}2^2-1}p^p 2^{-p}\right]
    \\&\ge\frac14 \left(\frac{\norm{\norm{My}2^2-1}p}{\norm{\norm{My}2^2-1}{2p}}\right)^{2p},
       \label{eq:pz}
   \end{align}
   where we used that $p\ge 1$ so $(1-2^{-p})^2\ge1/4$.

   There are again two cases depending on which term of the upper bound in \cref{eq:upper-lower} dominates.
   If $\sqrt{3^cp/m} \ge(m/p)^{1/p}K_2^c p^c/m$ we have using the first lower bound that
   $\frac{\norm{\norm{My}2^2-1}p}{\norm{\norm{My}2^2-1}{2p}} \ge \frac{K_3}{\sqrt{2}K_4}$.
   For the alternative case, $(m/p)^{1/p}K_2^c p^c/m \ge \sqrt{3^cp/m}$, we have
   \begin{align}
      \frac{\norm{\norm{My}2^2-1}p}{\norm{\norm{My}2^2-1}{2p}}
      \ge \frac{K_3}{\sqrt2 K_4} \frac{(m/p)^{1/p}}{(m/2p)^{1/2p}}\left(\frac{K_1}{2 K_2}\right)^c
      \ge \frac{K_3}{2 K_4}\left(\frac{K_1}{2 K_2}\right)^c
   \end{align}
   where $\frac{(m/p)^{1/p}}{(m/2p)^{1/2p}}\ge e^{-1/(4em)}\ge 1/\sqrt2$ for $m\ge 1$.

   Comparing with \eqref{eq:pz} we see that it suffices to take $A \le \min\{\frac1{\log 2K_4/K_3}, \frac1{\log 2K_2/K_1}\}/32$.
   This choice also ensures that $1\le p\le \log1/\delta$ as we promised.
   Note that we may assume in \cref{eq:upper-lower} that $K_3\le K_4$ and $K_1\le K_2$.
   We then finally have
   \begin{align}
      \frac14\left(\frac{K_3}{\sqrt{2} K_4}\right)^{2p}
      \ge \frac14 \delta^{1/(4c)}
      \quad\text{and}\quad
      \frac14\left(\frac{K_3}{2K_4}\left(\frac{K_1}{2 K_2}\right)^c\right)^{2p}
      \ge \frac14 \delta^{1/(4c)+1/4},
   \end{align}
   which are both $>\delta$ for $c\ge1$ and $\delta<1/16$.
\end{proof}

\end{document}